\newlength{\dinwidth}
\newlength{\dinmargin}
\newcommand{\bcc}{\color{black}}
\newcommand{\nhil}{\mathfrak{H}}
\newcommand{\HH}{P^0}
\newcommand{\pc}{\color{black}}
\newcommand{\gc}{\color{black}}
\newcommand{\bc}{\color{black}}
\DeclareMathOperator{\Stab}{Stab}
\DeclareMathOperator{\Ind}{Ind}
\newcommand{\EE}{\mrm{E}}
\def\Undertilde#1{\mathord{\vtop{\ialign{##\crcr
$\hfil\displaystyle{#1}\hfil$\crcr\noalign{\kern1.5pt\nointerlineskip}
$\hfil\widetilde{}\hfil$\crcr\noalign{\kern1.5pt}}}}}
\def\ti{\tilde}
\def\lan{\langle}
\def\ran{\rangle}
\def\cP{{\mathcal P}}
\def\tPoi{\widetilde{\cal P}_+^\uparrow}
\def\Ga{\Gamma}
\def\La{\Lambda}
\def\si{\sigma}
\def\om{\omega}
\def\Om{\Omega}
\def\setminus{\smallsetminus}
\def\eins{{\mathbf 1}}
\def\A{{\cal A}}
\def\B{{\cal B}}
\def\E{{\cal E}}
\def\F{{\cal F}}
\def\M{{\cal M}}
\def\N{{\cal N}}
\def\L{{\cal L}}
\def\cL{{\cal L}}
\def\O{{\cal O}}
\def\H{{\cal H}}
\def\K{{\cal K}}
\def\S{{\cal S}}
\def\V{{\cal V}}
\def\cU{{\cal U}}
\def\ad{{\operatorname{ad}}}
\def\S2{S^{1(2)}}
\def\Poi{{\cal P_+^\uparrow}}
\def\setminus{\smallsetminus}
\def\Z{{\mathbb Z}}
\def\RR{{\mathbb R}}
\def\CC{{\mathbb C}}
\def\NN{{\mathbb N}}
\def\ZZ{{\mathbb Z}}
\def\sl2{{{\rm SL}(2,\RR)}}
\def\psl2{{{\rm PSL}(2,\RR)}}
\def\u1{{{\rm V}(1)}}
\def\su2{{{\rm SV}(2)}}
\def\so3{{{\rm SO}(3)}}
\def\W{{\mathcal W}}
\def\SO{{\mathrm{SO}}}
\def\ind{{\mathrm{Ind}}}
\def\SLC{{\mathrm{SL}(2,\CC)}}
\newcommand{\y}{\!\!\!}
\newcommand{\beps}{\bar\eps}
\newcommand{\mrm}{\mathrm}
\renewcommand{\mathbf}{\boldsymbol}
\newcommand{\Span}{\mathrm{Span}}
\newcommand{\bv}{\mathbf{v}}
\newcommand{\wtla}{\la}
\newcommand{\wtU}{U}
\newcommand{\tout}{\overset{\out}{\times}}
\newcommand{\bn}{\mathbf{n}}
\newcommand{\bP}{\mathbf{P}}
\newcommand{\loc}{\mathrm{loc}}
\newcommand{\bx}{\mathbf{x}}
\newcommand{\mcF}{\mathcal A}
\newcommand{\mcP}{\mathcal P}
\newcommand{\mcL}{\mathcal L}
\newcommand{\mcK}{\mathcal K}
\newcommand{\out}{\mathrm{out}}
\renewcommand{\i}{\mathrm i}
\newcommand{\wt}{\widetilde}
\newcommand{\ga}{\gamma}
\newcommand{\be}{\beta}
\newcommand{\pa}{\partial}
\newcommand{\ov}{\overline}
\newcommand{\eps}{\varepsilon}
\newcommand{\De}{\Delta}
\newcommand{\nin}{\noindent}
\newcommand{\ph}{\phantom}
\newcommand{\nat}{\mathbb{N}}
\newcommand{\hil}{\mathcal{H}}
\newcommand{\mco}{\mathcal{O}}
\newcommand{\supp}{\mathrm{supp}}
\newcommand{\fr}[2]{\frac{#1}{#2}}
\newcommand{\al}{\alpha}
\newcommand{\real}{\mathbb{R}}
\newcommand{\complex}{\mathbb{C}}
\newcommand{\la}{\lambda}
\newcommand{\non}{\nonumber}
\newtheorem{theoreme}{Theorem } [section]
\newtheorem{proposition}[theoreme]{Proposition}
\newtheorem{lemma}[theoreme]{Lemma}
\newtheorem{definition}[theoreme]{Definition}
\newtheorem{corollary}[theoreme]{Corollary}
\newtheorem{remark}[theoreme]{Remark}
\newtheorem{example}[theoreme]{Example}
\newtheorem{criterion}[theoreme]{Criterion}
\newtheorem{conjecture}{Conjecture}
\newtheorem{assumption}{Assumption}
\newcommand{\bea}{\begin{assumption}}
	\newcommand{\eea}{\end{assumption}}
\newcommand{\beco}{\begin{conjecture} }
	\newcommand{\eeco}{\end{conjecture} }
\newcommand{\beq}{\begin{equation}}
	\newcommand{\eeq}{\end{equation}}
\newcommand{\beqa}{\begin{eqnarray}}
	\newcommand{\eeqa}{\end{eqnarray}}
\newcommand{\ben}{\begin{arabicenumerate}}
	\newcommand{\een}{\end{arabicenumerate}}
\newcommand{\bex}{\begin{example}}
	\newcommand{\eex}{\end{example}}
\newcommand{\ber}{\begin{remark}}
	\newcommand{\eer}{\end{remark}}
\newcommand{\bec}{\begin{corollary}}
	\newcommand{\eec}{\end{corollary}}
\newcommand{\bep}{\begin{proposition}}
	\newcommand{\eep}{\end{proposition}}
\newcommand{\becr}{\begin{criterion}}
	\newcommand{\eecr}{\end{criterion}}
\begin{document}
\title{ {\bc The} Bisognano-Wichmann property for asymptotically complete massless  QFT} 

\author{
{\bf Wojciech Dybalski}\\
Zentrum Mathematik, Technische Universit\"at M\"unchen,\\
E-mail: {\tt dybalski@ma.tum.de}
\and
{\bf Vincenzo Morinelli}\\
Dipartimento di Matematica, Universit\`a di Roma ``Tor
Vergata''\\ 
E-mail: {\tt morinell@mat.uniroma2.it}}

\date{}

\maketitle

\begin{abstract} We prove the Bisognano-Wichmann  property for asymptotically complete Haag-Kastler 
theories of massless particles.  These particles should either be scalar or appear as a direct sum of two opposite integer 
helicities, thus, e.g., photons are covered. The argument relies on a \emph{modularity condition} formulated recently by
one of us (VM) and on the Buchholz' scattering theory of massless particles.

\end{abstract}

\section{Introduction}
\setcounter{equation}{0}

For any von Neumann algebra with a faithful state Tomita-Takesaki theory gives a natural dynamics constructed using the complex structure of the algebra.  
It was shown by Bisognano and Wichmann that for the algebra of the Wightman fields localised in a spacelike wedge  this latter dynamics coincides with the Lorentz boosts in the direction of this wedge \cite{BW76}. Furthermore, in  the presence of the Bisognano-Wichmann (B-W) property  the full global symmetry of the model is contained in the modular structure of the net reducing the dichotomy between symmetries and algebras to an inclusion \cite{BGL95}.  The B-W property is also important for many other reasons,
ranging from the intrinsic meaning of the CPT symmetry \cite{GL} to a  construction of interacting models \cite{Le08}  and to entanglement theory \cite{Witt}. While its formulation is most natural in the algebraic
(Haag-Kastler) setting, and it is known to hold in all the `physical' examples, its general proof in this framework is missing to date. The reason is the broadness of the Haag-Kastler setting which admits also non-physical counterexamples to the B-W property.   {\bc For} example, when an infinite family of massive spinorial or infinite spin particles occurs  \cite{LMR16,Mor}. Thus it is important to find natural assumptions which exclude such pathological cases.

For massless theories the assumption of  global conformal invariance implies the B-W property as shown in \cite{BGL93}. A search for an algebraic sufficient condition for the B-W property, not relying on conformal covariance, was started in \cite{Mor, Mo16} at the level of one particle nets. Here a criterion on the covariant representation called the 
\emph{modularity condition} was shown to give  the B-W property of the one particle net.  For massive theories which are asymptotically complete, 
the paper by Mund \cite{Mu01} gives the B-W property. This paper exploits a result of Buchholz and Epstein \cite{BE85}  in order to study geometrically the analytic extension of one parameter boosts and identify it with the  associated modular operator. This allows to verify the Bisognano-Wichmann property on the one particle subspace and conclude it for the full interacting net by asymptotic completeness and wedge localization of {\bc the} modular operator. Unfortunately this method  does not apply to massless theories as the argument of Buchholz and Epstein requires a mass gap.

In the present paper we prove the B-W property for massless bosonic theories which are asymptotically complete,  by combining  some ideas contained in the works mentioned above. First, we identify  the modular operator and the boost generator associated to the same wedge  at the single-particle level. To this end, we verify the 
modularity  condition introduced in  \cite{Mo16, Mor}  and  thus avoid the use of the Buchholz-Epstein result. 
 Our argument requires that the representation of the Poincar\'e group is either scalar or a direct sum of two 
 representations with opposite integer helicities.
 Thereby  we show that  the modularity condition applies to a large family of massless representations, including higher  helicity. 
Next, we show that the B-W property holds on the entire Hilbert space by using scattering theory and  the assumption of asymptotic completeness. We recall that scattering theory for massless  bosons was developed in \cite{Bu77} and various
simplifications  have been found meanwhile. In the present paper we use the variant from \cite{AD17} which is based on novel ergodic theorem arguments and on uniform energy bounds on asymptotic fields from \cite{Bu90, He14}.
 Our results 
extend the range of validity of the B-W property and reconfirm its status as a generic property of physically reasonable models.  

Our paper is organized as follows: In   Sect.~\ref{2} we state our main result after  the necessary preparations.
In Sect.~\ref{Preliminaries-section} we recall some relevant facts from scattering theory of massless particles, the theory of standard subspaces
and one-particle nets, {\bc and representations of the Poincar\'e group}. In Sect.~\ref{modularity-sect}  the modularity condition is stated and verified for one-particle massless  nets with arbitrary integer spin.
In Sect.~\ref{last-section} the result is generalised to an arbitrary number of particles using scattering theory and the assumption of asymptotic completeness. 

\vspace{0.2cm}

\noindent{\bf Acknowledgment:}  W.D. would like to thank Sabina Alazzawi who was involved at early stages of this project.
 Both  authors thank  Maximilian Duell for interesting discussions. 
W.D. was supported by the {\bc Deutsche Forschungsgemeinschaft} (DFG) within the Emmy Noether grants DY107/2-1 {\bcc and DY107/2-2}.
V.M.  Titolare di un Assegno di Ricerca
dell'Istituto Nazionale di Alta Matematica (INdAM fellowship), supported in part by the ERC Advanced Grant 669240 QUEST ``Quantum Algebraic Structures and Models'', MIUR FARE R16X5RB55W QUEST-NET, GNAMPA-INdAM,  acknowledge  the MIUR Excellence Department Project awarded to the Department of Mathematics, University of Rome Tor Vergata, CUP E83C18000100006.
\vspace{0.2cm}

\section{Framework and results}\label{2}
\setcounter{equation}{0}
\subsection{Local nets and the Bisognano-Wichmann property}\label{local-nets}
Let $\real^{1+3}$ be the  Minkowski spacetime.
We denote by $\mcK$ the family
of double cones $\mco\subset   \real^{1+3}$ ordered by inclusion and write $\mco'$ for the causal complement of $\mco$ in 
$\real^{1+3}$. Furthermore, let $\wt\mcP_+^\uparrow=\real^4\rtimes \SLC$ denote the covering group of the proper ortochronous Poincar\'e group 
$\mcP_+^\uparrow$.  We {\bc denote} with $\Lambda:\widetilde\mcP_+^\uparrow\rightarrow \mcP_+^\uparrow$ the covering map.

\begin{definition}\label{HK}  {\pc Let $\nhil$ be a fixed Hilbert space.} {\bcc W}e say that $\mcK\ni\mco\mapsto \mcF(\mco)\subset B(\nhil)$ is a local net of von Neumann algebras in a vacuum representation if
the following properties hold:
\begin{enumerate}
\item \textbf{Isotony:} $\mcF(\mco_1)\subset \mcF(\mco_2)$ for $\mco_{1}\subset \mco_2$.

\item \textbf{Poincar\'e covariance:} there is a continuous unitary representation $\wtU$ of  $\wt\mcP_+^\uparrow$  such that
\beqa
\wtU(\wtla)\mcF(\mco)\wtU(\wtla)^*=\mcF(\la \mco)\quad  \mathrm{for}\quad 
\wtla\in \wt\mcP_+^\uparrow.
\eeqa
\item \textbf{Positivity of the energy:} the joint spectrum of translations in $U$ is contained in the forward lightcone $V_+=\{ p\in\RR^{1+3}:  p^0\geq 0, \, p^2=(p,p)\geq 0\}$.

\item \textbf{Cyclicity of the vacuum:} there is a unique (up to a phase) unit vector $\Om\in\nhil$, the physical vacuum state, which is $U$-invariant and cyclic for the global algebra {\pc $\mcF:=\overline{\bigcup_{\O\subset \RR^{1+3}}\A(\O)}^{\|\,\cdot\,\|}$ of the net.}

\item \textbf{Locality:} $\, \mcF(\mco_{1})\subseteq \mcF(\mco_{2})'$  for $\mco_{1}\subset \mco_{2}'$.

\end{enumerate}
A local net of von Neumann algebras will be denoted by $(\mcF,U, \Om)$.
\end{definition}
{\pc For future reference, we set for any region $\mathcal U\subset \real^{1+3}$
\beqa
\mcF_{\loc}(\mathcal{U}):=\bigcup_{\mco\subset \mathcal U} \mcF(\mco) 
\quad \textrm{ and } \quad \mcF(\mathcal{U}):=\mcF_{\loc}(\mathcal U)''\label{algebra-for-large-regions}
\eeqa
and we refer to $\mcF_{\loc}:=\mcF_{\loc}(\real^{1+3})$ as the algebra of strictly local operators. 
}

 In order to introduce the B-W property, we need some geometric preliminaries: 
a {\bf wedge shaped region} $W\subset\RR^{1+3}$ is an open region of the form  $gW_1$ where $g\in\Poi$ and $W_1=\{x\in\RR^{1+3}: |x_0|<x_1\}$.    The set of wedges is  denoted by $\W$.  {\pc  Note that if  $W\in\W$,  then $W'\in\W$,  
 where prime denotes here the spacelike complement. }
It is possible to associate to any wedge a one-parameter group of boosts  $\La_W$ fixing the wedge $W$ by  the following
formula   for $W_1$ 
\begin{equation}\label{boost}
 \mathbb{R}\ni t\rightarrow\Lambda_{W_{1}}(t):=\begin{pmatrix}
\cosh(t)&\sinh( t)& 0& 0\\
\sinh( t)&\cosh(t)& 0& 0\\
0& 0& 1& 0\\
0& 0& 0& 1
\end{pmatrix}
\end{equation}
and the covariant action of the Poincar\'e group on the set of wedges.  

{\pc 
 We  call 
$W_\alpha=\{x\in\RR^{1+3}: |x_0|<x_\alpha\}$, $\alpha=1,2,3$, the wedge in the $x_\alpha$ direction  and with $R_\alpha$, $\Lambda_\alpha$ are the one-parameter groups  of rotations and boosts, respectively, 
fixing $W_\alpha$.  Their unique one parameter group lifts to $\SLC$ are  denoted $r_\alpha$ and $\lambda_\alpha$. In general $\lambda_W$ will denote the one parameter group lift of $\Lambda_W$. Note that $\lambda_\alpha(t)=e^{\frac t2\,\sigma_\alpha }$ and $r_\alpha(\theta)=e^{i\frac\theta 2 \sigma_\alpha }$ where $t,\theta\in\RR$ and $\sigma_\alpha$ are the Pauli matrices. In particular one has that $r_\alpha(2\pi)=-I=: r(2\pi)$.}

For any $W\in \W$ we define $\mcF(W)$ according to (\ref{algebra-for-large-regions}).
It is well known that the vacuum is cyclic and separating for $\mcF(W)$ thus the Tomita-Takesaki theory gives the corresponding modular
evolution $\real\ni t\mapsto \De_{W}^{it}$.

\begin{definition} 
We say that a local net $(\mcF,U, \Om)$ satisfies the \textbf{ Bisognano-Wichmann property}  if for all  $W\in\W$, $t\in\real$, 
\beqa
 U(\lambda_W(2\pi t))=\Delta^{-it}_{W}. \non
\eeqa
\end{definition}

\subsection{Massless Wigner particles and asymptotic nets} \label{massless-section}

 Scattering theory of massless Wigner particles was  developed by Buchholz \cite{Bu77, Bu75}, both in the bosonic and fermionic case. Recently   the bosonic case was  simplified in \cite{AD17}.  We collect below the main results in this subject  following \cite{AD17}.
We first introduce the single-particle subspace.
\begin{definition}\label{single-particle-definition} A local net $(\mcF,U, \Om)$ describes massless Wigner particles 
 if $\nhil$ contains a subspace $\nhil^{(1)}\neq \{0\}$ s.t.
 \beqa
\mrm{Ran}\mathbf{1}_{\{0\}}( M)=\complex \Om \oplus \nhil^{(1)}, 
 \eeqa 
where $\mathbf{1}_{\{0\}}(M)$ denotes the spectral projection of the mass operator  $M:=\sqrt{(P^0)^2-\pmb{P}^2}$ corresponding
to the  eigenvalue zero. 
 We say that these particles have helicities $h_1,h_2,h_3\ldots\in \mathbb{Z}$ if $U  |_{\nhil^{(1)}}$ is {\bc a finite or infinite multiple of} the direct sum of the
corresponding zero mass  representations.   
\end{definition}

It is well known that to any local theory containing massless particles one can associate an asymptotic (free) theory \cite{Bu77}.
We outline now this construction following \cite{AD17}. 
For the unitary representation of translations $\wtU |_{\real^4}$ we shall write
$\wtU(x)=e^{i(\HH x^0-\bP \cdot \bx)}$ and
for translates of  observables $A\in \mcF$ the notations $\al_x(A):=A(x):=\wtU(x)A\wtU(x)^*$ are used. If $g\in L^1(\real^4)$, then $A(g):=\int A(x) g(x) d^4x$
{\bcc denotes the operator $A$  smeared with the function $g$.}
Moreover, we set 
\beqa
\mcF_{\loc,0}:=\{A\in \mcF_{\loc}  \,:\,   x\mapsto A(x)\, \textrm{ smooth in norm}\}.
\eeqa
{\bcc This is a weakly dense $*$-subalgebra of $\mcF_{\loc}$, as can be seen by smearing local operators with delta-approximating functions.}
{\bcc Next,} we specify the following 
 Poincar\'e invariant subset of $C_0^{\infty}(\real^4)$
\beq
C_*(\real^4):=\{ (n_{\mu}\pa^{\mu})^{5}g    \,:\,    g\in C_0^{\infty}(\real^4), \,\, n_0=\sqrt{1+\bn^2}\,\} \label{S-star-def}
\eeq
{\bcc and} define 
\begin{align}
\mcF_{C_*}&:=\{\, B(g)   \,:\,    B\in \mcF_{\loc,0}, \ g\in C_*(\real^4) \,\}, \\
\mcF^{C_*}&:=\Span\, \mcF_{C_*}, \\
\mcF_{C_*}(\mco)&:=\mcF_{C_*}\cap \mcF(\mco), \quad \mcF^{C_*}(\mco):=\mcF^{C_*}\cap \mcF(\mco), 
\quad  \mco\in \mcK. \label{local-C}
\end{align}

Now we move on to the construction of asymptotic fields of massless particles. For any $A\in \mcF^{C_*}$ and $f\in C^{\infty}(S^2)$,  we set   as in \cite{Bu77,Bu82}
\beqa
A_t\{f\}:=-2\,t\int d\om(\bn)\,f(\mathbf{n})\,\partial_0A(t,t\mathbf{n}).
\eeqa
Here $d\om(\mathbf{n})=\frac{\sin\nu\,d\nu d\varphi}{4\pi}$ is the normalized, invariant measure on $S^2$ and 
$\pa_0A:=\partial_s(\mathrm{e}^{\i sH}A\mathrm{e}^{-\i sH})|_{s=0}$. In order to improve the convergence in the limit of large $t$, we proceed to  time averages of $A_t\{f\}$, namely
\begin{equation}\label{timeAverage}
\bar{A}_t\{f\}:=\int\,dt'\,h_t(t')\,A_{t'}\{f\}.
\end{equation}
Here for non-negative $h\in C_0^{\infty}(\real)$, supported in the interval $[-1,1]$ and normalized so that
$\int dt\, h(t)=1$, we set $h_t(t')=t^{-{\beps}} h(t^{-{\beps}}(t'-t))$ with
$t\geq 1$ and $0<\beps<1$.  It turns out that these limits exist on all vectors from the domain
\beqa
D_{P^0}:=\bigcap_{n\geq 1} D((P^{0})^n),
\eeqa
where $D((P^{0})^n)$ is the domain of self-adjointness of $(P^{0})^n$.
\begin{lemma}\label{asymptotic-fields}
Let $A\in\mcF^{C_*}(\mco)$ and $f\in C^{\infty}(S^2)$. Then, the limit
\begin{equation}
A^{\out}\{f\}\Psi=\lim_{t\to\infty}\bar{A}_t\{f\}\Psi \label{first-A-out}
\end{equation}
exists for $\Psi\in D_{P^0}$  and is again an element of $D_{P^0}$. 
\end{lemma}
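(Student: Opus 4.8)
\medskip\noindent{\bf Proof strategy.}
The plan is to show that $t\mapsto\bar A_t\{f\}\Psi$ is norm-Cauchy as $t\to\infty$; the existence of $A^{\out}\{f\}\Psi$ then follows, and its membership in $D_{P^0}$ is deduced afterwards. As every expression involved is norm-smooth in $t$, the fundamental theorem of calculus gives
\[
\bar A_{t_2}\{f\}\Psi-\bar A_{t_1}\{f\}\Psi=\int_{t_1}^{t_2}\tfrac{d}{dt}\bar A_t\{f\}\Psi\,dt ,\qquad 1\le t_1\le t_2 ,
\]
so it suffices to prove that $t\mapsto\bigl\|\tfrac{d}{dt}\bar A_t\{f\}\Psi\bigr\|$ is integrable on $[1,\infty)$.

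To compute $\tfrac{d}{dt}\bar A_t\{f\}$ I would differentiate under the integral defining $\bar A_t\{f\}$ and use the explicit scaling $h_t(t')=t^{-\beps}h(t^{-\beps}(t'-t))$, which yields the identity
\[
\partial_t h_t(t')=-\partial_{t'}h_t(t')-\tfrac{\beps}{t}\bigl(h_t(t')+\tilde h_t(t')\bigr),\qquad \tilde h(u):=u\,h'(u)\in C_0^\infty([-1,1]),
\]
with $\tilde h_t$ scaled from $\tilde h$ exactly as $h_t$ from $h$. Integrating by parts in $t'$ in the first term and inserting
\[
\tfrac{d}{dt'}A_{t'}\{f\}=\tfrac1{t'}A_{t'}\{f\}-2t'\!\int d\om(\bn)\,f(\bn)\,(\partial_0+\bn\cdot\nabla)\partial_0 A(t',t'\bn)
\]
expresses $\tfrac{d}{dt}\bar A_t\{f\}$ as a sum of: \textup{(i)} terms $\tfrac1t\int dt'\,k_t(t')\,A_{t'}\{f\}$ with $k\in C_0^\infty([-1,1])$, i.e.\ $O(1/t)$ multiples of time-averaged approximants of the same structure as $\bar A_t\{f\}$; and \textup{(ii)} the term $-2\int dt'\,h_t(t')\,t'\!\int d\om(\bn)\,f(\bn)\,(\partial_0+\bn\cdot\nabla)\partial_0 A(t',t'\bn)$, which carries one extra derivative along the outgoing null direction $(1,\bn)$.

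Both families are then estimated on $\Psi\in D_{P^0}$ by the propagation estimates of massless scattering theory in the style of Buchholz \cite{Bu77,Bu82}, in the sharpened form used in \cite{AD17}. The decisive inputs are: the uniform-in-$t$ energy bounds on the asymptotic approximants from \cite{Bu90,He14}, which dominate $\int dt'\,k_t(t')A_{t'}\{f\}\Psi$ by $c\,\|(P^0+1)^m\Psi\|$ and, together with the prefactor $1/t$, beat the time-average width $t^{\beps}$; the extra spacetime decay forced by the five lightlike derivatives built into $C_*(\real^4)$, which is exactly what makes the null derivative $(\partial_0+\bn\cdot\nabla)\partial_0 A(t',t'\bn)\Psi$ along the ray (transverse to the stationary sphere directions) decay faster than $(t')^{-1}$ after integration over $S^2$; and the Cesàro-type cancellation of oscillatory contributions provided by the time average with $0<\beps<1$. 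Put together, these should give $\bigl\|\tfrac{d}{dt}\bar A_t\{f\}\Psi\bigr\|\le c\,t^{-1-\delta}\,\|(P^0+1)^m\Psi\|$ for some $\delta>0$ and $m\in\NN$, which is integrable. I expect this quantitative step — extracting summable decay from the interplay of the null-derivative structure, the $C_*$-smearing and the uniform energy bounds — to be the main obstacle; the reduction via the fundamental theorem of calculus and the manipulations above are routine.

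Finally, membership of the limit in $D_{P^0}$ is obtained once norm-convergence is known for every operator covered by the lemma and every vector of $D_{P^0}$. Writing $A$ as a finite sum of operators $B(g)$ with $B\in\mcF_{\loc,0}$ and $g=(n_\mu\partial^\mu)^5\tilde g\in C_*(\real^4)$, one has $\partial_0^k[B(g)]=B\bigl((-\partial_0)^k g\bigr)$ with $(-\partial_0)^k g=(n_\mu\partial^\mu)^5(-\partial_0)^k\tilde g\in C_*(\real^4)$ and non-increasing spacetime support, so each $\partial_0^k A$ is again an operator to which the lemma applies. Since $P^0$ generates translations and $\bar A_t\{f\}$ is built from translates of the bounded operator $\partial_0 A$, one gets $[(P^0)^n,\bar A_t\{f\}]=\sum_{k=1}^n c_{n,k}\,\overline{(\partial_0^k A)}_t\{f\}\,(P^0)^{n-k}$ with suitable constants $c_{n,k}$, whence
\[
(P^0)^n\bar A_t\{f\}\Psi=\bar A_t\{f\}(P^0)^n\Psi+\sum_{k=1}^n c_{n,k}\,\overline{(\partial_0^k A)}_t\{f\}\,(P^0)^{n-k}\Psi
\]
converges in norm as $t\to\infty$ for every $n\in\NN$. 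As each $(P^0)^n$ is closed, $A^{\out}\{f\}\Psi\in D((P^0)^n)$ for all $n$, i.e.\ $A^{\out}\{f\}\Psi\in D_{P^0}$.
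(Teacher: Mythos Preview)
The paper does not supply its own proof of this lemma: it is quoted from the scattering theory developed in \cite{Bu77} in the streamlined form of \cite{AD17}, and the paper explicitly says that that variant ``is based on novel ergodic theorem arguments and on uniform energy bounds on asymptotic fields from \cite{Bu90,He14}''. So your proposal is not being compared against a proof in the paper but against the argument in the cited literature, and the route taken there is \emph{not} the Cook-type integrability argument you outline.

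More importantly, your sketch has a genuine gap at the point you yourself flag as the main obstacle. In your decomposition, the type-(i) contributions are of the form $\tfrac{1}{t}\int dt'\,k_t(t')\,A_{t'}\{f\}\Psi$. The uniform energy bounds of \cite{Bu90,He14} give exactly $\bigl\|\int dt'\,k_t(t')\,A_{t'}\{f\}\Psi\bigr\|\le c\,\|(P^0+1)^m\Psi\|$, with no additional decay in $t$; combined with the prefactor this yields only $O(1/t)$, which is \emph{not} integrable on $[1,\infty)$. Your remark that this ``beat[s] the time-average width $t^{\beps}$'' is beside the point: the average is normalised ($\int k_t=\int k$), so there is no $t^{\beps}$ growth to beat, and nothing in your outline produces the extra $t^{-\delta}$ you need. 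The null-derivative term (ii) may well decay faster thanks to the $C_*$-smearing, but that does not help the surviving $O(1/t)$ pieces in (i), and there is no cancellation between them (the coefficients $\tfrac{1}{t'}-\tfrac{\beps}{t}$ do not combine to something better than $1/t$ on $\supp h_t$). This is precisely why the massless convergence proofs in \cite{Bu77,AD17} do \emph{not} proceed via integrability of $\tfrac{d}{dt}\bar A_t\{f\}\Psi$: \cite{AD17} instead combines the uniform energy bounds with a mean-ergodic/Ces\`aro mechanism that extracts convergence of the time-averaged approximants directly, without ever needing $t^{-1-\delta}$ decay of the derivative.

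Your final paragraph on membership in $D_{P^0}$ is correct in spirit and matches the standard argument: once convergence is known for all admissible $A$, the identities $\partial_0^k[B(g)]=B((-\partial_0)^k g)$ with $(-\partial_0)^k g\in C_*(\real^4)$ keep $\partial_0^kA$ in the class covered by the lemma, the commutator expansion of $(P^0)^n\bar A_t\{f\}$ reduces to finitely many such approximants applied to vectors in $D_{P^0}$, and closedness of $(P^0)^n$ finishes the job. So the defect is solely in the convergence step.
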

The operators $A^{\out}\{f\}$ are constructed in such a way that they create single-particle states from the
vacuum, namely
\beqa
 A^{\out}\{f\}\Om=P^{(1)} f\left(\tfrac{\mathbf{P}}{|\mathbf{P}|}\right)A\Om,  \label{single-particle-photon}
 \eeqa
where $P^{(1)}$ is the projection on the single-particle subspace $\nhil^{(1)}$. {Vectors of the form (\ref{single-particle-photon})
span a dense subspace of $\nhil^{(1)}$, even in the case $f\equiv 1 $}. Furthermore, if $A^{\out}\{f\}$, $A'^{\out}\{f'\}$ 
are two asymptotic fields as specified above, then
\beqa
\,[A^{\out}\{f\}, A'^{\out}\{f'\}]=\lan \Om,[A^{\out}\{f\}, A'^{\out}\{f'\}]\Om\ran 1_{\nhil} \label{c-number-commutator}
\eeqa
as operators on $D_{P^0}$.  For $f\equiv 1$ the operators $A^{\out}\{f\}$ appearing in Lemma~\ref{asymptotic-fields} are denoted $A^{\out}$ and are 
called the asymptotic fields.
For $A=A^*$ these operators are essentially self-adjoint on $D(P^0)$ and their self-adjoint extensions are
denoted by the same symbol.  For any $\mco\in \mcK$ we introduce the von Neumann algebra:
\beqa
\mcF^{\out}(\mco):=\{\, e^{i A^{\out}} \,:\, A\in \mcF^{C_*}(\mco), \  A^*=A \,\}''.\label{asymptotic-algebra}
\eeqa
The triple $(\mcF^{\mrm{out}},U, \Om)$ satisfies all the properties  from Definition~\ref{HK}, 
except, perhaps, for the cyclicity of the vacuum. If the latter property also holds, then we say that the theory   $(\mcF,U, \Om)$ is \textbf{asymptotically complete}. 
{\bcc Clearly, for the definition of asymptotic completeness the case $f\equiv 1$ suffices. However, the operators $A^{\out}\{ f\}$ for other choices of $f$ will
be needed in Sect.~\ref{last-section} at the technical level. For this reason we  collected their properties above.}

Now we are ready to state the main result of this paper:
\begin{theoreme} \label{main-theorem} Let $(\mcF, U,\Om)$ be a local net containing massless particles with helicity zero or  with 
helicities $(h, -h)$ for some $h\in \nat$. If this net is
 asymptotically complete, then it satisfies the Bisognano-Wichmann property.
\end{theoreme}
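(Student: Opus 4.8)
### Plan of Proof

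The plan is to prove the B-W property in two stages, following the strategy outlined in the introduction: first establish it at the one-particle level via the modularity condition, then lift it to the full net using asymptotic completeness. For the first stage, I would work with the one-particle net $\mco\mapsto \mcF(\mco)\Om$ (closure) restricted to $\nhil^{(1)}$, which is a one-particle net in the sense of Section~\ref{Preliminaries-section}. The key point is to verify the \emph{modularity condition} of \cite{Mo16,Mor} for the covariant representation $U|_{\nhil^{(1)}}$. By hypothesis this representation is either scalar or a direct sum of two irreducible massless representations with opposite helicities $(h,-h)$; this is precisely the class treated in Section~\ref{modularity-sect}, so one invokes the verification done there to conclude that $U|_{\nhil^{(1)}}(\lambda_W(2\pi t))=\Delta_{W_1}^{-it}$, where $\Delta_{W_1}$ is the modular operator of the standard subspace $\overline{\mcF(W)\Om}\cap\nhil^{(1)}$. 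The reason the direct-sum-of-opposite-helicities structure is needed is that a single nonzero helicity representation is not self-conjugate, so the associated standard subspace is not invariant under the relevant PCT-type antiunitary; pairing $h$ with $-h$ restores this. Concretely I would check that the generator of boosts, analytically continued, produces a positive operator whose logarithm matches the modular Hamiltonian, using the explicit form of the massless representations and the known structure of the little group.

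For the second stage, I would use scattering theory (Lemma~\ref{asymptotic-fields} and the surrounding results) to propagate the B-W relation from $\nhil^{(1)}$ to all of $\nhil$. The idea is: asymptotic completeness means $\mcF^{\out}$ acts cyclically on $\Om$, and the outgoing net is a free net built from the one-particle data, so the B-W property for $U|_{\nhil^{(1)}}$ together with the second-quantization functoriality of modular theory on Fock space gives $U(\lambda_W(2\pi t))=\Delta_{W,\out}^{-it}$, the modular group of $\mcF^{\out}(W)$. The remaining task is to identify $\Delta_{W,\out}$ with $\Delta_W$, the modular operator of the \emph{interacting} algebra $\mcF(W)$. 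For this I would show $\mcF^{\out}(W)\subseteq \mcF(W)$ — the asymptotic fields $A^{\out}$ for $A\in\mcF^{C_*}(\mco)$ are norm-limits of observables localized, after the Buchholz time-averaging, in regions that remain inside the wedge $W$ for $\mco\subset W$, because massless excitations propagate along the light cone and the smearing supports $\supp h_t$ scale subluminally ($0<\beps<1$). Given this inclusion, and since $\Om$ is cyclic and separating for both algebras with $U(\lambda_W(2\pi t))$ implementing the modular group of the smaller one and commuting with everything in sight, a standard argument (e.g. via the converse Hislop-Longo / Takesaki theorem, or using that the inclusion together with cyclicity forces equality of modular data) yields $\Delta_W=\Delta_{W,\out}$, hence the B-W property for $(\mcF,U,\Om)$.

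The main obstacle I expect is the second stage, specifically the geometric control showing $\mcF^{\out}(W)\subseteq\mcF(W)$ and the identification of modular operators. The wedge-localization of the asymptotic fields is the massless analogue of what Mund \cite{Mu01} did in the massive case, but the light-cone propagation and the explicit scaling of the averaging functions $h_t$ must be handled carefully: one needs that for $A\in\mcF^{C_*}(\mco)$ with $\mco\subset W_1$, the operators $\bar A_t\{1\}$ are (limits of) elements localized in a slightly enlarged wedge still contained in $W_1$, uniformly in large $t$, so that $A^{\out}\in\mcF(W_1)$. The delicate input is that $A$ is already smeared with a function in $C_*(\real^4)$ and is strictly local, and that $\partial_0 A(t,t\mathbf n)$ sits in a region whose distance from the wedge edge grows like $t$ while the time-averaging spreads only over a width $t^{\beps}\ll t$; combined with relativistic covariance of the net this keeps everything inside $W_1$. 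Once localization is in hand, the identification $\Delta_W=\Delta_{W,\out}$ follows because $U(\lambda_W(2\pi t))$ satisfies the KMS condition with respect to $\mcF^{\out}(W)\subseteq\mcF(W)$ and leaves $\Om$ invariant, and uniqueness of the modular group pins it down — here one uses that $\mcF^{\out}(W)\Om$ is dense (asymptotic completeness for the wedge, which follows from additivity and Reeh-Schlieder) so the modular objects of the two algebras coincide.
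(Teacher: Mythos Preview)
Your first stage---building the one-particle net $H^{(1)}(W)=P^{(1)}\overline{\mcF_{\mathrm{sa}}(W)\Om}$, verifying the modularity condition for $U|_{\nhil^{(1)}}$, and identifying $\Delta_{H^{(1)}(W)}^{it}$ with $\Delta_{\mcF(W)}^{it}P^{(1)}$---is correct and matches the paper (Theorem~\ref{one-particle-net-BW} and Proposition~\ref{last-section-proposition}).

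The second stage has a genuine gap. The inclusion $\mcF^{\out}(W)\subseteq\mcF(W)$ does \emph{not} follow from the geometry you describe, and this is precisely where the massless case diverges from Mund's massive argument. For $A\in\mcF^{C_*}(\mco)$ with $\mco\subset W_1$, the approximant $A_t\{1\}=-2t\int_{S^2} d\omega(\mathbf n)\,\partial_0A(t,t\mathbf n)$ involves translates of $A$ by \emph{every} lightlike vector $(t,t\mathbf n)$; the region $\mco+(t,t\mathbf n)$ exits $W_1$ whenever $n_1<1-O(1/t)$, i.e.\ for almost the entire sphere. The subluminal width $t^{\beps}$ of $h_t$ controls only the time smearing, not the sphere integration, so $\bar A_t\{1\}$ is localized in a full lightlike shell around $\mco$, not in any wedge. (Massive Haag--Ruelle uses wave packets with velocity support strictly inside the open unit ball---exactly what is unavailable here.) Without this inclusion your Takesaki-type comparison of $\Delta_W$ and $\Delta_{W,\out}$ cannot start. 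The appeal to ``second-quantization functoriality'' to get $\Delta_{W,\out}^{-it}=U(\lambda_W(2\pi t))$ is also unjustified: the asymptotic net is not a priori the free-field net over $H^{(1)}$ (see the footnote after Theorem~\ref{main-theorem}).

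The paper avoids both issues by never claiming wedge-localization of $\mcF^{\out}$. Instead (Proposition~\ref{final-proposition-scattering}) it shows directly, by induction on particle number $n$, that $V_s:=\Delta_{\mcF(W_1)}^{is}U(\lambda_{W_1}(2\pi s))$ acts trivially on $\nhil^{(n)}$. The point is that $V_s$ commutes with translations (Borchers) and preserves $\mcF(W_1)$, so $A_n':=V_sA_nV_s^{-1}$ stays in $\mcF(W_1)$ even though it is only \emph{wedge}-local. One then needs $[(A_n')^{\out}\{f_n\},A_i^{\out}\{f_i\}]=0$ for $i<n$; since $(A_n')^{\out}$ is not in the asymptotic net (no double-cone localization), this cannot come from locality of $\mcF^{\out}$. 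It is obtained instead by first arranging (Lemma~\ref{lemma-n-particle}) that the $\supp f_i$ are separated along the $x_1$-axis with $f_n$ furthest to the right, and then estimating the approximant commutators explicitly: the velocity ordering forces the translate of $A_i$ into the opposite wedge $W_1'$ for large $t$, where it is spacelike to $A_n'\in\mcF(W_1)$.
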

\begin{proof} Follows from Theorem~\ref{one-particle-net-BW} and Propositions~\ref{last-section-proposition}, \ref{final-proposition-scattering} below. \end{proof}
Even if the original net $(\mcF, U,\Om)$ is not asymptotically complete, we can set $\nhil^{\out}:={\bc \ov{\mcF^{\out}\Om}}$ 
and define the asymptotic net $(\mcF^{\out} |_{ \nhil^{\out} }, U |_{ \nhil^{\out} }, \Om)$ which is asymptotically complete
by construction. In view of the  commutation relations (\ref{c-number-commutator}), this net can be considered free, but it
is not automatically the net of the corresponding textbook free field theory\footnote{For example, if we choose as the original net the
`truncated' net, s.t. the local algebras of regions below certain size are declared to be $\complex 1$, the  asymptotic net will inherit this property  {\bc \cite{AD17}}.}.
By Theorem~\ref{main-theorem}, this net satisfies the Bisognano-Wichmann property if it contains
massless  Wigner particles with helicity zero or $(h, -h)$, $h\in \nat$.

{\bcc We note that the local nets satisfying the assumptions of Theorem~\ref{main-theorem} are a posteriori in the setting of \cite{GL}. Indeed,
the modular covariance is an obvious consequence of the Bisognano-Wichmann property and the Reeh-Schlieder property for spacelike cones
follows from the spectrum condition and cyclicity of the vacuum under $\mcF$ (cf. \cite[Appendix]{Bu75}). From the spin-statistics theorem of \cite{GL}
it follows that such nets are actually covariant under $\Poi$. Furthermore, by the CPT theorems of this reference, the unitary representation $U$ of $\Poi$
extends to an (anti-)unitary covariant representation of $\cP_+$ (the group generated by $\Poi$ and the PT operator $\Theta$)  as follows:}
\begin{corollary} Let $(\mcF, U,\Om)$ be a local net as in Theorem \ref{main-theorem}. Then $U$ extends to an (anti-)unitary representation of the Poincar\'e group $\cP_+$ by
$$J_{W_1}U(R_1(\pi))=U(\Theta)$$
where $\Theta {\bcc x}=-{\bcc x}$ with ${\bcc x}\in\RR^{1+3}$, $J_{W_1}$ is the modular conjugation associated to $(\mcF(W_1),\Omega)$ and $R_1(\pi)$ is the 
{\bcc rotation by $\pi$ around the first axis.}
\end{corollary}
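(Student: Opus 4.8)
The plan is to combine Theorem~\ref{main-theorem} with the spin-statistics and CPT results of \cite{GL}. First I would observe that, by Theorem~\ref{main-theorem}, the net $(\mcF,U,\Om)$ satisfies the Bisognano-Wichmann property, which in particular gives $U(\lambda_{W_1}(2\pi t))=\Delta_{W_1}^{-it}$ and hence modular covariance. Next, the Reeh-Schlieder property for spacelike cones follows from positivity of the energy together with cyclicity of $\Om$ under $\mcF$, by the standard argument reproduced in \cite[Appendix]{Bu75}. These two inputs put the net in the framework of \cite{GL}. From the spin-statistics theorem proved there, one concludes that the representation $\wtU$ of $\wt\mcP_+^\uparrow$ factors through $\mcP_+^\uparrow$, so the net is genuinely $\Poi$-covariant; this is what is needed for the PT operator to be well defined.

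The core of the argument is the construction of $U(\Theta)$ via the modular conjugation. By the Tomita-Takesaki theory applied to $(\mcF(W_1),\Om)$ we have the modular conjugation $J_{W_1}$, an antiunitary involution. The CPT theorem of \cite{GL} — which applies precisely because modular covariance and Reeh-Schlieder for spacelike cones hold — asserts that the antiunitary operator $J_{W_1}U(R_1(\pi))$ implements the geometric transformation $x\mapsto -x$ on the net, i.e. it maps $\mcF(\mco)$ onto $\mcF(-\mco)$ and intertwines $U$ with the adjoint action twisted by $\Theta$; it is moreover an involution commuting appropriately with $U(\Poi)$. Setting $U(\Theta):=J_{W_1}U(R_1(\pi))$ and combining with the given $\Poi$-representation then yields an (anti-)unitary representation of the full group $\cP_+$ generated by $\Poi$ and $\Theta$, since the relations defining $\cP_+$ (the conjugation rules $\Theta g\Theta^{-1}=\Theta g\Theta$ for $g\in\Poi$ and $\Theta^2=1$) are matched on the operator side by the corresponding intertwining properties of $J_{W_1}U(R_1(\pi))$.

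I do not expect a genuine obstacle here: all the heavy lifting is done by Theorem~\ref{main-theorem} and by the cited results of \cite{GL,Bu75}. The only point requiring a little care is verifying that the hypotheses of the spin-statistics and CPT theorems of \cite{GL} are met in the present, slightly different, formulation — in particular that "modular covariance" in the sense of \cite{GL} is exactly our Bisognano-Wichmann property, and that the Reeh-Schlieder property we need (for spacelike cones, not merely for double cones) indeed follows from the spectrum condition and cyclicity of $\Om$ under the global algebra $\mcF$. Both of these are essentially bookkeeping, and the statement then follows by direct appeal to the quoted theorems.
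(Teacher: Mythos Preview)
Your proposal is correct and follows essentially the same route as the paper: the paragraph preceding the corollary in the paper is precisely this argument---Bisognano-Wichmann from Theorem~\ref{main-theorem} gives modular covariance, Reeh-Schlieder for spacelike cones is obtained from the spectrum condition and cyclicity via \cite[Appendix]{Bu75}, and then the spin-statistics and CPT theorems of \cite{GL} yield first $\Poi$-covariance and then the extension to $\cP_+$ by setting $U(\Theta)=J_{W_1}U(R_1(\pi))$. No separate proof is given in the paper beyond this discussion, so your plan matches it in both structure and content.
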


\section{Preliminaries} \label{Preliminaries-section}
\setcounter{equation}{0}

\subsection{Scattering states of massless particles}

In this subsection we provide some preparatory information about the  Hilbert space of scattering states $\nhil^{\out}:={\bc \ov{\mcF^{\out}\Om}}$
introduced above.
Namely, we extract the creation and annihilation parts of the asymptotic fields (\ref{first-A-out}) in order to facilitate the construction of scattering states.
We still follow \cite{AD17} which  in turn relied here on \cite{DH15}.  
Let $\theta\in C^{\infty}(\real)$, $0\leq \theta\leq 1$, be supported in $(0,\infty)$ and equal to one on $(1,\infty)$. Moreover, let $\be\in C_0^{\infty}(\real^4)$, $0\leq \be\leq 1$, be equal to one in some neighbourhood of zero and satisfy $\be(-p)=\be(p)$. Furthermore, for a parameter $1\leq r<\infty$ and a future oriented timelike unit vector $n$ we define
\beqa
\wt\eta_{\pm,r}(p):=\theta(\pm r (n_{\mu}p^{\mu}))\be(r^{-1} p), \label{test-functions-zero}
\eeqa 
where tilde denotes the Fourier transform. As $r\to \infty$ these functions approximate the characteristic functions of the positive/negative energy half planes $\{\, p\in \real^4 \,:\,  \pm n_{\mu}p^{\mu}\geq 0\,\}$. We also have $\bar \eta_{\pm,r}=\eta_{\mp,r}$. Note that the family of functions $\eta_{\pm,r}$, as specified above, is invariant under Lorentz transformations.
\bep\label{creation-annihilation} \emph{\cite{Bu77,AD17}} Let $A\in \mcF_{C_*}$, $f\in C^{\infty}(S^2)$. Suppose that 
the timelike unit vectors $n$ entering the definition of $A$ and of $\eta_{\pm,r}$ coincide. Then:
\begin{enumerate}
\item[(a)] The limits 
$A^{\out}\{f\}^{\pm}\Psi:=\lim_{r\to \infty}A^{\out}\{f\}(\eta_{\pm,r})\Psi$, $\Psi\in D_{P^0}$, 
exist and define the creation and annihilation parts of $A^{\out}\{f\}$ as  operators on $D_{P^0}$. $A^{\out}\{f\}^{\pm}$
do not depend on the choice of the functions $\theta$ and $\be$ in (\ref{test-functions-zero}) within the specified restrictions.

\item[(b)]  $(A^{\out}\{f\}^{\pm})^*|_{D_{P^{0}} }=A^{*\out}\{\bar f\}^{\mp}$. In particular,  $A^{\out}\{f\}^{\pm}$ are closable operators.

\item[(c)] $A^{\out}\{f\}^{\pm}D_{P^0}\subset D_{P^0}$.

\item[(d)] $A^{\out}\{f\}=A^{\out}\{f\}^{+}+A^{\out}\{f\}^{-}$ on $D_{P^0}$.
\end{enumerate}
\eep
Making use of Proposition~\ref{creation-annihilation} and of (\ref{c-number-commutator}) we also obtain on $D_{P^0}$ 
\beqa
\,[A^{\out}\{f\}^-, A'^{\out}\{f'\}^+  ]=\lan A^{*\out}\{\bar{f}\}^+\Om, A'^{\out}\{f'\}^+ \Om\ran  1_{\nhil}  \label{CCR-asymptotic}
\eeqa
and the commutators of pairs of creation (resp. annihilation) operators vanish. The following definition of scattering states is slightly more general than in \cite{Bu77,AD17}, as we do not assume $f \equiv 1$. The proof is an obvious application of the canonical commutation relations~(\ref{CCR-asymptotic}). 
\bep\label{photon-scattering}  \emph{\cite{Bu77,AD17}} The states $\Psi^{\out}:= A_1^{\out}\{f_1\}^+ \ldots A_n^{\out}\{ f_n\}^+\Om$ have the following properties:
\begin{enumerate}
\item[(a)] $\Psi^{\out}$ depends only on the single-particle states $\Phi_i=A_i^{\out}\{f\}\Om\in \nhil^{(1)}$. Therefore, we write 
$\Psi^{\out}=\Phi_1\tout\cdots \tout\Phi_n$.
\item[(b)] 
$\lan  \Phi_1\tout\cdots \tout\Phi_n,\Phi'_1\tout\cdots \tout\Phi'_{n'}\ran=
\delta_{n,n'}\sum_{\si\in \mathfrak{S}_n}\lan \Phi_1, \Phi'_{\si_1}\ran\ldots \lan \Phi_n, \Phi'_{\si_n}\ran$, where
$\mathfrak{S}_n$ is the set of all permutations of $(1,\ldots, n)$.
\end{enumerate}
\eep
\nin  The subspace of $\nhil$ spanned by vectors of the form $\Psi^{\out}=\Phi_1\tout\cdots \tout\Phi_n$ for fixed $n$ will be denoted $\nhil^{(n)}$.
We note that 
\beqa
\nhil^{\out}:={\bc {\ov{\mcF^{\out} \Om}}}=\bigoplus_{n\geq 0}\nhil^{(n)},  \label{H-out}
\eeqa
where $\nhil^{(0)}=\complex \Omega$ and $\nhil^{(1)}$ was introduced in Definition~\ref{single-particle-definition}. 
The last equality in (\ref{H-out}) follows from density of vectors of the form (\ref{single-particle-photon}) 
in $\nhil^{(1)}$ and from the canonical commutation relations (\ref{CCR-asymptotic}) by standard Fock space arguments.
Clearly,  $\nhil^{\out}$ is naturally isomorphic to the symmetric Fock space over $\nhil^{(1)}$, denoted $\Ga(\nhil^{(1)})$.

\subsection{Standard subspaces}

 We recall here some elements of the theory of standard subspaces following \cite{L}. {\pc In the later part of this subsection we also provide several  results which we were not able to find in literature and that will be needed in our investigation.}

A  real linear, closed subspace $H$ of a complex Hilbert space $\H$ is called {\bf cyclic} if 
$H+iH$ is dense in $\H$, {\bf separating} if $H\cap iH=\{0\}$ and 
{\bf standard} if it is cyclic and separating.

Given a standard subspace $H$  the associated {\bf Tomita operator} $S_H$ is defined to be the closed anti-linear involution with domain $H+iH$, given by: $$S_H:H+iH\ni \xi + i\eta \mapsto \xi - i\eta\in H+iH, \qquad\xi,\eta\in H.$$ The polar decomposition $$S_H = J_H\Delta_H^{1/2}$$ defines the positive 
self-adjoint {\bf modular operator} $\Delta_H$ and the anti-unitary
{\bf modular conjugation} $J_H$. $\Delta_H$ is invertible and $J_H\Delta_H J_H=\Delta_H^{-1}.$

Let $H$ be a  real 
linear subspace of $\H$, the \emph{symplectic complement} of $H$ is defined by
\[
H' := \{\xi\in\H\ :\ {\bc \mrm{Im}\lan \xi,\eta\ran}=0, \forall \eta\in H\} = (iH)^{\bot_\RR}\ ,
\]
where $\bot_\RR$ denotes the orthogonal {\bc complement} in $\H$  with respect to the real part of the scalar product on $\H$.
$H'$ is a closed, real linear subspace of $\H$. 
It is a fact that $H$ is cyclic (resp. separating) iff $H'$ is separating (resp. cyclic), thus $H$ is standard iff $H'$ is standard and in this case
\[
S_{H'} = S^*_H \ ,
\]
with $J_{H'}=J_{H}$ and $\De_{H'}=\De^{-1}_{H}=J_{H}\De_{H} J_{H}$ \cite{L}.
 Furthermore, if $H$ is standard, then $H = H''$.
We recall that the one-parameter, strongly continuous group $t\mapsto \Delta_H^{it}$  is called the {\bf modular group} of $H$ and
\begin{equation*}
\Delta_H^{it}H = H, \quad J_H H = H' \ ,\qquad  t\in\RR\ .
\end{equation*}

 There is a 1-1 correspondence between Tomita operators and  standard subspaces. 
 \begin{proposition}\label{prop:11}{\rm \cite{L}}.
The map 
\begin{equation}\label{SH}H\longmapsto S_H
\end{equation}
is a bijection between the set of standard subspaces of $\H$ and the set of closed, densely defined, anti-linear involutions on $\H$. 
\end{proposition}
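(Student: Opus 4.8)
The plan is to prove injectivity by recovering $H$ from $S_H$ as a fixed-point set, and surjectivity by running this recovery in reverse. That the map is well defined is essentially what is built into the definitions: given a standard subspace $H$, the operator $S_H$ is an anti-linear involution with dense domain $H+iH$, and it is closed because $H$ is closed (if $\xi_n+i\eta_n\to\zeta$ and $\xi_n-i\eta_n\to\zeta'$ with $\xi_n,\eta_n\in H$, then $\xi_n=\tfrac12\big((\xi_n+i\eta_n)+(\xi_n-i\eta_n)\big)$ and $\eta_n$ converge separately to elements of $H$, whence $\zeta\in H+iH$ and $S_H\zeta=\zeta'$). For injectivity I would use that, since $H\cap iH=\{0\}$, every $\zeta\in H+iH$ has a \emph{unique} decomposition $\zeta=\xi+i\eta$ with $\xi,\eta\in H$, so $S_H\zeta=\zeta$ holds exactly when $\eta=0$; hence $H=\{\zeta\in D(S_H):S_H\zeta=\zeta\}$ is determined by $S_H$, and $S_H=S_K$ forces $H=K$.

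For surjectivity, let $S$ be a closed, densely defined, anti-linear involution and put $H:=\{\xi\in D(S):S\xi=\xi\}$. It is clear that $H$ is a real-linear subspace, and it is closed because $S$ is closed. The one point that genuinely requires an input from operator theory is that $D(S)$ is a \emph{complex}-linear subspace: I would obtain this from the polar decomposition $S=J\Delta^{1/2}$, where $\Delta:=S^*S$ is positive and self-adjoint (von Neumann's theorem, which carries over to anti-linear $S$) and $J$ is anti-unitary; since $D(S)=D(\Delta^{1/2})$ and $\Delta^{1/2}$ is a self-adjoint \emph{linear} operator, its domain is a dense complex-linear subspace. Along the way, $S^2=\id$ on $D(S)$ forces $J^2=\id$ and $J\Delta J=\Delta^{-1}$, so $\Delta$ is invertible — exactly the structure recorded above for $\Delta_H$ and $J_H$.

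Once $D(S)$ is known to be complex-linear, the rest is elementary. Set $P_\pm:=\tfrac12(\id\pm S)$ on $D(S)$. For $\zeta\in D(S)$ one checks $S(P_+\zeta)=P_+\zeta$ and $S(P_-\zeta)=-P_-\zeta$, hence $S(iP_-\zeta)=iP_-\zeta$; writing $\xi:=P_+\zeta\in H$ and $\eta:=iP_-\zeta\in H$ gives $\zeta=P_+\zeta+P_-\zeta=\xi-i\eta$, so $D(S)=H+iH$, which is dense and shows $H$ is cyclic. If $\xi\in H\cap iH$, say $\xi=i\eta$ with $\eta\in H$, then $\xi=S\xi=S(i\eta)=-iS\eta=-i\eta=-\xi$, so $\xi=0$ and $H$ is separating; thus $H$ is standard. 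Finally $S_H=S$: both are anti-linear involutions with domain $H+iH=D(S)$, and on $\zeta=\xi-i\eta$ as above $S_H\zeta=\xi+i\eta=S\xi+S(-i\eta)=S\zeta$. This exhibits an arbitrary such $S$ as $S_H$, so the map is onto, completing the bijection.

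I expect the only real obstacle to be the operator-theoretic step: setting up the polar decomposition of an anti-linear operator, and with it verifying that the domain of a closed anti-linear involution is a complex subspace (equivalently, that $S^*S$ is self-adjoint). If one builds complex-linearity of the domain into the notion of anti-linear involution from the start, the proof reduces to the bookkeeping above.
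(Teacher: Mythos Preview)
The paper does not give its own proof of this proposition; it is quoted from Longo's lecture notes \cite{L} and stated without argument. Your proof is the standard one and is correct: recover $H$ as $\ker(1-S_H)$ for injectivity, and for surjectivity set $H:=\ker(1-S)$, use $P_\pm=\tfrac12(\id\pm S)$ to get $D(S)=H+iH$, and read off that $H$ is standard with $S_H=S$.

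One small remark on the point you flagged. In the usual conventions (and in \cite{L}) an anti-linear operator is by definition defined on a \emph{complex} subspace, so the complex-linearity of $D(S)$ is part of the hypothesis ``closed, densely defined, anti-linear involution'' and need not be derived. Your polar-decomposition argument ($S=J\Delta^{1/2}$, $D(S)=D(\Delta^{1/2})$) is a correct way to see this is forced anyway, but it is not needed for the proof as stated; with the standard definition the argument is exactly the elementary bookkeeping you wrote down.
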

 The following  are  three  basic  results on  standard subspaces.
\begin{lemma}\label{lem:sym}{\rm \pc \cite{Mor}.}
Let $H,K\subset\H$ be standard subspaces  and $U\in\cU(\H)$ be a unitary operator on $\H$ such that $UH=K$. Then $U\Delta_H U^*=\Delta_K$ and $UJ_HU^*=J_K$.
\end{lemma}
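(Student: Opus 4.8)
\textbf{Proof plan for Lemma~\ref{lem:sym}.}
The statement is that if $U$ is unitary and maps a standard subspace $H$ onto a standard subspace $K$, then $U$ intertwines the modular data: $U\Delta_H U^* = \Delta_K$ and $UJ_HU^* = J_K$. The natural strategy is to reduce everything to the uniqueness in the bijection of Proposition~\ref{prop:11}, that is, to show that $U S_H U^*$ is precisely the Tomita operator $S_K$ associated with $K$.

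First I would check that $U S_H U^*$ is a closed, densely defined anti-linear involution: $U$ is complex-linear, $S_H$ is anti-linear, so the conjugate $US_HU^*$ is anti-linear; it is an involution since $(US_HU^*)^2 = US_H^2U^* = U U^* = 1$ on the appropriate domain; and closedness and dense definedness are preserved under conjugation by a unitary, with $\mathrm{dom}(US_HU^*) = U(\mathrm{dom}\,S_H) = U(H+iH) = K + iK$ (using $UH = K$ and complex-linearity of $U$, which commutes with multiplication by $i$). Next I would verify directly that $US_HU^*$ acts as the Tomita operator of $K$: for $\xi,\eta\in H$ one has $U S_H U^*(U\xi + iU\eta) = US_H(\xi + i\eta) = U(\xi - i\eta) = U\xi - iU\eta$, and since $\{U\xi : \xi\in H\} = K$, this is exactly the map $S_K$ on $K + iK$. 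By the injectivity of $H\mapsto S_H$ (Proposition~\ref{prop:11}), or simply because a closed anti-linear involution is determined by its action on a core, $US_HU^* = S_K$.

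Finally I would extract the modular data from the polar decomposition. Since $S_H = J_H\Delta_H^{1/2}$ is the (unique) polar decomposition with $J_H$ anti-unitary and $\Delta_H^{1/2}$ positive self-adjoint, conjugating by the unitary $U$ gives $US_HU^* = (UJ_HU^*)(U\Delta_H^{1/2}U^*)$, where $UJ_HU^*$ is again anti-unitary and $U\Delta_H^{1/2}U^* = (U\Delta_H U^*)^{1/2}$ is positive self-adjoint. By uniqueness of the polar decomposition of $S_K = J_K\Delta_K^{1/2}$, we conclude $UJ_HU^* = J_K$ and $U\Delta_H U^* = \Delta_K$ (the latter after squaring, or equivalently by the functional-calculus identity $U\Delta_H^{it}U^* = \Delta_K^{it}$ for all $t$).

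I do not expect any serious obstacle here; the only point requiring a little care is the domain bookkeeping — confirming $\mathrm{dom}(US_HU^*) = K+iK$ and that $US_HU^*$ really is \emph{the} closure, not merely a restriction, of $S_K$ — but this is immediate from $U$ being a bijective isometry commuting with the complex structure, so that it carries cores to cores. Everything else is the standard uniqueness argument for polar decomposition and for the correspondence in Proposition~\ref{prop:11}.
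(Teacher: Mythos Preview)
Your argument is correct and is the standard one: conjugating $S_H$ by $U$ gives a closed anti-linear involution with domain $U(H+iH)=K+iK$ that acts as $\xi+i\eta\mapsto\xi-i\eta$ for $\xi,\eta\in K$, hence equals $S_K$, and uniqueness of the polar decomposition finishes the job. The paper does not actually supply a proof of this lemma --- it merely cites \cite{Mor} --- so there is nothing to compare against, but your write-up is exactly what one would expect and the domain remark you flag is handled correctly.
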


\begin{lemma}\label{inc}{\rm \cite{L}.}
Let $H\subset \H$ be a standard subspace, and $K\subset H$ be  a closed, real linear subspace of $H$. 
If $\Delta_H^{it}K=K$, $\forall t\in\RR$, then $K$ is a standard subspace of $\K:= \overline{K+iK}$ and $\Delta_H |_K$  is the modular operator of $K$ on $\K$. 
Moreover, if $K$ is a cyclic subspace of $\H$, then $H=K$.
\end{lemma}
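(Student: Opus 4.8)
\noindent The plan is to identify the Tomita operator $S_K$ of $K$, computed inside $\K$, with a restriction of $S_H$, and then to read the modular operator off the polar decomposition. First I would dispatch the structural points: since $K+iK$ is dense in $\K$ by definition, $K$ is cyclic in $\K$, and it is separating because $K\cap iK\subseteq H\cap iH=\{0\}$, using that $H$ is standard; hence $K$ is standard in $\K$, with $S_K=J_K\Delta_K^{1/2}$ (all objects computed in $\K$). Next, $\Delta_H^{it}K=K$ together with $\Delta_H^{it}(iK)=i\Delta_H^{it}K=iK$ shows that the unitary $\Delta_H^{it}$ maps $K+iK$ onto itself, hence leaves $\K=\overline{K+iK}$ invariant; thus $\K$ reduces the one-parameter group $t\mapsto\Delta_H^{it}$, and therefore also reduces $\Delta_H$ and $\Delta_H^{\pm1/2}$. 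Writing $\Delta$ for the positive self-adjoint operator on $\K$ with $\Delta^{it}=\Delta_H^{it}\!\upharpoonright\!\K$, one has $\Delta^{\pm1/2}=\Delta_H^{\pm1/2}\!\upharpoonright\!\K$, and the assertion ``$\Delta_H|_K$ is the modular operator of $K$'' means precisely $\Delta=\Delta_K$.

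The heart of the matter is $S_K=S_H\!\upharpoonright\!(K+iK)$. For $\xi,\eta\in K\subseteq H$ one has $\xi+i\eta\in H+iH=\mathrm{dom}(S_H)$ and $S_H(\xi+i\eta)=\xi-i\eta\in K+iK\subseteq\K$; moreover the graph $\{(\xi+i\eta,\xi-i\eta):\xi,\eta\in K\}$ is the preimage of the closed set $K\times K$ under the continuous map $(a,b)\mapsto\big(\tfrac{a+b}{2},\tfrac{a-b}{2i}\big)$, so it is closed, and hence $S_H\!\upharpoonright\!(K+iK)$ is a closed, densely defined anti-linear involution on $\K$, i.e.\ it equals $S_K$. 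Using $S_H=J_H\Delta_H^{1/2}$, the reduction $\Delta_H^{1/2}\!\upharpoonright\!\K=\Delta^{1/2}$ and anti-unitarity of $J_H,J_K$, for every $\psi\in K+iK$ this yields the quadratic-form identity $\langle\psi,\Delta\psi\rangle=\|\Delta^{1/2}\psi\|^{2}=\|S_H\psi\|^{2}=\|S_K\psi\|^{2}=\|\Delta_K^{1/2}\psi\|^{2}=\langle\psi,\Delta_K\psi\rangle$, valid on $\mathrm{dom}(\Delta_K^{1/2})=K+iK\subseteq\mathrm{dom}(\Delta^{1/2})$. Moreover, since $\Delta_H^{it}$ commutes with $S_H$ (both $J_H$ and $\Delta_H^{1/2}$ commute with $\Delta_H^{it}$) and preserves $\mathrm{dom}(S_K)=K+iK$, the unitary $\Delta^{it}$ commutes with $S_K$, hence, taking adjoints in $\K$, also with $S_K^{*}$, and therefore with $\Delta_K=S_K^{*}S_K$.

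It remains to upgrade the equality of quadratic forms on $K+iK$ to the operator identity $\Delta=\Delta_K$, and this is the step I expect to require the most care, since by itself a form identity on a dense subspace yields only an inequality between the two operators. Using the commutation just established, $\Delta$ and $\Delta_K$ are commuting positive self-adjoint operators, so a joint spectral representation realizes $\K\cong L^{2}(X,\mu)$ with $\Delta,\Delta_K$ acting as multiplication by positive functions $a,b$; since $\chi_A\in\mathrm{dom}(\Delta_K^{1/2})$ for every measurable $A$ of finite measure with $\int_A b\,d\mu<\infty$, the form identity gives $\int_A a\,d\mu=\int_A b\,d\mu$ for all such $A$, and letting $A$ exhaust $X$ through sets on which $b$ is bounded forces $a=b$ a.e., i.e.\ $\Delta=\Delta_K$. (An alternative route to this last point would be to set up a KMS-type characterization of the modular group of a standard subspace and to verify it for $t\mapsto\Delta^{it}$ by restricting the KMS condition of $H$ to vectors of $K$.) The ``moreover'' part is then immediate: if $K$ is cyclic in $\H$ then $\K=\H$, so $K$ is standard in $\H$ with $\Delta_K=\Delta_H$ by the above; since $K\subseteq H$ gives $S_K\subseteq S_H$ while $\mathrm{dom}(S_K)=\mathrm{dom}(\Delta_K^{1/2})=\mathrm{dom}(\Delta_H^{1/2})=\mathrm{dom}(S_H)$, we conclude $S_K=S_H$, hence $K=H$.
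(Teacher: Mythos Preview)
The paper does not supply its own proof of this lemma; it is quoted from Longo's lecture notes \cite{L}. So there is no in-paper argument to compare against, and the question reduces to whether your proof stands on its own.

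Your argument is correct. The structural points (standardness of $K$ in $\K$, invariance of $\K$ under $\Delta_H^{it}$ and hence reduction of $\Delta_H$ and $\Delta_H^{1/2}$, closedness of $S_H\!\upharpoonright\!(K+iK)$ and its identification with $S_K$) are all sound. The step you rightly flag as the delicate one --- passing from the form identity $\|\Delta^{1/2}\psi\|=\|\Delta_K^{1/2}\psi\|$ on $K+iK=\mathrm{dom}(\Delta_K^{1/2})$ to the operator identity $\Delta=\Delta_K$ --- is handled correctly: from $\Delta^{it}K=K$ one gets (by Lemma~\ref{lem:sym} applied inside $\K$, or by your direct computation) that $\Delta^{it}$ commutes with $S_K$, hence with $S_K^*$ and $\Delta_K$, so $\Delta$ and $\Delta_K$ strongly commute; in a joint spectral realisation as multiplication by $a,b\geq 0$, testing the form identity against $g\cdot\chi_{\{b\leq n\}}$ for arbitrary $g\in L^2$ yields $a=b$ a.e.\ on each $\{b\leq n\}$, hence everywhere. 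The ``moreover'' clause then follows exactly as you say.

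Two remarks. First, the KMS alternative you mention at the end is in fact the more economical route and is the one typically taken in \cite{L}: since $K\subset H$, the boundary-value/KMS analyticity that characterises the modular group of $H$ restricts verbatim to pairs of vectors in $K$, and uniqueness of the modular group of a standard subspace then forces $\Delta_K^{it}=\Delta_H^{it}\!\upharpoonright\!\K$ without invoking the joint spectral theorem. Second, your proof implicitly uses separability (or $\sigma$-finiteness of the spectral measure) when passing to the $L^2(X,\mu)$ model; this is harmless in the setting of the paper but worth stating.
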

\begin{theoreme}{\rm \cite{L}.} \label{Borch}
Let $H\subset\H$ be a standard subspace, and $U(t)=e^{itP}$ be a one-parameter unitary group on $\H$ with {\bc a} generator $\pm P>0$, such that $U(t)H\subset H$, $\forall t\geq 0$. Then, \begin{equation}\label{eq:bor}\left\{\begin{array}{ll}
\Delta_H^{is}U(t)\Delta_H^{-is}= U(e^{\mp2\pi s}t)&\\J_{H}U(t)J_{H}=U(-t)\end{array}\right.\qquad \forall t,s\in\RR.\end{equation}
\end{theoreme}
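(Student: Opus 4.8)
The statement to prove is Theorem~\ref{Borch}, the Borchers theorem for standard subspaces. Here is how I would approach it.

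The plan is to reduce the problem to the classical Borchers theorem for von Neumann algebras, or — more in the spirit of the standard-subspace literature — to give a direct proof via analytic continuation exploiting the positivity of the generator. Let me sketch the direct route. Assume for definiteness that the generator $P>0$ (the case $-P>0$ follows by replacing $U(t)$ with $U(-t)$, which flips the sign in the exponent). The key object is the family of bounded operators
\beqa
F(t,s):=\Delta_H^{-is}U(t)\Delta_H^{is}, \non
\eeqa
and the goal is to identify it with $U(e^{-2\pi s}t)$. First I would record the two ingredients we are allowed to use: (i) $U(t)H\subset H$ for $t\geq 0$, hence by taking symplectic complements and using $U(t)'=U(-t)$ on $H'$ one also controls the action on $H'$; (ii) the spectrum condition $P\geq 0$, which will supply boundary values of analytic functions in a strip.

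The main technical step is an analyticity argument. For $\xi\in H$ and $\eta$ in a suitable dense domain, consider the function
\beqa
G(z):=\lan \eta, \Delta_H^{-i\bar z}\,U(t)\,\Delta_H^{iz}\,\xi\ran \non
\eeqa
Since $U(t)H\subset H$ for $t\geq 0$, the vector $U(t)\xi$ lies in $H$, and $\Delta_H^{iz}$ maps $H$ into the domain where $\Delta_H^{1/2}$ acts nicely; combined with the KMS-type analyticity of $s\mapsto \Delta_H^{is}$ on vectors in $H$ (one has $\Delta_H^{1/2}\xi = J_H\xi\in H'$ for $\xi\in H$, giving analyticity in the strip $\{0<\mrm{Im}\,z<1/2\}$), one gets a function analytic in a strip with controlled boundary values. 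On the other boundary, the positivity of $P$ gives analyticity of $t\mapsto U(t)$ into the upper half plane. Playing these two analyticities against each other — this is the genuinely delicate part, essentially a two-variable edge-of-the-wedge / Malgrange-type argument — forces the commutation relation to hold, first for the operators smeared appropriately and then, by a density and continuity argument, as the stated operator identity $\Delta_H^{is}U(t)\Delta_H^{-is}=U(e^{-2\pi s}t)$. I expect this interplay of the two analytic continuations, and the bookkeeping needed to make the domains match, to be the principal obstacle; everything else is soft.

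Once the first relation in (\ref{eq:bor}) is established, the statement about $J_H$ follows more easily. Differentiating or taking an appropriate limit $s\to$ boundary in the commutation relation, or alternatively arguing directly: from $\Delta_H^{is}U(t)\Delta_H^{-is}=U(e^{-2\pi s}t)$ and the fact that $S_H=J_H\Delta_H^{1/2}$ is an involution commuting appropriately with the $U(t)$ whose positivity has been used, one deduces $J_H U(t) J_H=U(-t)$. Concretely, I would use that $J_H H=H'$, that on $H'$ the semigroup condition reads $U(-t)H'\subset H'$ for $t\geq 0$, and that $J_H\Delta_H^{is}J_H=\Delta_H^{is}$; conjugating the already-proven first relation by $J_H$ and matching with the version of that relation for $H'$ pins down $J_HU(t)J_H=U(-t)$. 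Finally I would remark that this is exactly \cite{L}, so in the paper itself one would simply cite it rather than reproduce the argument — but the sketch above is the structure of the proof.
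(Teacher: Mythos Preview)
The paper does not prove this theorem: it is stated with the attribution \cite{L} and no argument is given, only the remark that it is a variant of Borchers' theorem \cite{Bo92, Flo} for standard subspaces. So there is no ``paper's own proof'' to compare against; your final sentence is exactly right --- in the paper one simply cites the result.

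As for the sketch itself: the overall architecture (analytic continuation in a strip, using KMS-analyticity of $\Delta_H^{iz}$ on $H$ together with the half-plane analyticity of $t\mapsto U(t)$ from $P\geq 0$, then a two-variable analytic argument to force the commutation relation) is indeed the shape of the standard proof in \cite{L} and \cite{Flo}. Your derivation of the $J_H$ relation from the $\Delta_H$ relation is also along the right lines. The only caveat is that the ``genuinely delicate part'' you flag is not merely bookkeeping: one needs a precise analytic-function identity (in Florig's version, a characterization of functions bounded and analytic on a strip satisfying a specific functional equation) rather than a generic edge-of-the-wedge statement, and your sketch does not indicate which such identity you would invoke. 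But since the paper treats the result as a black box, this is moot for the present purpose.
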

 \nin We note that the above  result is a variant of the Borchers theorem \cite{Bo92, Flo} for standard subspaces.

 The following three lemmas, {\bcc which} {\pc we could not find in {\bcc the} literature}, will be needed to analyze the subspaces $H^{(1)}(W)$ defined in (\ref{H-one}) below.   
They concern decompositions of standard subspaces w.r.t.  projections $E$ commuting with $S_H$.
Since $S_H$ is unbounded and not self-adjoint, we mean here that $E$ commutes with $J_{H}$ and bounded Borel functions of $\De_{H}$.  
If $\xi'\in D(S_{H}^*)=H'+iH'$ and $\xi\in D(S_{H})=H+iH$, then, for such $E$
\begin{align}
\lan \xi', S_{H}E\xi\ran&=\ov{\lan S_{H}^*\xi', E\xi\ran}=  \ov{\lan J_{H}\De^{-1/2}_{H}\xi', E\xi\ran}=  \lim_{n\to\infty} \lan \chi_n(\De^{-1/2}_{H}) {\bcc \De^{-1/2}_{H}} \xi', E J_{H}  \xi\ran\non\\
&= \lim_{n\to\infty} \lan \xi', E \chi_n(\De^{-1/2}_{H}){\bcc \De^{-1/2}_{H}} J_{H}  \xi\ran=  \lan \xi', E \De^{-1/2}_{H} J_{H}  \xi\ran= \lan \xi', E S_{H}  \xi\ran, \label{detailed-commutation}
\end{align}
where $\chi_n$ is the characteristic function of $[-n,n]$ and we made use of the fact that $\xi', J_{H}\xi\in D( \De^{-1/2}_{H})$ to control the 
limit $n\to\infty$.
\begin{lemma}\label{lem:ort1}
Let  $H\subset\H $ be a standard subspace and $E=E^2=E^*$ be a
projection  commuting with $S_H$. Then $H=EH \oplus (1-E)H$. Furthermore,  $EH$ and $(1-E)H$ are standard in 
$E\H$ and $(1-E)\H$, respectively.
\end{lemma}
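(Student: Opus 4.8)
The plan is to establish the three assertions in turn: (i) the direct sum decomposition $H = EH \oplus (1-E)H$, (ii) that $EH$ and $(1-E)H$ are closed real subspaces, and (iii) that they are standard in $E\H$ and $(1-E)\H$ respectively. For (i), I would first observe that since $E$ commutes with $J_H$ and with bounded Borel functions of $\Delta_H$ (in particular with $\Delta_H^{it}$), the computation \eqref{detailed-commutation} shows that $E$ maps $D(S_H)=H+iH$ into itself and commutes with $S_H$ there. Hence for $\xi \in H$ we have $S_H(E\xi) = E(S_H\xi) = E\xi$, so $E\xi \in H$; likewise $(1-E)\xi \in H$. Thus $H = EH + (1-E)H$, and since $EH \subset E\H$, $(1-E)H \subset (1-E)\H$ are in orthogonal (complex, hence also real-orthogonal) summands, the sum is direct and orthogonal: $H = EH \oplus (1-E)H$.

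For (ii), closedness of $EH$: if $E\xi_n \to \psi$ with $\xi_n \in H$, then $\psi = E\psi \in E\H$, and writing $\xi_n = E\xi_n + (1-E)\xi_n$ one needs a convergent choice — but this is immediate once we know $EH = H \cap E\H$. That identity holds: $\subset$ is clear, and if $\eta \in H \cap E\H$ then $\eta = E\eta \in EH$. Since $H$ is closed and $E\H$ is closed, $EH = H \cap E\H$ is closed; symmetrically for $(1-E)H$.

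For (iii), I must check that $EH$ is cyclic and separating in $E\H$. Separating: if $\zeta \in EH \cap i(EH)$, then $\zeta \in H \cap iH = \{0\}$ since $EH \subset H$, so $EH$ is separating (this uses only $EH \subset H$). Cyclic: I need $EH + i\,EH$ dense in $E\H$. From $H + iH$ dense in $\H$ and $H = EH \oplus (1-E)H$, we get $H + iH = (EH + iEH) \oplus ((1-E)H + i(1-E)H)$ with the two summands lying in $E\H$ and $(1-E)\H$; applying $E$ to a dense subset of $\H$ gives a dense subset of $E\H$, and $E(H+iH) = EH + iEH$ (using that $E$ is linear over $\RR$ and $iE = Ei$). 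Hence $EH + iEH$ is dense in $E\H$. The same argument applies verbatim to $(1-E)H$ in $(1-E)\H$.

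**Main obstacle.** The only genuinely delicate point is the domain/commutation bookkeeping: making sure that "$E$ commutes with $S_H$" in the precise sense stated (commuting with $J_H$ and with bounded Borel functions of $\Delta_H$) really forces $E(H+iH) \subset H+iH$ and $ES_H\xi = S_HE\xi$ on that domain. This is exactly what the displayed computation \eqref{detailed-commutation} is set up to deliver — testing $\langle \xi', S_H E\xi\rangle$ against $\xi' \in D(S_H^*)$ and pushing $E$ through — together with the uniqueness of the Tomita operator (Proposition~\ref{prop:11}) applied to $EH \oplus (1-E)H$ to re-identify $S_H$ and hence confirm $EH$ has Tomita operator $S_H|_{E\H}$, so that its modular data are the restrictions of $\Delta_H$, $J_H$. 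Everything else is routine linear algebra with closed subspaces.
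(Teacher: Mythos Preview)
Your proof is correct and takes essentially the same approach as the paper: both argue that $E$ commutes with $S_H$ on $D(S_H)$ via computation~\eqref{detailed-commutation}, so that $E\xi\in\ker(1-S_H)=H$ for every $\xi\in H$, which yields the decomposition. The paper simply declares the standardness assertion ``obvious'' whereas you spell out the details (the identification $EH=H\cap E\H$ for closedness and $E(H+iH)=EH+iEH$ for cyclicity); your closing remark about $S_{EH}=S_H|_{E\H}$ is not strictly needed for this lemma but correctly anticipates what the paper records in Lemma~\ref{lem:ort2}.
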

\begin{proof}
$H$ is defined to be the kernel of $1-S_H$. Now since $E$  commutes with $S_H$,
for every $\xi\in H$, $E\xi \in \mathrm{Ker} (1-S_H)$, thus $E\xi \in H$   (cf. computation (\ref{detailed-commutation}) above). 
{\bc As the same argument applies to $(1-E)$} and  $\xi=E\xi+(1-E)\xi$,  we have the claim. The last statement is obvious. \end{proof}
\vspace{-0.7cm}
\begin{lemma}\label{lem:ort2}
Let  $H\subset\H $ be a standard subspace and $E=E^2=E^*$ be a projection  commuting with $S_H$. Then $H'= EH'\oplus  (1-E)H'$. Furthermore, $EH'$ and $(1-E)H'$ are standard in $E\H$ and $(1-E)\H$, respectively.
\end{lemma}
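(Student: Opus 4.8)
The plan is to derive Lemma~\ref{lem:ort2} from Lemma~\ref{lem:ort1} by applying the latter to the symplectic complement $H'$ in place of $H$. Recall from the discussion preceding Proposition~\ref{prop:11} that, since $H$ is standard, $H'$ is standard as well, with Tomita operator $S_{H'}=S_H^*$, modular conjugation $J_{H'}=J_H$, and modular operator $\De_{H'}=\De_H^{-1}=J_H\De_H J_H$. So the only thing to check before quoting Lemma~\ref{lem:ort1} is that the projection $E$, which by hypothesis commutes with $S_H$ in the sense spelled out before that lemma (i.e.\ with $J_H$ and with every bounded Borel function of $\De_H$), also commutes with $S_{H'}$ in the same sense.

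The commutation with $J_{H'}=J_H$ is immediate. For the functional calculus, the key observation is that a bounded Borel function $g$ of $\De_{H'}=\De_H^{-1}$ is again a bounded Borel function of $\De_H$: writing $\tilde g(x):=g(x^{-1})$, which is bounded and Borel on $(0,\infty)$ (and $\De_H$ is injective, so its spectrum is carried by $(0,\infty)$), one has $g(\De_{H'})=g(\De_H^{-1})=\tilde g(\De_H)$. Since $E$ commutes with $\tilde g(\De_H)$ by hypothesis, it commutes with $g(\De_{H'})$. Hence $E$ commutes with $J_{H'}$ and with the bounded Borel functions of $\De_{H'}$, which is exactly the hypothesis of Lemma~\ref{lem:ort1} for the standard subspace $H'$.

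Applying Lemma~\ref{lem:ort1} with $(H,S_H)$ replaced by $(H',S_{H'})$ now gives $H'=EH'\oplus(1-E)H'$, with $EH'$ and $(1-E)H'$ standard in $E\H$ and $(1-E)\H$ respectively, which is the assertion. I do not expect any genuine obstacle here: the entire content is the bookkeeping that the commutation condition is stable under $S_H\mapsto S_H^*$, and the only point requiring a word of care is the passage from Borel functions of $\De_H$ to Borel functions of $\De_H^{-1}$, isolated above. (Alternatively, one could argue directly by repeating computation~(\ref{detailed-commutation}) with the roles of $H$ and $H'$ interchanged, concluding that $E$ maps $\Ker(1-S_H^*)=H'$ into itself and likewise $1-E$; but invoking Lemma~\ref{lem:ort1} is the cleaner route.)
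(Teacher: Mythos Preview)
Your proposal is correct and follows essentially the same approach as the paper: both verify that $E$ commutes with $S_{H'}=S_H^*$ (equivalently with $J_{H'}=J_H$ and the Borel calculus of $\De_{H'}=\De_H^{-1}$) and then invoke Lemma~\ref{lem:ort1} with $H'$ in place of $H$. The paper additionally records the identification $(EH)'=EH'$ on $E\H$ and derives the standardness of $EH'$ from that of $EH$ via this duality, whereas you obtain it directly from Lemma~\ref{lem:ort1} applied to $H'$; your route is marginally more economical, but the substance is identical.
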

\begin{proof}
If $E\in\B(\H)$ is a projection commuting with $S_H$, then $E$ also commutes
with $S_{H'}=J_H\Delta_H^{-1/2}=\Delta^{1/2}_H {J_H}$ and
the decomposition $H'= EH'\oplus  (1-E)H'$ follows as in Lemma \ref{lem:ort1}. {\bc We note that} $S_H$ and $S_{H'}$ and their polar decompositions decompose through $E$. {\bc (Clearly, $S_{EH}=S_{H} |_{E\hil}$ and $S_{(1-E)H}=S_{H} |_{(1-E)\hil}$)}. {\bc Consequently}, $(EH)'=EH'$ and $((1-E)H)'=(1-E)H'$  on $E\H$ and $(1-E)\H$,  respectively. {\bc Since} $EH$ and $(1-E)H$ are cyclic and separating in $E\H$ and $(1-E)\H$, {\bc the claim follows}. 
\end{proof} 
An immediate consequence is:
\begin{lemma}\label{lem:cons}
Let $H,K\subset\H$ standard subspaces  and $E$ a projection satisfying the assumptions of Lemma~\ref{lem:ort2}  w.r.t.\ $ H$ and $K$. Assume that $K\subset H'$. Then $EK\subset EH'$ and  $(1-E)K\subset (1-E)H'$.
\end{lemma}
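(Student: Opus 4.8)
The claim to prove is Lemma~\ref{lem:cons}: given standard subspaces $H,K\subset\H$ and a projection $E$ commuting with $S_H$ and with $S_K$, if $K\subset H'$ then $EK\subset EH'$ and $(1-E)K\subset (1-E)H'$.

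The plan is to reduce everything to the symplectic-complement characterization $H'=(iH)^{\perp_\RR}$ together with the orthogonal decompositions already established in Lemmas~\ref{lem:ort1} and~\ref{lem:ort2}. First I would note that since $E$ commutes with $S_K$, Lemma~\ref{lem:ort1} applied to $K$ gives the orthogonal decomposition $K=EK\oplus(1-E)K$, and since $E$ commutes with $S_H$, Lemma~\ref{lem:ort2} applied to $H$ gives $H'=EH'\oplus(1-E)H'$, both being orthogonal direct sums of closed real subspaces sitting inside $E\H$ and $(1-E)\H$ respectively. So it suffices to show that the $E$-component of $K$ lands in the $E$-component of $H'$.

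Next I would take $\xi\in K$ and write $\xi=E\xi+(1-E)\xi$ with $E\xi\in EK\subset K$ (by Lemma~\ref{lem:ort1}) and similarly $(1-E)\xi\in K$. The hypothesis $K\subset H'$ means $\mathrm{Im}\lan \zeta,\eta\ran=0$ for all $\zeta\in K$, $\eta\in H$ (equivalently $\zeta\perp_\RR iH$). Now take any $\eta\in H$; by Lemma~\ref{lem:ort1} for $H$ we have $E\eta\in EH\subset H$, hence $\mathrm{Im}\lan E\xi, \eta\ran=\mathrm{Im}\lan \xi, E\eta\ran=0$ using self-adjointness of $E$ and the fact that $E\xi\in K\subset H'$ is symplectically orthogonal to $E\eta\in H$. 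Wait — more directly: $\mathrm{Im}\lan E\xi,\eta\ran=\mathrm{Im}\lan \xi, E\eta\ran$, and $E\eta\in H$ while $\xi\in K\subset H'$, so this vanishes. Thus $E\xi\in H'$, and since $E\xi\in E\H$, we get $E\xi\in H'\cap E\H$. It remains to identify $H'\cap E\H$ with $EH'$: the decomposition $H'=EH'\oplus(1-E)H'$ from Lemma~\ref{lem:ort2} is orthogonal with $EH'\subset E\H$ and $(1-E)H'\subset(1-E)\H$, so any element of $H'$ lying in $E\H$ must equal its $E$-component, i.e. lies in $EH'$. Therefore $E\xi\in EH'$, proving $EK\subset EH'$; the argument with $1-E$ in place of $E$ is identical.

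I don't anticipate a serious obstacle here — the lemma really is an "immediate consequence" as the text says. The one point requiring a little care is making sure that the orthogonal decomposition of $H'$ is genuinely an \emph{orthogonal} (not merely algebraic) direct sum so that $H'\cap E\H=EH'$; this is guaranteed because $E$ is an orthogonal projection, so $E\H\perp(1-E)\H$ in the complex (hence also real) inner product, and Lemma~\ref{lem:ort2} already phrases the decomposition this way. A second minor check is that $E\xi$ actually lies in $K$ so that the hypothesis $K\subset H'$ can be invoked — but that is exactly the content of Lemma~\ref{lem:ort1} applied to $K$. So the proof is a three-line bookkeeping argument combining Lemmas~\ref{lem:ort1} and~\ref{lem:ort2} with the definition of the symplectic complement.

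\begin{proof}
By Lemma~\ref{lem:ort1} applied to $K$ we have $EK\subset K$ and $(1-E)K\subset K$. Let $\xi\in K$; then $E\xi\in K\subset H'$. For any $\eta\in H$, Lemma~\ref{lem:ort1} applied to $H$ gives $E\eta\in H$, so, since $E=E^*$ and $E\xi\in K\subset H'$,
\[
\mrm{Im}\lan E\xi,\eta\ran=\mrm{Im}\lan \xi,E\eta\ran=0,
\]
the last equality because $\xi\in K\subset H'$ and $E\eta\in H$. Hence $E\xi\in H'$, and clearly $E\xi\in E\H$. By Lemma~\ref{lem:ort2}, $H'=EH'\oplus(1-E)H'$ is an orthogonal decomposition with $EH'\subset E\H$ and $(1-E)H'\subset(1-E)\H$; since $E\xi\in H'$ lies in $E\H$, it must coincide with its $E$-component, i.e. $E\xi\in EH'$. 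As $\xi\in K$ was arbitrary, $EK\subset EH'$. Replacing $E$ by $1-E$ (which also commutes with $S_H$ and $S_K$) yields $(1-E)K\subset(1-E)H'$.
\end{proof}
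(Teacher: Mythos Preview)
Your proof is correct and follows essentially the same route as the paper: use Lemma~\ref{lem:ort1} for $K$ to get $E\xi\in K\subset H'$, then use the orthogonal splitting $H'=EH'\oplus(1-E)H'$ from Lemma~\ref{lem:ort2} to conclude $E\xi\in EH'$. One small remark: once you have written ``$E\xi\in K\subset H'$'' the subsequent computation with $\mrm{Im}\lan E\xi,\eta\ran$ is redundant---you have already obtained $E\xi\in H'$ directly, so you can delete that middle step and pass straight to the decomposition of $H'$, which is exactly what the paper does.
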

\begin{proof}
Since $H'=EH'\oplus (1-E)H'$ and $K=EK\oplus (1-E)K$, for every $\xi\in K$ we have $E\xi\in K$, thus $E\xi\in EH'$. We conclude that $EK\subset EH'$ and analogously $(1-E)K\subset (1-E)H'$.
\end{proof}
\vspace{-0.7cm}
\subsection{One particle nets}\label{One-particle-nets-sub}

Let $U$ be a unitary representation  of the Poincar\'e group ${\pc\Poi}$ on  a Hilbert space $\H$. 
We shall call a \textbf{$U$-covariant (or Poincar\'e covariant) net of standard subspaces on wedges}  a map $$H:\W\ni W\longmapsto H(W)\subset\H,$$
associating to every wedge in $\RR^{1+3}$ a closed real linear subspace of $\H$, satisfying the following properties\footnote{{\bc The notation $W_1, W_2$ in this definition should not be confused with the standard wedges in the direction of particular axes, as used in Sect.~\ref{local-nets}.} }:
 \begin{enumerate}
\item \textbf{Isotony:}  If $W_1,W_2\in\W$ and $W_1\subset W_2$ then $H(W_1)\subset H(W_2)$;
\item \textbf{Poincar\'e  covariance:}   $U(g)H(W)=H(gW),$ $\forall g\in{\pc\Poi},\,\forall W\in\W$;
\item \textbf{Positivity of the energy:} the joint spectrum of translations in $U$ is contained in the forward lightcone $V_+=\{\, p\in\RR^{1+3}:  p^0\geq 0,  p^2=(p,p)\geq0\}$;
\item \textbf{Cyclicity:} if $W\in \W$, then  $H(W)$ is a cyclic subspace of $\H$;
\item \textbf{Locality:} if $W_1\subset W_2'$ then $  H(W_1)\subset H(W_2)'.$
\end{enumerate}

We shall indicate a $U$-covariant net $H$ of standard subspaces on wedges satisfying 1.-5.  with the couple $(U,H).$
This is the setting in which we are going to study the following property:
\begin{itemize}
\item[6.] \textbf{Bisognano-Wichmann property}: if $W\in\W$, then $U(\lambda_W(2\pi t))=\Delta^{-it}_{H(W)},$ $\forall t\in\RR;$
\end{itemize}

\nin The next property is a completeness property for a model  in the sense of the  causal structure and, by Lemma \ref{inc},  is a consequence of the locality  and the B-W properties (see e.g. \cite{Mor}).
\begin{itemize}
\item[7.] \textbf{ Duality property}: if $W\in\W$, then $H(W)'=H(W')$.
\end{itemize}
 Denote by $P^0, \mathbf{P}$ be the generators of translations in the representation $U$ and $M=\sqrt{(P^0)^2-\mathbf{P}^2}$ the resulting mass operator.
Then Theorem~\ref{Borch} has the following corollary, which is well known in the context of nets of von Neumann algebras.
\begin{corollary}\label{Borchers-corollary}  For any wedge $W$, the mass operator $M$ commutes strongly\footnote{{\bc Taking anti-linearity of $J_{H(W)}$ into account, commutation with \emph{real} bounded Borel functions of $M$ is understood here.}} with $\De_{H(W)}$ and $J_{H(W)}$. 
Its {\bc real} bounded {\bc Borel} functions commute weakly with $S_{H(W)}$ on domains specified as in (\ref{detailed-commutation}).
\end{corollary}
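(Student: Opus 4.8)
The plan is to derive Corollary~\ref{Borchers-corollary} from the variant of the Borchers theorem, Theorem~\ref{Borch}, by exploiting the fact that the translation generators preserve the standard subspace $H(W)$ in the directions pointing into the wedge. First I would recall that for a wedge $W$ one can pick lightlike (or timelike) directions $e$ such that the one-parameter translation group $U(te)=e^{it(P\cdot e)}$ maps $W$ into itself for $t\geq 0$, hence by Isotony and Poincar\'e covariance of the net of standard subspaces $U(te)H(W)\subset H(W)$ for all $t\geq 0$. By the spectrum condition the generator $P\cdot e$ is positive (or one can arrange $-P\cdot e>0$ for the opposite edge direction), so Theorem~\ref{Borch} applies and gives the commutation relations $\Delta_{H(W)}^{is}U(te)\Delta_{H(W)}^{-is}=U(e^{\mp 2\pi s}te)$ and $J_{H(W)}U(te)J_{H(W)}=U(-te)$.

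Next I would combine these relations for a family of directions $e$ that spans $\RR^{1+3}$ — for the standard wedge $W_1$, for instance, the four lightlike vectors associated to the two edges of $W_1$ and its causal complement, or equivalently the generators $P^0\pm P^1$, $P^2$, $P^3$ obtained by also using the translations along the edge of the wedge (which preserve $H(W)$ in both time directions, so both $\pm(P\cdot e)>0$ cases of Theorem~\ref{Borch} are available). From the resulting relations one reads off how $\Delta_{H(W)}^{is}$ and $J_{H(W)}$ act on each translation generator: conjugation by $\Delta_{H(W)}^{is}$ rescales the lightlike components and fixes the transverse ones, while $J_{H(W)}$ reflects the wedge plane. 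In particular the combination $M^2=(P^0)^2-\mathbf P^2$, being built from these generators in a way invariant under the boost rescaling of the lightlike coordinates and under the wedge reflection, is left invariant: $\Delta_{H(W)}^{is}M^2\Delta_{H(W)}^{-is}=M^2$ and $J_{H(W)}M^2 J_{H(W)}=M^2$ (the latter using anti-linearity of $J_{H(W)}$, which is why only \emph{real} Borel functions are asserted to commute). Since $M\geq 0$ is the positive square root, strong commutation of $M$ with $\Delta_{H(W)}$ and $J_{H(W)}$ follows, which is exactly the first assertion.

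For the second assertion — that real bounded Borel functions of $M$ commute weakly with $S_{H(W)}$ on the domains of (\ref{detailed-commutation}) — I would simply feed the already-established strong commutation of such $E:=\varphi(M)$ with $J_{H(W)}$ and with bounded Borel functions of $\Delta_{H(W)}$ into the computation (\ref{detailed-commutation}): that display shows that any bounded operator $E$ commuting with $J_H$ and with bounded Borel functions of $\Delta_H$ automatically satisfies $\lan\xi',S_{H(W)}E\xi\ran=\lan\xi',ES_{H(W)}\xi\ran$ for $\xi\in D(S_{H(W)})$, $\xi'\in D(S_{H(W)}^*)$. Thus the second statement is a formal consequence of the first one together with (\ref{detailed-commutation}).

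The main obstacle is the first step: carefully justifying that $U(te)H(W)\subset H(W)$ for the relevant half-lines of translations — this needs that $W+te\subset W$ for $t\geq 0$ together with Isotony, but one must be slightly careful because $H$ is only defined on wedges, not on general regions, so one cannot literally invoke $H(W+te)\subset H(W)$; instead one uses Poincar\'e covariance $U(te)H(W)=H(W+te)$ and then that $W+te$ is again a wedge contained in $W$, so Isotony gives $H(W+te)\subset H(W)$. Once this inclusion and the sign of the relevant generator are pinned down, the rest is bookkeeping with Theorem~\ref{Borch} and the elementary observation that $M^2$ is a joint invariant of the wedge's little group of boosts and its reflection.
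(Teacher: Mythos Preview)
Your approach is essentially the same as the paper's: work with the standard wedge $W_1$, use the lightlike translations $a_\pm(t)=(\pm t,t,0,0)$ to invoke Theorem~\ref{Borch}, handle the transverse directions separately, observe that $M^2=-(P_+P_-+P_2^2+P_3^2)$ is invariant under the resulting relations, and then pass to a general wedge by conjugating with a Poincar\'e element via Lemma~\ref{lem:sym}. The second assertion is, as you say, immediate from~(\ref{detailed-commutation}).

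There is one genuine slip. For the edge translations (generators $P_2,P_3$) you write that they ``preserve $H(W)$ in both time directions, so both $\pm(P\cdot e)>0$ cases of Theorem~\ref{Borch} are available''. But Theorem~\ref{Borch} requires the generator to be semi-bounded, and $P_2,P_3$ are not: neither $P_2>0$ nor $-P_2>0$ holds. The fact that the inclusion $U(te)H(W)\subset H(W)$ holds for all $t\in\RR$ does not manufacture a positivity hypothesis. What you actually need here is Lemma~\ref{lem:sym}: since the edge translations fix $W_1$, covariance gives $U(te)H(W_1)=H(W_1)$ for all $t$, and then Lemma~\ref{lem:sym} yields that $U(te)$ commutes with $\Delta_{H(W_1)}$ and $J_{H(W_1)}$, hence so do $P_2,P_3$ (strongly). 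This is exactly what the paper does. Once that is fixed, your argument goes through; you should also state explicitly the reduction from a general wedge $W$ to $W_1$ via $W=gW_1$ and Lemma~\ref{lem:sym}, which the paper spells out at the end.
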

\begin{proof} Consider the wedge $W_1$, {\bc defined as in  Sect.~\ref{local-nets}}, and the associated standard subspace $H(W_1)$. Translations in {\bc direction of the axes $x_2$ and $x_3$}  fix $W_1$. In particular the generators of the associated translation group $P_2$ and $P_3$, respectively, commute  strongly with 
$\Delta_{H(W_1)}$ and $J_{H(W_1)}$ by Lemma~\ref{lem:sym}. Lightlike translations of the form $a_\pm(t)=(\pm t, t, 0,0)$ with $t\geq0$ have  generators $P_\pm:=( \pm P_0-P_1)$ {\bc s.t. $\pm P_{\pm}\geq 0$}  and $U(a_\pm (t))H(W_1)\subset H(W_1)$ for $t\geq0$. By the Borchers theorem  for standard subspaces (Theorem~\ref{Borch}) we have that $U(a_\pm)$ have the commutation relations as in equation \eqref{eq:bor}:
$$\Delta_{H(W_1)}^{is}U(a_\pm(t))\Delta_{H(W_1)}^{-is}=U(a_\pm(e^{\mp 2\pi s}t))\Rightarrow{\pc\Delta_{H(W_1)}^{is}f(P_\pm)\Delta_{H(W_1)}^{-is}=f(e^{\mp 2\pi s}P_\pm)}$$
$$J_{H(W_1)}U(a_\pm(t))J_{H(W_1)}=U({\bcc a}_\pm(-t))\Rightarrow J_{H(W_1)}{f(P_\pm)^{{\bcc *}}} J_{H(W_1)}=f(P_\pm),$$
where $f$ is any bounded Borel   function. {\bcc The implications above follow by approximating $f$ pointwise with Schwartz-class functions
(which gives strong convergence of the corresponding operators) and using the Fourier transform.  Now $P^2=M^2=-(P_+P_-+P^2_2+P_3^2)$ and for any real Borel function $g$ it is easy to check, using the above relations, that
$g(M^2)$ commutes with $\De_{H(W)}$ and $J_{H(W)}$. Indeed, by approximating $g$ pointwise by Schwartz-class functions, applying the Fourier
transform and using that $P_2,P_3$ commute strongly with $\De_{H(W)}$ and $J_{H(W)}$ it suffices to verify that   
\beqa
\Delta_{H(W_1)}^{is}e^{-i P_+P_- t} \Delta_{H(W_1)}^{-is}=e^{-i P_+P_- t}, \quad  J_{H(W_1)}e^{-i P_+P_- t} J_{H(W_1)}=e^{i P_+P_- t}. \non
\eeqa
This is achieved by approximating $e^{-i P_+P_- t}$ pointwise by linear combinations of expressions of the form $ f_+(P_+)f_-(P_-)$, where $f_+,f_-$
are bounded Borel functions,   and applying the relations above.
}

For a general wedge $W$, let $g\in\mathcal P_+^\uparrow$ s.t. $W=gW_1$. {\bc Then},   by Lemma~\ref{lem:sym},  $J_{H(W)}= U(g)J_{H(W_1)}U(g)^*$,  $\De_{H(W)}= U(g)\De_{H(W_1)}U(g)^*$ and thus  $S_W=U(g)S_{H(W_1)}U(g)^*$. Clearly $P^2=U(g)P^2U(g)^*$, thus $P^2$ commutes with $J_{H(W)}, \De_{H(W)}, S_{H(W)}$ for every $W\in\W$ in the same sense as discussed above. \end{proof}

\subsection{Induced representations:  the  Poincar\'e  group and  its sub-groups}\label{masslessrep}

 Our group theoretic considerations in the remaining part of Sect.~\ref{Preliminaries-section} and in Sect.~\ref{modularity-sect} are based on the  
\textbf{standing assumption} that all the representations of topological groups on Hilbert spaces are strongly continuous.

Let $G$ be a locally compact group, $N$ a nontrivial closed normal abelian subgroup and $H$ another closed subgroup such that 
$G=N\rtimes H${\footnote{We warn the reader, that the letter $H$, used earlier for nets of standard subspaces, is now used for groups. As we will not consider nets of standard subspaces in the remaining part of Sect.~\ref{Preliminaries-section}, there is no risk of confusion.}}.
 Assume that the action of $G$ on $\hat N$, the dual group of $N$, obtained by conjugation, is regular (cf. \cite{foll} Sect.~6.6 and Definition~\ref{regular-definition}). Let $p\in\hat N$, $\Omega_p$  be the orbit under the $G$-dual action\footnote{\pc $\chi_{g^{-1} p}(x)=\chi_p(gxg^{-1})$.}, with $x\in N$, $p\in\hat N$ and $g\in G$, $\Stab_p$ and $\overline{\Stab}_p$ be the stabilizers of the point $p$ under the action of $H$ and $G$. $\Stab_p$ is called
the \emph{little group}.
{\pc Let $\chi_p$ be the character associated to $p\in\hat N$.}

Every unitary irreducible representation of $G$  is obtained by induction in the following way  (see e.g.  \cite{foll} Sect. 6)
\begin{equation}\label{eq:ind}
\Ind_{\overline{\Stab}_p}^G ({\pc\chi_p}\cdot V),
\end{equation} 
where $V$ and  ${\pc \chi_p}\,\cdot \,V$ are unitary representation{\bcc s} of the little group $\Stab_{p}$ and of $\overline{\Stab}_p$, respectively, and the  following proposition holds:

\begin{proposition}\cite{foll} \label{Foll}
Let $G=N\rtimes H$ as above. Every unitary {\pc irreducible} representation of $G$ is equivalent to one of the form \eqref{eq:ind}. Furthermore $\Ind_{\overline{\Stab}_p}^G ({\pc \chi_p}\cdot V)$ and $\Ind_{\overline{\Stab}_q}^G ({\pc \chi_q}\cdot V')$ are equivalent if and only if $p$ and $q$ belongs to the same orbit, say ${\pc p=g\, q}$, and $V$ and {\pc$V'\circ ad_{g^{-1}}$} are equivalent representations of $\Stab_p$ {for some $g\in G$}. 
\end{proposition}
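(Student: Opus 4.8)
The plan is to carry out the Mackey machine for the regular semidirect product $G=N\rtimes H$, exactly as in \cite[Sect.~6.6]{foll}; here I only outline the steps. Fix an irreducible unitary representation $\pi$ of $G$ on a Hilbert space $\mathcal K$ and restrict it to the abelian normal subgroup $N$. By the SNAG theorem there is a unique regular projection-valued measure $E$ on $\hat N$ with $\pi(x)=\int_{\hat N}\chi_q(x)\,dE(q)$ for $x\in N$. Covariance of $\pi$ then yields $\pi(h)E(S)\pi(h)^{*}=E(h\cdot S)$ for $h\in H$ and every Borel set $S\subset\hat N$, so that $(\pi,E)$ together with $\pi|_H$ is a system of imprimitivity for $G$ over the $G$-space $\hat N$.

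The decisive step is to show that irreducibility of $\pi$ forces $E$ to be concentrated on a single orbit. Here the regularity assumption enters: it guarantees that the orbit space $\hat N/G$ is countably separated, so the measure class of $E$, which is quasi-invariant and ergodic under $G$ by irreducibility of $\pi$, is carried by one orbit $\Omega_p\cong G/\overline{\Stab}_p$. I would then invoke Mackey's imprimitivity theorem: a system of imprimitivity over the transitive $G$-space $G/\overline{\Stab}_p$ is unitarily equivalent to one induced from a representation $\sigma$ of the stabilizer $\overline{\Stab}_p$, unique up to unitary equivalence, so $\pi\cong\Ind_{\overline{\Stab}_p}^G\sigma$.

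It remains to identify $\sigma$ and to handle irreducibility and equivalence. Restricting the induced picture back to $N$ shows that $\sigma|_N$ is supported at the single point $p$, i.e.\ $\sigma|_N$ is a multiple of $\chi_p$; since $p$ is fixed by $\Stab_p$, the character $\chi_p$ extends to a character of $\overline{\Stab}_p=N\rtimes\Stab_p$, hence $\sigma=\chi_p\cdot V$ for a representation $V$ of the little group $\Stab_p$ — and, $\chi_p$ extending genuinely, no Mackey cocycle obstruction arises, which is precisely why $N$ is taken abelian and normal. Mackey's intertwining-number criterion then gives that $\Ind_{\overline{\Stab}_p}^G(\chi_p\cdot V)$ is irreducible if and only if $V$ is, which establishes the first assertion. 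For the equivalence statement, given an intertwiner between $\Ind_{\overline{\Stab}_p}^G(\chi_p\cdot V)$ and $\Ind_{\overline{\Stab}_q}^G(\chi_q\cdot V')$, restricting to $N$ and comparing the associated projection-valued measures forces $\Omega_p=\Omega_q$, i.e.\ $p=g\cdot q$ for some $g\in G$; conjugating the second representation by $U(g)$ carries it to $\Ind_{\overline{\Stab}_p}^G(\chi_p\cdot(V'\circ\mathrm{ad}_{g^{-1}}))$, and the uniqueness clause of the imprimitivity theorem — after stripping off the common character $\chi_p$ — shows this is equivalent to $\Ind_{\overline{\Stab}_p}^G(\chi_p\cdot V)$ precisely when $V\cong V'\circ\mathrm{ad}_{g^{-1}}$ as representations of $\Stab_p$.

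The main obstacle is the second step: proving that irreducibility concentrates $E$ on one orbit and then invoking the imprimitivity theorem, since this rests on the measure-theoretic structure theory packaged into the regularity assumption and on Mackey's imprimitivity theorem itself, which is the deep analytic input. Everything else is routine bookkeeping with induced representations, and all of it is carried out in detail in \cite[Sect.~6.6]{foll}, so in the paper it suffices to cite that reference.
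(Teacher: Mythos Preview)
Your outline of the Mackey machine is correct and follows exactly the approach of \cite[Sect.~6.6]{foll}. The paper itself gives no proof at all: Proposition~\ref{Foll} is stated with the attribution \cite{foll} and used as a black box, so your concluding remark that ``in the paper it suffices to cite that reference'' is precisely what the authors do.
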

{\pc If $W=\Ind_{\overline{\Stab}_p}^G (\chi_p\cdot V)$ is an irreducible representation of $G$ then the spectral measure of $W|_N$ is concentrated on the orbit $o=Gp$ (cf. Proposition 6.36 \cite{foll}).}

\nin References for general induced representations are for instance  \cite{foll, kir, Bar}. \\

{\bf The Poincar\'e group.} {\pc The Minkowski space $\RR^{1+3}$ is the 4-dimensional real vector space endowed with the  metric tensor $\eta=\mathrm{diag}(1,-1,-1,-1)$.
The Lorentz group $\L$ is the group of linear transformations $L$ s.t. $L^T\eta L=\eta$.} Let $\L_+^\uparrow$  {\bc be} the connected component of the identity of the Lorentz group and $\widetilde \L_+^\uparrow=\SLC$ its universal covering group. Let $\tPoi=\RR^{1+3} \rtimes \SLC$ 
be the universal covering of the Poincar\'e group  $\Poi=\RR^{1+3} \rtimes \L_+^\uparrow$ (the inhomogeneous symmetry group of $\RR^{1+3}$)
and $\Lambda$ be the covering map.  First of all, we recall that to any 4-vector is 1-1 associated a $2\times2$ matrix {\pc
$$ x_\sim= x_0\,\eins+\sum_{i=1,2,3}x_i\,\sigma_i=\left(\begin{array}{cc}
 x_0+x_3&x_1-ix_2\\x_1+ix_2&x_0-x_3
 \end{array} \right),$$ where $\sigma_i $ are the Pauli matrices. Real vectors define Hermitian matrices.
 If $A\in\SLC$, and $\Lambda: \tPoi\rightarrow \Poi$  is the covering homomorphism, then the Poincar\'e action is ruled by the following relation
\beqa 
({\Lambda(A)x})_\sim=A x_\sim A^*. \label{action-p}
\eeqa

Let  $U$ be a unitary strongly continuous representation of the Poincar\'e group, then the representation of an $x$-translation has the form $U(x)=e^{iPx}$ where $P$ is a vector of four self-adjoint operators and $Px$ is obtained through the Minkowski product.  Every $g\in\widetilde \L_+^\uparrow$ acts on an $x$-translation   by  the adjoint action, namely $gxg^{-1}=\Lambda(g)x$.
Let $\mathrm{Sp}\,(P)$ be the {\bcc joint spectrum of generators of translations} and $p$ be  a point in the spectrum, then we have the character $\chi_p(x)=e^{ipx}$. 
As in the general case, the dual action on the momentum space is defined s.t. $\chi_p(\Lambda(g)\cdot x)=\chi_{p'}(x)$ and it is easy to see that $p'=\Lambda(g)^{-1}p$, where the latter is the matrix-vector multiplication. Clearly the adjoint action of the translations act trivially on themselves, hence on their dual {\pc(see e.g. \cite{Bar})}. 
}

{\bf  Positive energy massless representations  of the  Poincar\'e group}. Let  {\pc $\chi_q$}, {\bcc $q\neq 0$}, be a character of {\bc the} translation group.  
  We shall call $\Stab_q$ and $\overline\Stab_q$ the stabilizers of the point $q$ through the $\widetilde \cL_{+}^\uparrow$ and $\tPoi$ actions, respectively. The latter is  $ \overline{\Stab}_q=\RR^{1+3}\rtimes\Stab_q$, where $\Stab_q$ shall be called {as above} the little group. 
Any massless, unitary, positive energy representation   of   $\tPoi$  is obtained starting with the character  associated to
$
q:= (1,1,0,0)\in \partial V_+
$  
($\partial V_+\setminus\{0\}$ is {\bc an} $\cL_+^\uparrow$-orbit) and inducing by a unitary representation of the $\overline\Stab_q$ group. 
Note that a $\overline{\Stab_q}$ representation is of the form $$\RR^{1+3}{\pc \rtimes}\Stab_q\ni (x,\sigma)\mapsto {\pc\chi_q}(x)V(\sigma),$$ where $V$ is the unitary representation of  $\Stab_q$. 
{\bc The little group $\Stab_q $ is isomorphic to $\widetilde \EE(2)=\RR^2 \rtimes \mathbb T$, where $\mathbb T$ is the unit circle. {\pc Note that $r_1(\theta)$ generate $\mathbb T.$} $\widetilde \EE(2)$ is the double cover of  $\EE(2)=\RR^2 \rtimes \mrm{{\gc S}O}(2)$, which is the 
group of Euclidean motions in two dimensions\footnote{We note that $\widetilde \EE(2)$ is not the universal covering group of $\EE(2)$, as it is not simply connected. {\pc In particular the covering map {\bcc is given by} $\mathbb T\ni r_1(\theta)=e^{i\frac\theta2\sigma_1}\mapsto e^{i\theta }\in\SO(2)$, where we have identified $\RR^2$ with $\CC$ and rotations as multiplication by a phase $e^{i\phi}$.}}.
Irreducible representations $V$ of $\widetilde \EE(2)$  fit in one of the following two classes: (See e.g. \cite{var} and \cite[page 520]{Bar})
\begin{itemize}
\item[$(a)$] The restriction of $V$ to $\RR^2$ is trivial;
\item[$(b)$] The restriction of $V$ to $\RR^2$ is non-trivial.
\end{itemize}
{\bc Irreducible representations of $\widetilde \EE(2)$ in class $(a)$ are 
labelled by {\bc half-}integers $h$,  called the {\it helicity} parameters. } 
Irreducible
representations in class $(b)$ are labelled by $\kappa >0$, the radius
of a circle in $\RR^2$, namely the joint spectrum of the $\widetilde \EE(2)$-translations, and a Bose/Fermi alternative parameter $\epsilon\in\{0,\frac12\}$. 

Let $$U=\Ind_{\overline\Stab_q  }^{{ \tPoi}}{({\pc\chi_q}\cdot V)}$$   be a unitary representation of $\tPoi$ induced from the representation
${\pc\chi_q}\cdot V $ of $\overline\Stab_q$.  We say that $U$ has \emph{finite helicity} or \emph{infinite spin} if $V$ has the form $(a)$ or $(b)$, respectively.

An irreducible \textbf{finite helicity} representation is of the form
\beqa
U_{h}=\Ind_{\overline\Stab_q}  ^{{ \tPoi }}  ({\pc \chi_q}\, \cdot\, V_{2h}) ,\qquad h\in\frac{\ZZ}{2},    \label{eq:U}
\eeqa
{\bc where $V_{2h}(y,g)=(2h)(g)$, $(y,g)\in  \RR^2 \rtimes  \mathbb T$  and ${2h}$ is the one dimensional representation of  $\mathbb T$ of character  
$h\in \Z/2$. {\pc In particular, $r_1\in\Stab_q$ and $(2h)(r_1(\varphi))=e^{i\fr{\varphi}{2} 2h}$ for a $\varphi\in [0,4\pi]$.} Integer and half-integer values of $h$ discern, respectively, bosonic and fermionic representation of $\widetilde \EE(2)$, hence, by induction, of $\tPoi$.
\vspace{0.5cm}
 
\subsection{$G_W$  and related  subgroups  of the Poincar\'e group and their representations}

{\bc In this subsection we introduce certain subgroups of $\tPoi$ which will be needed in Sect.~\ref{modularity-sect} below to formulate
a criterion for the B-W property}.
 We refer to Sect.~\ref{local-nets} for the definitions of the wedges $W_i$, $i=1,2,3$, and the set of wedges $\W$. 
We also recall from this subsection that   $R_i, \La_i\in \mcL_+^{\uparrow}$ and  $r_i, \la_i\in \mathrm{SL}(2,\CC)$ denote the one-parameter families of rotations and boosts preserving 
the wedges $W_i$, $i=1,2,3$.
\begin{definition}\label{G-definition}
We denote with 
\begin{itemize}
\item $G_3^0$ the subgroup of $A\in\SLC$ s.t. $\Lambda(A)W_3=W_3$.
\item $G_3=\langle G_3^0,\RR^{1+3}\rangle$, where $\RR^{1+3}$ is the translation  group and  $\langle G_3^0,\RR^{1+3}\rangle$  
 denotes  the  group generated by $G_3^0$ and $\RR^{1+3}$.
\item  $\tilde G_3^0=\langle G_3^0, r_1(\pi)\rangle$,  $\tilde G_3=\langle G_3, r_1(\pi)\rangle =\RR^{1+3} \rtimes \tilde G_3^0$.
\end{itemize}
For a general wedge  $W\in\W$, $G_W^0$, $G_W$, $\ti G_{W}$ and $\ti G_{W}^0$ are defined  by the transitive action of $\Poi$ on wedges.
We will denote the massless orbits of the $\RR^{1+3}$ translation characters under the $G_3$ action with $\sigma_r=\{p=(p_0,\textbf{p}= (p_1,p_2,p_3))\in\RR^{1+3}: 
p_1^2+p_2^2=r^2, p_0^2=\textbf{p}^2, {\gc p_0>0}\}$, $\sigma_0^\pm=\{p\in\RR^{1+3}: p_1=0=p_{2}, p_0=\pm p_3, {\gc p_0>0}\}$ {\pc and $\sigma_0=\sigma_0^+\cup\sigma_0^-$}. {\bc In the present paper, we will be interested in  orbits  $\si_r$ with $r>0$ since $\sigma_0^\pm$  have null measure w.r.t. the Lorentz invariant measure on $\partial V_+$.}
 {\bc Finally, we warn the
reader that $\tilde G_3, \ti G_{W}$ do not denote the covering groups of $G_3, G_{W}$.}
\end{definition}
Note that $G_3^0=\langle \lambda_3,\,r_3, r(2\pi)\rangle$.   
 Indeed, any $\mathrm{SL}(2,\CC)$ {\bc element} implementing a Poincar\'e transformation can be decomposed by the polar decomposition  $A=U_A\cdot T_A$ (see e.g. \cite{Morlect}),  where $U_A$ is a rotation and $T_A$ a boost. Let $\Lambda:\mathrm{SL}(2,\CC)\rightarrow 
  {\bc \mathcal{L}_+^{\uparrow}}$  be the covering map, then $\Lambda (A)W_3=W_3$ iff $ U_A ^{-1}W_3=T_AW_3$.  Assume that there exists a transformation $A$ such that $U_A ^{-1}W_3
\neq W_3\neq T_AW_3$ but $U_A ^{-1}W_3=T_AW_3$. Consider the edge of the wedge $E{\bc :=}\{x\in\RR^{1+3}: x_0=0=x_3\}$. Then,
$U_{A}^{-1}E=\{x\in\RR^{1+3}: x_0=0=(U_{ A}  x)_3\}$ cannot be equal to $T_{A} E=\{x\in\RR^{1+3}: ( T^{-1}_{A} x)_0=0=( T_{A}^{-1}x)_3\}$. In particular ${\bc U_A ^{-1}}W_3= W_3=T_AW_3$.

Next, we note that $G_3^0$ and $\tilde G_3^0$ share the same orbits in ${\bc \pa} V_+$ as the following remark explains. 
\begin{remark}\cite{Mor} \label{rmk:orb} Fix $p=(p_0,p_1,p_2,p_3)\in (\partial V_+  \setminus  \{0\}  ):=\{p\in\RR^{1+3}:p^2=0, p_0>0\}$. The $R_1(\pi)$-rotation  $$R_1(\pi)p=(p_0,p_1,-p_2,-p_3)$$ can be obtained as a composition of a $\Lambda_3$-boost of parameter $t_p$ and a $R_3$-rotation of parameter $\theta_p$ as
\begin{equation}\label{'''}\Lambda_3(t_p)R_3(\theta_p)(p_0,p_1,p_2,p_3)=\Lambda_3(t_p)(p_0,p_1,-p_2,p_3)=(p_0,p_1,-p_2,-p_3)\end{equation} for all the orbits except 
 for {\bc  $\si_0^{\pm}$, 
  where $\si_0^{\pm}$ appeared in Definition~\ref{G-definition}}.     
Clearly $t_p$ and $\theta_p$ depend on $p$ and the orbits excluded by this geometrical fact have null measure  w.r.t. the Lorentz invariant measure on
$\partial V_+$. {\pc The discussion does not change if $p$ is considered as an element of $\RR^{1+3}$ or of its dual.}
\end{remark}

By \eqref{'''}, we deduce that  almost all $G_3^0$ orbits on $\partial V_+$ are preserved by the $R_1(\pi)$-action. Furthermore, we note that $R_1(\pi)$ sends $W_3$ onto $W_3'$. Thus any transformation $R\in\Poi$ such that $RW_3=W_3'$ also preserves the $G_3^0$ orbits on $\partial V_+$ as well as $R_1(\pi)$, since $R_1(\pi)R\in G_3^0$. 
We have just seen that it is possible to pointwise reconstruct a transformation sending $W$ to $W'$ just starting with elements in $G_W^0$. 
With the help of the modularity condition (cf. Definition~\ref{def:mod} and Theorem~\ref{thm:starcor} below),  this gives the proof of  the B-W property in the scalar case in \cite{Mor}. 

{\bc As the regularity of the action of $G_3^0$ and $\ti G^0_3$ on $\real^{1+3}$ is verified in Appendix~\ref{App-C}, we can apply the theory of induced representations to $G_3=\RR^{1+3}\rtimes G^0_3$ and $\ti G_3 = \RR^{1+3}\rtimes \ti G^0_3$.}
 Choose a point $q_r$ on each massless, {\bc positive energy} orbit $\sigma_r$  of $G_3^0$ on the dual of $\RR^{1+3}$. Up to a null measure set in $\partial V_+$,  the stabilizer  of  $q_r$  in $G_3$ is $\RR^{1+3}\times  \lan r(2\pi) \ran$ (cf.~{\bc the} definition of the orbits  $\si_r$ and of $G_3^0$). 
Thus there exist  only two irreducible representations of $G_3$ induced by $\chi_{q_r}$, namely 
\begin{equation}\label{eq:tG}W_{r,n}=\displaystyle{\Ind_{\RR^{3+1}\times  \lan r(2\pi)  \ran}^{G_3}(\chi_{q_r}\cdot V_n )},\end{equation} where $V_n(r(2\pi))=(-1)^n$ and $n=0,1$, cf. Proposition~\ref{Foll}.  They correspond to bosonic and fermionic representations of $G_3$.

Again by  Remark \ref{rmk:orb} the subgroups $\tilde G_3$ and $\tilde G_3^0$  share the same orbits  $\sigma_r$ with $r>0$ on $\partial V_+$  (up to a null measure set in $\partial V_+$).
On the other hand the stabilizer of the point $q_r=(r,r,0,0)\in\sigma_r$ in $\tilde G_3$ is $\RR^{1+3}\rtimes  \lan r_1(\pi) \ran$. 

 Thus the little group of $q_r$, the subgroup of $ \tilde G_3^0$ fixing $q_r$,  is $\ZZ_4$. We have four irreducible representations of $\ZZ_4$ indexed by the representation of the generator, namely $V_n(r_1(\pi))=i^n$, with $n=0,1,2,3.$
Correspondingly, we have four induced representations of $\tilde G_3$ associated to  each orbit, namely
\beqa W_{r,n}=\Ind_{\RR^{1+3}\rtimes \ZZ_4}^{\tilde G_3} (\chi_{q_r}\cdot V_n ) \label{W-rep}\eeqa
acting on the Hilbert spaces $\hil_{r,n}$.
 Note that we called $V_{2h}$ the representation of character $ {2h}$, trivial on translation{\bc s} of $\widetilde \EE(2)$, and $V_n$ the representation of character $n$ of $\ZZ_4$. That this  is not an abuse of notation is justified by the fact that $r_1({\pi})\in \widetilde \EE(2)$ and when we restrict $V_{2h}$ to the group  $\langle r_1(\pi)\rangle$, we get
$$V_{2h}(r_1(\pi))=e^{i\frac\pi2 2h}=i^{2h}=V_n(r_1(\pi)),\quad V_{2h}(r(2\pi))=(-1)^n=V_n(r(2\pi))$$ and it is enough to consider $2h$ in $\ZZ_4$ in the first case or $2h$ in $\ZZ_2$ in the second. 
Furthermore $W_{r,n}$ as a representation of $\tilde G_3$ restricts to $W_{r,m}$ with  $m\,{\pc \equiv}\, n\, (\mathrm{mod} \,\, 2)$ as a representation of $G_3$, cf. Lemma \ref{lem:rest} in the next section.
 We will refer to \eqref{eq:tG} and \eqref{W-rep} {\bc as} {\it massless} representation  of $G_3$ and $\tilde G_3$, respectively, since the translation spectrum lies on the boundary of the forward lightcone.
\section{Modularity condition and $U_h$ restriction} \label{modularity-sect}
\setcounter{equation}{0}

In this section we show that any local net of standard subspaces, covariant under  a finite or infinite multiple of $U_{h}\oplus U_{-h}$, $h\in \mathbb{Z}$, satisfies the B-W property. 
The analysis is based on the following modularity condition:
\begin{definition}\cite{Mor}\label{def:mod} 
A (unitary, positive energy) $\tPoi$ representation is said to be \textbf{modular }if, for any $U$-covariant net of standard subspaces $H$, we have that $(U,H)$ satisfies the B-W and duality properties.

Let $W\in\W$.  A unitary, positive energy $\tPoi$-representation  $U$ satisfies the {\bf modularity condition} if for an element $r_W\in\tPoi$ such that $\Lambda(r_W)W=W'$  we have that
\begin{equation}\tag{MC}\label{eq:cond}{U(r_W)\in U(G_W)''}. \end{equation}
\end{definition}
Note that \eqref{eq:cond} depends neither on the choice of $r_W$ nor on W. Indeed if $\tilde r_W\in\tPoi$ is another transformation such that $\Lambda(\tilde r_W) W=W'$ then $ r_W\cdot\tilde r_W\in G_W$ and if \eqref{eq:cond} holds for $U(r_W)$, then it holds for $U(\tilde r_W)$. By  transitivity of  the $\tPoi$ action on wedges it is not restrictive to fix a wedge region $W$. Condition \eqref{eq:cond} can be straightforwardly stated when just a representation of $\tilde G_3$ is taken into account.
\begin{theoreme}\cite{Mor}\label{thm:starcor}
Let $U$ be a positive energy unitary representation of the Poincar\'e group $\tPoi$.  If condition \eqref{eq:cond} holds for $U$, then  any local $U$-covariant net of standard subspaces $H$, namely any pair $(U,H)$, satisfies B-W  and duality properties. In particular $U$ is modular.
\end{theoreme}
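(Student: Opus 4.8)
The plan is to pass to the single-wedge level, reconstruct a boost from the modular group of $H(W_3)$, and use \eqref{eq:cond} to match it with the geometric boost in $U$; the duality property then comes for free from locality. First I would reduce: since $\tPoi$ acts transitively on $\W$ and $U(g)H(W)=H(gW)$, Lemma~\ref{lem:sym} moves all modular data of one wedge to any other, so it suffices to prove the B-W and duality properties for $W=W_3$. Moreover, as observed after Definition~\ref{def:mod}, condition \eqref{eq:cond} is independent of the wedge and of $r_W$, hence may be read as $U(r_1(\pi))\in U(G_3)''$. Put $\Delta:=\Delta_{H(W_3)}$, $J:=J_{H(W_3)}$.

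Next I would build an auxiliary representation of $G_3$ out of the modular data. Each generator of $G_3^0=\langle\lambda_3,r_3,r(2\pi)\rangle$ either fixes $W_3$ (the boost $\lambda_3$ and the rotation $r_3$) or is $\Lambda$-trivial ($r(2\pi)$), hence preserves $H(W_3)$; the same holds for the translations along the edge of $W_3$. By Lemma~\ref{lem:sym} all of these unitaries commute with $\Delta^{it}$ and $J$. For the two lightlike translations $a$ pointing into $W_3$ one has $U(a)H(W_3)\subset H(W_3)$, so the Borchers theorem for standard subspaces (Theorem~\ref{Borch}) supplies the scaling of these translations under $\Delta^{it}$ and their flip under $J$. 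Together these are exactly the defining relations of $G_3=\RR^{1+3}\rtimes G_3^0$, once $\lambda_3(2\pi s)$ is represented by $\Delta^{-is}$ and every other generator by its image under $U$; integrating them yields a positive-energy unitary representation $\hat U$ of $G_3$ with $\hat U(\lambda_3(2\pi s))=\Delta^{-is}$ and $\hat U=U$ on $\RR^{1+3}\rtimes\langle r_3,r(2\pi)\rangle$.

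Then I would pass to $\tilde G_3$. By Remark~\ref{rmk:orb} the wedge flip $r_1(\pi)$ acts on each massless orbit $\sigma_r$, $r>0$, as a product $\lambda_3(t_p)r_3(\theta_p)$ of a $\lambda_3$-boost and an $r_3$-rotation; this lets one extend $\hat U$ to a positive-energy representation of $\tilde G_3$ by letting $r_1(\pi)$ act, fibrewise over $\sigma_r$, through $\Delta^{-it_p/2\pi}U(r_3(\theta_p))$, the consistency of this prescription being governed by the $\ZZ_4$-little-group description of the representations $W_{r,n}$ of \eqref{W-rep}. Now $\hat U|_{\tilde G_3}$ and $U|_{\tilde G_3}$ are positive-energy representations that agree on $\RR^{1+3}$ and on the central element $r(2\pi)$; by Mackey theory, on the sector carried by a generic orbit $\sigma_r$ both are multiples of the same induced representation $W_{r,n}$, inside which the $\lambda_3$-action is completely determined once the orbit and the $\ZZ_4$-value $i^n$ of $r_1(\pi)$ are fixed. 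The modularity condition $U(r_1(\pi))\in U(G_3)''$ forces the unitary implementing the flip to carry no information beyond $U|_{G_3}$, hence to select the same $\ZZ_4$-value as the fibrewise operator defining $\hat U(r_1(\pi))$, so that $\hat U$ and $U$ agree on $\tilde G_3$ off a null set of momenta (the orbits $\sigma_0^\pm$ carry no spectral weight, $U$ being massless). In particular $U(\lambda_3(2\pi s))=\hat U(\lambda_3(2\pi s))=\Delta^{-is}$: this is the B-W property for $W_3$, hence, by transitivity and Lemma~\ref{lem:sym}, for every wedge. Duality then follows: $H(W_3')\subset H(W_3)'$ by locality, and by the B-W property for $W_3$ and $W_3'$ (with $\lambda_{W_3'}=\lambda_3^{-1}$) one has $\Delta_{H(W_3')}^{it}=\Delta^{-it}$, which is the modular group of the standard subspace $H(W_3)'$; thus $H(W_3')$ is invariant under it and, being cyclic, equals $H(W_3)'$ by Lemma~\ref{inc}. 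Transitivity finishes, and since $H$ was an arbitrary $U$-covariant net, $U$ is modular.

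The hard part is the use of \eqref{eq:cond} in the previous paragraph: converting the membership $U(r_1(\pi))\in U(G_3)''$ into the impossibility of $\hat U$ and $U|_{G_3}$ differing by a boost-valued multiplier over momentum space. This is where positivity of the energy enters essentially --- both in the Borchers rigidity used to build $\hat U$ and in discarding the null orbits $\sigma_0^\pm$ --- and it requires careful bookkeeping of the anti-unitary $J$ against the unitary $U(r_1(\pi))$ and of the central element $r(2\pi)$, i.e.\ of the bosonic/fermionic alternative; in the situation of the present paper the relevant representations are bosonic ($U(r(2\pi))=1$), which is the benign case of this bookkeeping.
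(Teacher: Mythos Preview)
The paper does not prove this theorem; it is quoted from \cite{Mor}. Your attempt, however, has two genuine gaps that prevent it from standing on its own.

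First, you implicitly specialise to the massless case: the orbit picture $\sigma_r\subset\partial V_+$, Remark~\ref{rmk:orb}, the $\ZZ_4$-little-group description \eqref{W-rep}, and the dismissal of $\sigma_0^\pm$ as null sets all presuppose that the translation spectrum sits on $\partial V_+$. The theorem as stated applies to \emph{any} positive-energy representation, so a proof built on the massless orbit decomposition does not cover it. Second, and more seriously, the step you yourself flag as ``the hard part'' is not an argument but a hope: you assert that $U(r_1(\pi))\in U(G_3)''$ ``forces the unitary implementing the flip to carry no information beyond $U|_{G_3}$, hence to select the same $\ZZ_4$-value'' as your fibrewise construction, but you give no mechanism for this. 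The fibrewise operator $\Delta^{-it_p/2\pi}U(r_3(\theta_p))$ is momentum-dependent and it is not even clear it assembles into a well-defined unitary, let alone one that agrees with $U(r_1(\pi))$.

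The actual route in \cite{Mor} is far more direct and bypasses any orbit analysis. One sets $V(t):=\Delta_{H(W_3)}^{-it}U(\lambda_3(2\pi t))$ and shows $V(t)\in U(G_3)'$: it commutes with $U(G_3^0)$ by Lemma~\ref{lem:sym} (since $G_3^0$ fixes $W_3$), and with translations because Borchers' theorem (Theorem~\ref{Borch}) gives $\Delta^{it}$ the same commutation relations with $U(\RR^{1+3})$ as $U(\lambda_3(-2\pi t))$. Condition \eqref{eq:cond} then says $U(r_1(\pi))\in U(G_3)''$, hence $U(r_1(\pi))$ commutes with $V(t)$. On the other hand, once duality $H(W_3')=H(W_3)'$ is established (itself via Lemma~\ref{inc}, using that $\Delta^{it}$ commutes with $U(r_1(\pi))$ and hence preserves $H(W_3')$), one has $U(r_1(\pi))\Delta^{it}U(r_1(\pi))^{-1}=\Delta^{-it}$, while $r_1(\pi)\lambda_3(t)r_1(\pi)^{-1}=\lambda_3(-t)$ gives the analogous relation for the boost; together these force $U(r_1(\pi))V(t)U(r_1(\pi))^{-1}=V(t)^{-1}$. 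Combining with commutativity yields $V(t)^2=1$, and continuity at $t=0$ gives $V(t)\equiv 1$. This argument uses \eqref{eq:cond} exactly once, as a commutant statement, and never needs to look inside the representation.
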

\nin {\bcc We remark that  the class of nets in Theorem~\ref{thm:starcor}, transforming under $\tPoi$, is more general than the class of bosonic nets we defined in 
Sect.~\ref{One-particle-nets-sub}.}  
The theorem applies to the families of Poincar\'e representations  covered by the following two results.
\bep\cite{Mor}\label{prop:finMC}
Let $U$ be an irreducible scalar massive or  massless representation of the Poincar\'e group, then $U$ satisfies \eqref{eq:cond}.
\eep
The proof of the above proposition adapts in irreducible finite helicity case. This section contains an alternative independent proof of modularity condition for finite helicity representations, see Corollary \ref{modularity-finite-helicity}.
\begin{proposition}\label{prop:P} Let $U$, $U_1$ be unitary positive energy representations of $\tilde G_3$ 
 on $\H$, satisfying \eqref{eq:cond}.
\begin{enumerate}
\item[(i)] Let $E$ be {\bcc the} projection on the subspace, where $U(r(2\pi))=1$. {\bcc Then} $U$ satisfies \eqref{eq:cond} iff both $EU(\cdot)E$ and $(1-E) U(\cdot)(1-E)$ satisfy \eqref{eq:cond}.
\item[(ii)] Let $\K$ be a Hilbert space, then \eqref{eq:cond} holds for $U\otimes 1_\K\in\B(\H\otimes \K)$.

\item[(iii)] Let $U_2$ be a  unitary representation of $\tilde G_3$ s.t. $U_2$ is unitarily equivalent to $U_1$. 
Then $U_2$ satisfies \eqref{eq:cond}.
\end{enumerate}
\end{proposition}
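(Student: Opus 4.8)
The plan is to prove the three items separately, all by reducing to the defining relation \eqref{eq:cond} together with the elementary properties of standard subspaces and von Neumann algebras recalled in Sect.~\ref{Preliminaries-section}. Throughout, I fix the reference wedge $W_3$ and the element $r_1(\pi)$ with $\Lambda(r_1(\pi))W_3=W_3'$, so that \eqref{eq:cond} reads $U(r_1(\pi))\in U(G_3)''$; by the remark following Definition~\ref{def:mod} this choice is immaterial.

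\textbf{Proof of (ii).} This is the easiest case. Since $(U\otimes 1_\K)(g)=U(g)\otimes 1_\K$ for all $g\in \ti G_3$, the von Neumann algebra $(U\otimes 1_\K)(G_3)''$ contains $U(G_3)''\otimes 1_\K$. By hypothesis $U(r_1(\pi))\in U(G_3)''$, hence $U(r_1(\pi))\otimes 1_\K\in U(G_3)''\otimes 1_\K\subset (U\otimes 1_\K)(G_3)''$, which is exactly \eqref{eq:cond} for $U\otimes 1_\K$. (One uses here only that $M\otimes 1\subset(M\cup\{1\})''$ for any von Neumann algebra $M$, which is immediate.)

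\textbf{Proof of (iii).} Let $V\in\cU(\H)$ implement the equivalence, $U_2(g)=VU_1(g)V^*$ for all $g$. Then $U_2(G_3)''=V\,U_1(G_3)''\,V^*$, since conjugation by a unitary is a $*$-isomorphism of $B(\H)$ preserving von Neumann algebras and strong closures. Applying this to $r_1(\pi)$: $U_2(r_1(\pi))=VU_1(r_1(\pi))V^*\in V\,U_1(G_3)''\,V^*=U_2(G_3)''$. Hence $U_2$ satisfies \eqref{eq:cond}.

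\textbf{Proof of (i).} By definition $E$ is the spectral projection of the (central, in $U(\ti G_3)''$) unitary $U(r(2\pi))$ corresponding to the eigenvalue $1$; note $r(2\pi)$ is central in $\ti G_3$, so $E$ commutes with $U(g)$ for every $g\in\ti G_3$, in particular with $U(G_3)''$ and with $U(r_1(\pi))$. Write $U^{\pm}$ for $EU(\cdot)E$ on $E\H$ and $(1-E)U(\cdot)(1-E)$ on $(1-E)\H$. Because $E$ is a central projection for $U(\ti G_3)''$, we have the direct-sum decomposition $U(G_3)''=U^+(G_3)''\oplus U^-(G_3)''$ (acting on $E\H\oplus(1-E)\H$), and likewise $U(r_1(\pi))=U^+(r_1(\pi))\oplus U^-(r_1(\pi))$. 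For the ``if'' direction: if $U^{\pm}(r_1(\pi))\in U^{\pm}(G_3)''$ then their direct sum lies in $U^+(G_3)''\oplus U^-(G_3)''=U(G_3)''$, giving \eqref{eq:cond} for $U$. For the ``only if'' direction: if $U(r_1(\pi))\in U(G_3)''=U^+(G_3)''\oplus U^-(G_3)''$, then compressing with the central projection $E$ (resp.\ $1-E$) — which maps $U(G_3)''$ onto $U^+(G_3)''$ (resp.\ $U^-(G_3)''$) — yields $U^{\pm}(r_1(\pi))\in U^{\pm}(G_3)''$. The one point needing a word of care, and the mild ``obstacle'' here, is the identification $E\,U(G_3)''\,E=U^+(G_3)''$ on $E\H$: this holds precisely because $E$ is central in $U(\ti G_3)''\supset U(G_3)''$, so compression by $E$ is a normal $*$-homomorphism onto $(EU(G_3)E)''=U^+(G_3)''$; one should also remark that $U^{\pm}$ is genuinely a representation of $\ti G_3$ (and of $G_3$) since $E$ commutes with all $U(g)$. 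Given these observations the argument is a routine application of Lemmas~\ref{lem:ort1}--\ref{lem:ort2}-type reasoning at the level of von Neumann algebras, and \eqref{eq:cond} being independent of the choice of $r_W$ closes the case.
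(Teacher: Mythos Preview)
Your proof is correct and follows essentially the same route as the paper: centrality of $r(2\pi)$ in $\tilde G_3$ for (i), the tensor structure for (ii), and transport by the intertwining unitary for (iii). You actually spell out both directions of (i) more carefully than the paper does.

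One small point in (i) deserves sharpening. You argue that $E$ is central in $U(\tilde G_3)''$ and hence commutes with $U(G_3)''$; this gives $E\in U(G_3)'$. But to obtain the \emph{algebra} decomposition $U(G_3)''=U^+(G_3)''\oplus U^-(G_3)''$ (in particular for the ``if'' direction, where you need $B^+\oplus B^-\in U(G_3)''$ for arbitrary $B^\pm\in U^\pm(G_3)''$) you also need $E\in U(G_3)''$. Centrality in the larger algebra $U(\tilde G_3)''$ does not by itself place $E$ in the smaller subalgebra $U(G_3)''$. The missing observation is simply that $r(2\pi)\in G_3^0\subset G_3$ (cf.\ Definition~\ref{G-definition} and the line following it), so $U(r(2\pi))\in U(G_3)$ and its spectral projection $E$ lies in $U(G_3)''$; combined with $E\in U(G_3)'$ this makes $E$ central in $U(G_3)''$, and your direct-sum argument then goes through verbatim. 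The appeal to ``Lemmas~\ref{lem:ort1}--\ref{lem:ort2}-type reasoning'' is unnecessary here (those lemmas concern standard subspaces, not von Neumann algebras) and can be dropped.
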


\begin{proof} (i) Since $r(2\pi)$ commutes with every element in $\tilde G_{3}$, the spectral projection $E$ of  $U(r(2\pi))$  decomposes 
$U$ into disjoint representations $U(\cdot)=EU(\cdot)\oplus (1-E) U(\cdot)$. The thesis follows since
$U(G_3)''=(EU(G_3))''\oplus ((1-E)U(G_3))''$. 

(ii) Is proved in \cite{Mor}. 

(iii) Let $W$ be a unitary s.t. $WU_2(g)W^*=U_1(g)$, $g\in\tilde G_3$.
It is easy to see that $WU_2(G_3)'W^*=U_1(G_3)'$ and  $WU_2(G_3)''W^*= U_1(G_3)''$.  If $WU_2(r_1(\pi))W^*=U_1(r_1(\pi))$, then 
$$U_1(r_1(\pi))\in U_1(G_3)''\Rightarrow WU_2(r_1(\pi))W^*\in WU_2(G_3)''W^*\Rightarrow U_2(r_1(\pi))\in U_2(G_3)'',$$
which concludes the proof. \end{proof}
The  representations $W_{r,n}$ of $\ti G_3$, restricted to $G_3$, give $W_{r,m}$, {\bcc where} $m\,{\pc\equiv}\,n (\mrm{mod}\,\, 2$), see Lemma~\ref{lem:rest} below. Thus they are disjoint for different $r$,  by the disjointness of  the respective orbits~$\si_r$,  cf. Proposition \ref{Foll}.
 Furthermore, since $W_{r,n}$ is irreducible,  $W_{r,n}(G_3)'=\CC\cdot \textbf{1}$  and  $W_{r,n}(r_1(\pi))\in W_{r,n}(G_3)''=
B(\hil_{r,n})$.  We deduce that $W_{r,n}$ satisfies (\ref{eq:cond}).
\begin{corollary} \label{Mod-Corollary}   Let $W_{r,n}$ be the irreducible ${\ti G_3}$-representation of radius $r>0$. Then $W_{r,n}$ satisfies \eqref{eq:cond}.
\end{corollary}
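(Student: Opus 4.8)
The plan is to reduce condition \eqref{eq:cond} for $W_{r,n}$ to the trivial observation that $W_{r,n}(G_3)'' = B(\hil_{r,n})$. First I would note that $G_{W_3} = G_3$ (Definition~\ref{G-definition}) and that $\Lambda(r_1(\pi))W_3 = W_3'$ by Remark~\ref{rmk:orb}, so that $r_1(\pi)$ is an admissible choice of $r_W$ for the wedge $W_3$; since \eqref{eq:cond} does not depend on this choice (the remark following Definition~\ref{def:mod}), it then suffices to show that $W_{r,n}(r_1(\pi)) \in W_{r,n}(G_3)''$.

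Next I would compute the restriction $W_{r,n}|_{G_3}$. Here $G_3 \subset \ti G_3$ is a subgroup of index two: indeed $r_1(\pi) \notin G_3^0$ because $\Lambda(r_1(\pi))$ does not preserve $W_3$, while $r_1(\pi)^2 = r(2\pi) \in G_3^0$. Correspondingly the $\ti G_3$-stabilizer $\RR^{1+3}\rtimes\langle r_1(\pi)\rangle \cong \RR^{1+3}\rtimes\ZZ_4$ of the base point $q_r = (r,r,0,0)$ contains the $G_3$-stabilizer $\RR^{1+3}\times\langle r(2\pi)\rangle \cong \RR^{1+3}\times\ZZ_2$ with index two, and by Remark~\ref{rmk:orb} the orbit $\si_r$ of $q_r$ is the same for $G_3^0$ and $\ti G_3^0$ (up to a null set). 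Since then only a single double coset enters the Mackey restriction formula, one gets $W_{r,n}|_{G_3} \cong \Ind_{\RR^{1+3}\times\langle r(2\pi)\rangle}^{G_3}(\chi_{q_r}\cdot V_m)$ with $V_m(r(2\pi)) = V_n(r(2\pi)) = (-1)^n$, i.e.\ $m \equiv n \pmod 2$; this is the content of Lemma~\ref{lem:rest}. As $\RR^{1+3}\times\langle r(2\pi)\rangle$ is the full $G_3$-stabilizer of $q_r$ on the single orbit $\si_r$, Proposition~\ref{Foll} shows that the resulting representation $W_{r,m}$ is irreducible as a representation of $G_3$.

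Irreducibility of $W_{r,n}|_{G_3}$ then gives $W_{r,n}(G_3)' = \CC\cdot\eins$, hence $W_{r,n}(G_3)'' = B(\hil_{r,n})$, so the inclusion $W_{r,n}(r_1(\pi)) \in W_{r,n}(G_3)''$ holds automatically and \eqref{eq:cond} follows. I expect the only genuinely delicate point to be the irreducibility of $W_{r,n}|_{G_3}$, that is, Lemma~\ref{lem:rest}: a priori, restriction across an index-two subgroup could split the representation into two inequivalent pieces, but this is excluded here because the orbit $\si_r$ is common to both groups, so that a single double coset occurs and the restricted representation remains a single induced (hence irreducible) representation of $G_3$. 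Everything else is bookkeeping with induced representations.
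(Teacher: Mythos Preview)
Your argument is correct and essentially coincides with the paper's: the paper notes that $W_{r,n}|_{G_3}$ is, by Lemma~\ref{lem:rest}, the irreducible $G_3$-representation $W_{r,m}$ with $m\equiv n\pmod 2$, whence $W_{r,n}(G_3)'=\CC\cdot\eins$ and $W_{r,n}(r_1(\pi))\in W_{r,n}(G_3)''=B(\hil_{r,n})$. Your sketch of Lemma~\ref{lem:rest} via a single double coset in Mackey's subgroup theorem is also precisely the paper's proof of that lemma.
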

 The following proposition ensures that also a direct integral of massless $\tilde G_3$ representations satisfies \eqref{eq:cond}. 
\begin{proposition}\label{prop:PP}
Let $\mu$ be a positive Borel measure on $\RR^+$.
Assume that  $U_{r}$ are multiples of  {\bc the massless $W_{r,n}$, $n=0,1,2,3$,  representations  of $\ti G_{W}$}.  Then $U=\int^{{\bc \oplus}}_{\RR^+} U_{r} d\mu(r)$ satisfies \eqref{eq:cond}.
\end{proposition}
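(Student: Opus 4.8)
The plan is to deduce the direct-integral statement from the fibrewise one, Corollary~\ref{Mod-Corollary}, the only real work being to control the diagonalisable algebra of the decomposition. Since \eqref{eq:cond} does not depend on the wedge, I would fix $W=W_3$ and $r_W=r_1(\pi)$, so that the assertion becomes $U(r_1(\pi))\in U(G_3)''$, where $U=\int^\oplus_{\RR^+}U_r\,d\mu(r)$ acts on $\hil=\int^\oplus_{\RR^+}\hil_r\,d\mu(r)$. As a first step, note that each $U_r$, being a multiple of some $W_{r,n}$, satisfies \eqref{eq:cond} by Corollary~\ref{Mod-Corollary} and Proposition~\ref{prop:P}(ii); thus $U_r(r_1(\pi))\in U_r(G_3)''$ for $\mu$-a.e.\ $r$.

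The crucial step is to show that the diagonalisable algebra $\mathcal D=\{\int^\oplus f(r)\,1_{\hil_r}\,d\mu(r):f\in L^\infty(\mu)\}$ is contained in $U(G_3)''$. Here one uses that the full translation group $\RR^{1+3}$ lies in $G_3$, so the self-adjoint generators $P_1,P_2$ of translations in the $x_1$- and $x_2$-directions are affiliated with $U(G_3)''$. On the fibre $\hil_r$ the translation spectrum of $U_r\cong W_{r,n}\otimes 1_{\mathcal K_r}$ is concentrated on the orbit $\sigma_r$, on which $p_1^2+p_2^2=r^2$; hence $\sqrt{P_1^2+P_2^2}=\int^\oplus r\,1_{\hil_r}\,d\mu(r)$, and bounded Borel functions of this operator exhaust $\mathcal D$. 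Since $U(G_3)''$ is a von Neumann algebra, it contains all these functions, so $\mathcal D\subset U(G_3)''$.

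Granted this, the conclusion follows by a standard direct-integral argument. Take $S\in U(G_3)'$. As $\mathcal D\subset U(G_3)''$, $S$ commutes with $\mathcal D$ and is therefore decomposable, $S=\int^\oplus S_r\,d\mu(r)$. Since $S$ commutes with $U(g)=\int^\oplus U_r(g)\,d\mu(r)$ for every $g$ in the separable group $G_3$, testing against a countable dense subset and using strong continuity gives $S_r\in U_r(G_3)'$ for $\mu$-a.e.\ $r$. By the first step $S_r$ then commutes with $U_r(r_1(\pi))$ for $\mu$-a.e.\ $r$, whence $S$ commutes with $\int^\oplus U_r(r_1(\pi))\,d\mu(r)=U(r_1(\pi))$. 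Since $S\in U(G_3)'$ was arbitrary, $U(r_1(\pi))\in U(G_3)''$, i.e.\ \eqref{eq:cond} holds for $U$; by the wedge-independence of \eqref{eq:cond} the proof is complete.

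The only genuinely delicate point is the second paragraph: one must actually use that the decomposition parameter $r$ is the spectral value of an operator affiliated to $U(G_3)''$, namely $\sqrt{P_1^2+P_2^2}$, and this is what forces $\mathcal D\subset U(G_3)''$. Merely knowing that the fibres $U_r$ are mutually disjoint would only place $\mathcal D$ in the centre of $U(\tilde G_3)''$, which is too weak since $G_3\subsetneq\tilde G_3$. The remaining measure-theoretic manipulations — decomposability of operators commuting with $\mathcal D$, and the a.e.\ disintegration of $S\in U(G_3)'$ — are routine given separability and the standard theory of direct integrals of von Neumann algebras.
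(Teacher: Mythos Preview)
Your proof is correct and follows essentially the same route as the paper's: both identify the diagonal algebra with bounded functions of $\sqrt{P_1^2+P_2^2}$ (affiliated to $U(G_3)''$ since translations lie in $G_3$), deduce that every $S\in U(G_3)'$ is decomposable with $S_r\in U_r(G_3)'$ $\mu$-a.e., and then invoke the fibrewise modularity condition from Corollary~\ref{Mod-Corollary}. The only cosmetic difference is that the paper phrases the last step as $U(G_3)''=\int^\oplus U_r(G_3)''\,d\mu(r)$ via \cite[Theorem 8.18]{Tak}, whereas you argue directly on the commutant; both are equivalent.
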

One can reduce the argument to the cases {\bc  $U_r|_{G_W}=W_{r,0}$ or $U_r|_{G_W}=W_{r,1}$ } by {\bc Proposition}~\ref{prop:P}~(i). We give details on  the proof in  Appendix~\ref{proposition}.

Now we want to verify the modularity condition \eqref{eq:cond} for a large family of massless bosonic representations {of the physically relevant form $U_h\oplus U_{-h}$ for integer helicities $h$}. In order to prove the result, we want to disintegrate the restriction of $U_h$ to the $\tilde G_3$ subgroup, to check the condition \eqref{eq:cond} on the disintegration and to apply Theorem~\ref{thm:starcor}. 
The Poincar\'e representations are obtained by induction and the Mackey subgroup theorem teaches how to make the disintegration for such kind of representation. 

Let $H_1$ and $H_2$ be subgroups of a locally compact group $G$.  Then $H_1\backslash G\slash H_2$ is the   double coset, i.e., the set of the equivalence classes $[g]=H_1gH_2$, with $g\in  G$. 

\begin{definition}\label{regularly-related-definition}\cite{ma52} Let $G$ be a separable locally compact group.

Closed subgroups $H_1$ and $H_2$ of
$G$ are said to be {\it regularly related} if there exists a sequence $E_0, E_1, E_2, \ldots$ of measurable subsets of $G$ each of which is a union of double cosets in $H_1\backslash G\slash H_2$  such that $E_0$ has Haar measure zero and each double coset  not in $E_0$ is the intersection of the $E_j$ which contain it.

Because of the correspondence between orbits of $G/H_{\bc 2}$ under
$H_{\bc 1}$ and double  cosets $H_1\backslash G\slash H_2$, $H_1$ and $H_2$ are regularly related if and only if the orbits, (i.e., the double cosets) outside of a certain set of measure zero form the equivalence
classes of a measurable equivalence relation. Given a topological standard measure space $X$, an equivalence relation $\sim$ and the quotient map $s: X\to Y=X/\sim$, the equivalence relation  is said to be measurable if there exists a countable family $\{F_n\}_{n\in\NN}$ of subsets of the quotient space $Y$, s.t. $s^{-1}(F_n)$ is measurable and   each point in $Y$ is the intersection of all the $F_{n'}$, $n'\in\nat$, containing this point.

Consider the map $s: G\rightarrow H_1\backslash G\slash H_2$  carrying each element of $G$ into its double coset. Then equip $H_1\backslash G\slash H_2$ with the quotient topology given by $s$ and consider a finite measure $\mu$ on $G$ which is in the same measure class\footnote{{Has the same set of null measure.}}  as the Haar measure. It is possible to define  a measure $\bar\mu$ on the  Borel sets of $H_1\backslash G\slash H_2$ by $\bar\mu(F)=\mu(s^{-1}(F))$. We shall call $\bar \mu$ an \textit{admissible measure} in $H_1\backslash G\slash H_2$. The definition is well posed since any two of such measures have the same null measure sets.

\end{definition}

General theory of induced representations can be found for instance in \cite{ma52, foll, Bar}. We recall the Mackey's subgroup theorem. 
\begin{theoreme}[Mackey's subgroup  theorem]\label{thm:sgr}\cite{ma52}.
Let $H_1,\, H_2$ be closed subgroups regularly related in $G$. Let $\pi$ be a strongly continuous representation of $H_1$.   
For each $g\in G$ consider $H_g=H_2\cap(g^{-1}H_1g)$ and set
$$\mathcal{V}_g=\mathrm{Ind}_{H_g}^{H_2}(\pi\circ\ad\, g).$$
Then $\mathcal{V}_g$ is determined to within equivalence by the double coset
$[g]$ to which $g$ belongs. If $\nu$ is an
admissible measure on $H_1\backslash G\slash H_2,$ then
\beqa (\mathrm{Ind}_{H_1}^G\pi )|_{H_2}\simeq\int^{\oplus}_{H_1\backslash G\slash H_2}\mathcal{V}_{[g]}\,d\nu([g]).\label{Mackey-formula}
\eeqa
\end{theoreme}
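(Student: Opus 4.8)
The plan is to reproduce Mackey's original argument \cite{ma52}, whose backbone is the orbit decomposition of $X:=H_1\backslash G$ under the right action of $H_2$. First I would realise $\mathrm{Ind}_{H_1}^G\pi$ concretely: fix a quasi-invariant measure $\mu_0$ on $X$ with Radon--Nikodym cocycle $\lambda$, and let the induced representation act on the square-integrable sections $f:G\to\mathcal H_\pi$ with $f(h_1 x)=\pi(h_1)f(x)$ for $h_1\in H_1$, with $G$ translating on the right twisted by $\lambda^{1/2}$ so as to be unitary. Restricting to $H_2$, the relevant geometry is that $H_2$ acts on $X$ on the right, the orbits are exactly the images of the double cosets $H_1 g H_2$, and the stabiliser of the point $H_1 g\in X$ is $H_g=H_2\cap g^{-1}H_1 g$; hence each orbit $O_{[g]}$ is $H_2$-equivariantly identified with $H_g\backslash H_2$ via $H_g h\mapsto H_1 g h$.

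The proof then rests on two reductions. \emph{(i) Disintegration over the orbit space.} Because $H_1$ and $H_2$ are regularly related, the double coset space $H_1\backslash G\slash H_2$ is, after removal of a $\nu$-null set, a standard Borel space carrying the admissible measure class, and the standard direct-integral formalism for quasi-invariant measures produces an $H_2$-equivariant decomposition $L^2(X,\mu_0)\cong\int^{\oplus}_{H_1\backslash G\slash H_2}L^2(O_{[g]})\,d\nu([g])$ compatible with the Hermitian bundle; hence $(\mathrm{Ind}_{H_1}^G\pi)|_{H_2}$ is the direct integral of its pieces supported on single orbits. \emph{(ii) Identification on one orbit.} Transporting the restricted bundle along $H_g h\mapsto H_1 g h$, one computes that the fibre representation of $H_g$ over the base point $H_1 g$ is $h\mapsto\pi(ghg^{-1})=(\pi\circ\ad\,g)(h)$ --- writing $h=g^{-1}h_1 g\in g^{-1}H_1 g$ gives $f(gh)=f(h_1 g)=\pi(h_1)f(g)$ --- while tracking the cocycle $\lambda$ (equivalently the ratios of the modular functions of $G$, $H_1$, $H_2$ and $H_g$, which reorganise into the $\rho$-function of the pair $(H_g,H_2)$) shows the unitary structures match; therefore this orbit piece is unitarily equivalent to $\mathcal V_g=\mathrm{Ind}_{H_g}^{H_2}(\pi\circ\ad\,g)$. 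Putting (i) and (ii) together gives \eqref{Mackey-formula}, and since any two admissible measures on $H_1\backslash G\slash H_2$ are mutually absolutely continuous, the resulting direct integrals are unitarily equivalent, so the decomposition does not depend on the choice of $\nu$.

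It remains to check that $\mathcal V_g$ depends only on the double coset $[g]$. If $g'=h_1 g h_2$ with $h_i\in H_i$, a direct computation gives $\ad\,g'=\ad\,h_1\circ\ad\,g\circ\ad\,h_2$ and $H_{g'}=\ad(h_2^{-1})(H_g)$; since inducing from a conjugate subgroup with the correspondingly conjugated representation yields an equivalent representation, $\mathcal V_{g'}\cong\mathrm{Ind}_{H_g}^{H_2}\big((\pi\circ\ad\,h_1\circ\ad\,g)|_{H_g}\big)$, and using that $\pi\circ\ad\,h_1\cong\pi$ as representations of $H_1$ via the unitary $\pi(h_1)$ (together with $\ad(g)H_g\subset H_1$) this is in turn equivalent to $\mathrm{Ind}_{H_g}^{H_2}\big((\pi\circ\ad\,g)|_{H_g}\big)=\mathcal V_g$. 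Hence $\mathcal V_{[g]}$ is well defined.

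The step I expect to be the genuine obstacle is the disintegration in (i): manufacturing the $H_2$-equivariant decomposition of the quasi-invariant measure on $H_1\backslash G$ over the double coset space is exactly where the hypothesis of being \emph{regularly related} is used --- it guarantees that the orbit space is countably separated/standard and admits a measurable cross-section, so that Mackey's direct-integral machinery (or an Effros-type selection argument) applies --- and it is entangled with the bookkeeping of the Radon--Nikodym cocycles in (ii), which must be tracked precisely in order to make the identification with $\mathrm{Ind}_{H_g}^{H_2}(\pi\circ\ad\,g)$ a genuine unitary equivalence rather than one only up to a scalar.
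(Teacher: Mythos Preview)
The paper does not give its own proof of this theorem: it is quoted verbatim as Mackey's subgroup theorem with the citation \cite{ma52}, and is then used as a black box in the proof of Proposition~\ref{Corollary-of-Mackey}. So there is nothing in the paper to compare your proposal against.

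That said, your outline is the standard argument and is essentially the one in Mackey's original paper: realise $\mathrm{Ind}_{H_1}^G\pi$ on sections over $H_1\backslash G$, observe that the $H_2$-orbits on $H_1\backslash G$ are parametrised by $H_1\backslash G/H_2$ with stabiliser $H_g$ at $H_1 g$, use the regularly-related hypothesis to disintegrate over the orbit space, and identify each fibre with $\mathrm{Ind}_{H_g}^{H_2}(\pi\circ\ad\,g)$ by transporting along $H_g\backslash H_2\to O_{[g]}$. Your caveat about where the real work lies --- the measurable disintegration and the bookkeeping of the Radon--Nikodym cocycles --- is accurate, and the double-coset independence check is correct. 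If you want to turn this into a self-contained proof you will need to make the measure-theoretic step (existence of a measurable cross-section and the equivariant disintegration of $\mu_0$) precise; in the paper's context, however, the theorem is simply invoked from the literature.
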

 An immediate application of  Theorem \ref{thm:sgr} to the restriction of $W_{r,n}$ to $G_3$ is the following lemma, which entered into the proof of Corollary~\ref{Mod-Corollary} above.
\begin{lemma}\label{lem:rest} The restriction of the $\tilde G_3$ representation $W_{r,n}$ to $G_3$ is $W_{r,m}$, where $\displaystyle{m\,{\pc \equiv}\, n\, (\mathrm{mod}\,\,  2)}.$
\end{lemma}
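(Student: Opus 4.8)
The statement to prove is Lemma~\ref{lem:rest}: restricting the $\ti G_3$-representation $W_{r,n}$ to the subgroup $G_3$ yields $W_{r,m}$ with $m\equiv n\,(\mathrm{mod}\,2)$. The plan is to apply Mackey's subgroup theorem (Theorem~\ref{thm:sgr}) with $G=\ti G_3$, $H_1=\RR^{1+3}\rtimes\ZZ_4$ (the inducing subgroup, with $\ZZ_4=\lan r_1(\pi)\ran$), $\pi=\chi_{q_r}\cdot V_n$, and $H_2=G_3=\RR^{1+3}\rtimes G_3^0$. Since $[\ti G_3:G_3]=2$ (as $\ti G_3^0=\lan G_3^0,r_1(\pi)\ran$ and $r_1(\pi)^2=r(2\pi)\in G_3^0$), the double coset space $H_1\backslash \ti G_3 / G_3$ is a single point: $G_3$ already contains the translations and $\RR^{1+3}H_1 G_3 = \ti G_3$ because $H_1$ together with $G_3$ generates $\ti G_3$ and $G_3$ is normal of index $2$. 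Hence the Mackey integral (\ref{Mackey-formula}) collapses to a single summand $\mathcal V_{[e]}=\mathrm{Ind}_{H_e}^{G_3}(\pi\circ\ad\,e)=\mathrm{Ind}_{H_e}^{G_3}(\pi)$, where $H_e = G_3\cap H_1 = \RR^{1+3}\rtimes(\ZZ_4\cap G_3^0)$.

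\textbf{Key steps.} First I would identify $\ZZ_4\cap G_3^0$. The cyclic group $\ZZ_4=\{e,r_1(\pi),r_1(\pi)^2,r_1(\pi)^3\}$ has $r_1(\pi)^2=r_1(2\pi)=-I=r(2\pi)\in G_3^0$ (recall $r_1(\pi)$ implements $R_1(\pi)$, which sends $W_3$ to $W_3'$, hence is \emph{not} in $G_3^0$, but its square is). So $\ZZ_4\cap G_3^0=\{e,r(2\pi)\}\cong\ZZ_2$, and thus $H_e=\RR^{1+3}\rtimes\lan r(2\pi)\ran=\RR^{1+3}\times\lan r(2\pi)\ran$, which is exactly the stabilizer subgroup appearing in (\ref{eq:tG}). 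Second, I would compute the restriction of $\pi=\chi_{q_r}\cdot V_n$ from $H_1$ to $H_e$: on translations it is still $\chi_{q_r}$, and on $\lan r(2\pi)\ran$ it is $V_n(r(2\pi)) = V_n(r_1(\pi))^2 = (i^n)^2 = (-1)^n$. Writing $m\in\{0,1\}$ with $m\equiv n\,(\mathrm{mod}\,2)$, this is precisely $V_m(r(2\pi))=(-1)^m$ in the notation of (\ref{eq:tG}). Third, by the transitivity/induction-in-stages identity for the single double coset, $W_{r,n}|_{G_3}\simeq \mathrm{Ind}_{H_e}^{G_3}(\chi_{q_r}\cdot V_m) = W_{r,m}$, which is the claim. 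I should also note (a one-line remark) that $H_1$ and $G_3$ are regularly related so that Theorem~\ref{thm:sgr} applies: the double coset space is a single point, which is trivially standard Borel, so the hypothesis holds automatically.

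\textbf{Main obstacle.} The only real subtlety is the bookkeeping around the double coset space and confirming it is a single point; once that is pinned down, everything reduces to the elementary computation $V_n(r_1(\pi))^2=(-1)^n$ together with the identification of $H_e=G_3\cap H_1$ with the inducing subgroup of $G_3$ from (\ref{eq:tG}). A secondary point to handle carefully is that $W_{r,n}$ was \emph{defined} as induced from $\RR^{1+3}\rtimes\ZZ_4$ up to $\ti G_3$ using the character $\chi_{q_r}$ at the specific point $q_r=(r,r,0,0)$; since $q_r$ is fixed by both $\ZZ_4$ and $\lan r(2\pi)\ran$, no orbit/stabilizer ambiguity arises and the restriction formula is clean. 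No genuinely hard analysis is expected here — the lemma is a direct, essentially formal consequence of Mackey's theorem for an index-two subgroup.
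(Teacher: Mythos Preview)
Your proposal is correct and is precisely the ``immediate application of Theorem~\ref{thm:sgr}'' that the paper invokes without spelling out the details. The identification of the single double coset (since $G_3$ has index two in $\ti G_3$ and $r_1(\pi)\in H_1$), the computation $H_e=\RR^{1+3}\times\lan r(2\pi)\ran$, and the restriction $V_n(r(2\pi))=(-1)^n=V_m(r(2\pi))$ are exactly the intended steps.
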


 \nin Of central importance for our analysis is the following proposition.
\begin{proposition}\label{Corollary-of-Mackey}
$\displaystyle{U_h|_{\tilde G_3}\simeq \int_{\RR^+}^{\oplus}W_{r,2h}dr}$,  where $2h$ in the right-hand side has to be considered modulo $4$. 
\end{proposition}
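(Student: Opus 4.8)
The plan is to apply Mackey's subgroup theorem (Theorem~\ref{thm:sgr}) to the pair of subgroups $H_1=\overline{\Stab}_q$ and $H_2=\tilde G_3$ inside $G=\tPoi$, where $U_h=\Ind_{\overline{\Stab}_q}^{\tPoi}(\chi_q\cdot V_{2h})$. First I would check that $\overline{\Stab}_q$ and $\tilde G_3$ are regularly related in $\tPoi$; this should follow from the regularity of the action established in Appendix~\ref{App-C} together with the fact that, up to a null set, the double cosets are parametrized by the orbits $\sigma_r$, $r>0$, which form a measurable family indexed by $\RR^+$. Thus an admissible measure on the double coset space can be identified (up to equivalence of measure class) with Lebesgue measure $dr$ on $\RR^+$, which will eventually give the claimed $\int_{\RR^+}^\oplus \cdots \, dr$.

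Next I would compute, for a representative $g$ of the double coset corresponding to the orbit $\sigma_r$, the group $H_g=\tilde G_3\cap (g^{-1}\overline{\Stab}_q\, g)$ and the representation $\mathcal V_g=\Ind_{H_g}^{\tilde G_3}((\chi_q\cdot V_{2h})\circ\ad g)$. The point $q=(1,1,0,0)$ gets moved by $g$ to a point $q_r$ on $\sigma_r$; conjugating the stabilizer $\overline{\Stab}_q=\RR^{1+3}\rtimes\Stab_q$ by $g^{-1}$ and intersecting with $\tilde G_3=\RR^{1+3}\rtimes\tilde G_3^0$ yields, on the translation part, all of $\RR^{1+3}$ (since translations are normal and contained in both), and on the homogeneous part the stabilizer of $q_r$ inside $\tilde G_3^0$, which by the discussion preceding Definition~\ref{G-definition} and the computation of stabilizers in Sect.~\ref{masslessrep} is $\langle r_1(\pi)\rangle\cong\ZZ_4$. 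Hence $H_g=\RR^{1+3}\rtimes\ZZ_4$, and $(\chi_q\cdot V_{2h})\circ\ad g$ restricted to this group is $\chi_{q_r}$ on translations times the character $r_1(\pi)\mapsto V_{2h}(r_1(\pi))=e^{i\frac\pi2 2h}=i^{2h}$ on $\ZZ_4$ — which is exactly $V_n$ with $n\equiv 2h\pmod 4$. Therefore $\mathcal V_g=\Ind_{\RR^{1+3}\rtimes\ZZ_4}^{\tilde G_3}(\chi_{q_r}\cdot V_{2h})=W_{r,2h}$ by the definition \eqref{W-rep}.

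Assembling these identifications into the Mackey formula \eqref{Mackey-formula} gives $U_h|_{\tilde G_3}\simeq\int^\oplus_{\overline{\Stab}_q\backslash\tPoi\slash\tilde G_3}\mathcal V_{[g]}\,d\nu([g])\simeq\int_{\RR^+}^\oplus W_{r,2h}\,dr$, with $2h$ read modulo $4$ as noted. The main obstacle I expect is the bookkeeping in the second paragraph: verifying precisely that the double cosets (outside a null set) are in measurable bijection with the orbits $\sigma_r$, $r>0$, that the admissible measure on the double coset space is equivalent to $dr$, and that the conjugated little group intersected with $\tilde G_3^0$ is genuinely $\ZZ_4$ and not some larger or smaller group — this last point relies on the structure of $\tilde G_3^0=\langle G_3^0, r_1(\pi)\rangle$ and on $R_1(\pi)$ mapping $W_3\to W_3'$ while (generically) preserving the orbits, as in Remark~\ref{rmk:orb}. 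Once the parametrization of double cosets and the little-group computation are pinned down, the rest is a direct substitution into Theorem~\ref{thm:sgr}.
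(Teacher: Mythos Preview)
Your outline follows the same route as the paper: apply Mackey's subgroup theorem with $H_1=\overline{\Stab}_q$, $H_2=\tilde G_3$, identify the double cosets with the $\tilde G_3$-orbits $\sigma_r$, compute $H_g=\RR^{1+3}\rtimes\langle r_1(\pi)\rangle$ and $\mathcal V_g=W_{r,2h}$. That part is fine.

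There is, however, a genuine gap in the step you yourself flag as an obstacle: the claim that the admissible measure on the double coset space is equivalent to Lebesgue measure $dr$. This is not bookkeeping --- Mackey's theorem hands you only \emph{some} admissible measure class coming from Haar measure on $\tPoi$, and nothing in the orbit parametrization alone forces it to be Lebesgue on $r$. The paper supplies the missing idea: finite-helicity representations $U_h$ extend to the conformal group, so in particular dilations $\delta(t)$ act, and $U(\delta(t))U_h|_{\tilde G_3}U(\delta(t))^*\simeq U_h|_{\tilde G_3}$. On the other hand, conjugation by $\delta(t)$ rescales the orbit radius, sending $W_{r,2h}$ to $W_{e^t r,2h}$. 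Comparing the two decompositions forces the measure to be quasi-invariant under all scalings $r\mapsto e^t r$, hence equivalent to Lebesgue on $(0,\infty)$; a possible atom at $r=0$ is then excluded because $\sigma_0$ has zero Lorentz-invariant measure in the translation spectrum of $U_h$. Without this dilation argument (or some substitute), your proof does not close.

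A smaller point: your appeal to Appendix~\ref{App-C} for ``regularly related'' is misdirected. That appendix checks regularity of the $G_3^0$- and $\tilde G_3^0$-actions on $\RR^{1+3}$ (needed to apply Mackey machine to the semidirect products $G_3$, $\tilde G_3$), not the regular relatedness of $\overline{\Stab}_q$ and $\tilde G_3$ inside $\tPoi$. The paper verifies the latter directly: after identifying $H_1\backslash G/H_2$ with $\RR^+$ via $[\lambda_1(\ln r)]\leftrightarrow r^{-1}$ (and $[r_2(-\pi/2)]\leftrightarrow 0$), one checks that preimages of rational intervals are open in $G$, giving a countable separating family of measurable double-coset-saturated sets.
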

\begin{proof}
The $h$-helicity representation  $U_h$ is induced by the stabilizer $\overline\Stab_{q_1}$ of the point ${q_1}=(1,1,0,0)$.  Again $\overline\Stab_{q_1}$ is isomorphic to 
$\RR^{1+3} \rtimes  \widetilde \EE(2)$.
 In Theorem \ref{thm:sgr} we can consider $G=\tPoi$, $H_1= \RR^{1+3}\rtimes \widetilde \EE(2)$ and we want to study the restriction of $U_h$ to $H_2=\ti G_3$. We postpone the proof of the fact that $H_1$ and $H_2$ are regularly related. 

 {\bcc Let us now compute $H_g$ and $\mathcal{V}_g $ for several choices of $g$.   
First, for $g=\eins$,  $H_{g=\eins}=\RR^4\rtimes \langle r_1(\pi)\rangle\subset\RR^{4}\rtimes \widetilde \EE(2)$. Here we made use of the fact that 
$\ti G_3^0=\langle \lambda_3,\,r_3, r_1(\pi)\rangle$  and $ r_1(\pi)\in \Stab_{q_1}=\widetilde \EE(2)$.
Hence }
$$\mathcal{V}_{\eins}=\ind_{\RR^4\rtimes \langle r_1(\pi)\rangle}^{\tilde G_3}(\chi_{q_1}{\bc \cdot}V_{2h})= W_{1,2h},$$ by \eqref{W-rep} and the comment below  this formula. 

{\bcc Next, we consider $g={\pc \lambda_1(\ln r)}\in\tPoi$, where $\lambda_1$ is the lift of $\La_1$,  the boost in the $x_1$-direction, to $\mrm{SL}(2,\complex)$, see Sect.~\ref{local-nets}.  Setting  $t\mapsto {\Lambda}_1(t)=\Lambda(\lambda_1(t))$ and noting that 
$q_r= \Lambda_1(\ln\,r){q_1}=r\cdot{q_1}$,    the intersection  $H_g:=H_2\cap(g^{-1}H_{1}g)$ satisfies}
{\pc $$H_g=H_{\eins}\;\;  \text{ since  }\;\; \ti G_3\cap \left({\lambda}_1\left({\ln r}\right){\bcc^{-1}}\,\overline{\Stab}_{q_1} \,{\lambda_1}\left({\ln r}\right)\right)=\ti G_3\cap\overline{\Stab}_{q_{r^{-1}}}=\RR^4\rtimes \langle r_1(\pi)\rangle.$$}
Hence,
\begin{align}\label{V-computation}
\begin{split}
\mathcal{V}_{\Lambda_1(\ln r)}&=\Ind_{\RR^4\rtimes \langle r_1(\pi)\rangle}^{\ti G_3}\left((\chi_{q_1}{\bc \cdot}V_{2h})\circ \ad\, {\bc \lambda_1}\left({\ln r}\right)\right)\\&=\Ind_{\RR^4\rtimes \langle r_1(\pi)\rangle}^{\ti G_3}(\chi_{q_{r^{-1}}}{\bc \cdot} V_{2h})=W_{r^{-1},2h}
\end{split}
\end{align}
{\pc because  ${\lambda_1}\left({\ln r}\right)$ commutes with $r_1(\pi)$ (see comment on dual action in Sect. \ref{masslessrep}).}

{\bcc Finally, we consider $g=r_2(-\pi/2)$. We note for future reference that  $\Lambda(g)^{-1} q_1\in\sigma_0$ since  
$\Lambda(r_2(\pi/2))q_1=R_2(\pi/2)q_1 =(1,0,0,1)=:q_0\in\sigma_0$. We have }
$$H_{r_2(-\pi/2)}=\tilde G_3\cap r_2(\pi/2)\,\overline \Stab_{q_1}\,r_2(\pi/2)^{-1}=\tilde G_3\cap \overline{\Stab}_{q_0}=\RR^{1+3}\rtimes \langle r_3\rangle, $$
$$\mathcal{V}_{r_2(-\pi/2)}=\Ind_{H_{r_2({\bcc -}\pi/2)}}^{\tilde G_3}\left((\chi_{q_1}{\bc \cdot}V_{2h})\circ \ad\,{r_2(\pi/2)^{-1}}\right)=\Ind_{H_{r_2({\bcc -}\pi/2)}}^{\tilde G_3}(\chi_{q_0} \cdot \chi_{2h})$$
and the joint spectrum of translations is supported in $\sigma_0$.  Here $\chi_{2h}$ is the $2h$--character representation of $r_{\bcc 3}(\,\cdot\,)$, $\chi_{2h}(r_{\bcc 3}(\theta))=e^{i\frac\theta2 2h}$, $h\in\frac\ZZ2$.

{\pc Let us now we show that $[{\lambda}_1(t)]$ and$[r_2(-\pi/{\bcc 2})]$ cover all the equivalence classes in $H_1\backslash G\slash H_2$. } 
 Let $g_1,g_2\in\tPoi$  and assume that ${\pc p}_a$ and ${\pc p}_b$ are points on the same {\bcc massless} ${\bc \ti{G}_3}$-orbit\footnote{We have already one representative in each orbit, see above.} s.t.  ${\pc \Lambda}(g_1^{-1}){q_1}= {\pc p}_a$ and $ {\pc \Lambda}(g_2^{-1}){q_1}= {\pc p}_b$.  That is, there exists $x\in {\bc \ti{G}_3}$  such that ${\pc p}_b={\pc\Lambda}(x) {\pc p}_a$. Then 
$g_1x^{-1}g_2^{-1}=\mrm{s}\in {\bc \ov{\Stab}_{q_1} }$, thus $g_2=\mrm{s}^{-1}g_1x^{-1}$ and $g_2$ belongs to  $[g_1]$, the double coset of $g_1$ in $H_1\backslash \tPoi\slash H_2$.  {\pc In particular, for every $g\in G$ {\bcc for which} there exists $t_g\in\RR$ {\bcc such that } ${\bcc \La}( g^{-1}){q_1}$ and $\Lambda_{1}(-\ln t_g){q_1}$ belong to the same orbit $\sigma_{t_g^{-1}} $, {\bcc we have} $g\in[ \lambda_1\left({\ln t_g}\right)]$. All the other $g\in G$ such that $\Lambda (g)^{-1}q_1\in \sigma_0$ belong to the  double coset  $[r_2(-\pi/2)]$}. {\bcc This follows from the fact that $\Lambda(r_2(-\pi/2))^{-1}q_1=
q_0\in\sigma_0$ which was mentioned above. } 
 
{\bcc Now we will verify that  the sets $H_1$ and $H_2$ are regularly related. 
For this purpose},  we identify $H_1\backslash G\slash H_2$ with $\RR^+{\pc=[0,+\infty)}$  when $[{\pc \lambda_1}\left({\ln r}\right)]\in   H_1\backslash G\slash H_2$ is identified with $r^{-1}\in\RR^+$ and $[r_2(-\pi/2)]$ with $0$.  {\pc We consider $\RR^+$ with the relative topology inherited from $\RR$.}
The quotient topology w.r.t. the map $s:G\to H_1\backslash G\slash H_2$ defined by the previous identification  guarantees that intervals of $\real^+$  are open sets. Indeed, let $\epsilon>0$ and take an element $g$ in the preimage {\bcc $s^{-1}\left(I\right)$, $I:=\RR^+\cap (a,b)$. Then} the set  $\mathcal {N}_g^{\epsilon}=\{g'\in G: 
{\bcc \La({g'}^{-1})}{q_1}\in B_\epsilon( {\bcc \La({g}^{-1})  }{q_1}\}$ is an open neighbourhood of $g$: 
 {\bcc as the inverse image of the open set $B_\epsilon( {\bcc \La({g}^{-1})  }{q_1})$ under the continuous mapping $g'\mapsto \La(g'{}^{-1})q_1$}, the set $\mathcal{N}^{\epsilon}_g$ is  open and contains $g$.  {\bcc For sufficiently small $\epsilon$} it is  contained in $s^{-1}{\pc (I)}$, since if 
${\bcc  \La({g}^{-1}) }{q_1}\in\sigma_r$,     then, for every $g'\in\N_g^{\epsilon}$,  ${\bcc \La({g'}^{-1})}{q_1}\in \sigma_{r'}$ with $r,r'\in {\pc I}$ by continuity of the Poincar\'e action on Minkowski space.
With the identification $H_1\backslash G\slash H_2\simeq\RR^{+}$, it is easy to see that $H_1$ and $H_2$ are regularly related by using  intervals $(q-\frac1n,q+\frac1n)\cap \RR^+$, $n\in \nat$, with rational center contained in $\RR^{+}$. 
 (The second part of Definition~\ref{regularly-related-definition} is used here).

{\bcc Let us now describe the equivalence class of an admissible measure (cf.~Definition~\ref{regularly-related-definition})}.
{\pc Starting with a finite measure $ \mu$ on $G$ in the equivalence class of the Haar measure, we induce a measure on $\RR^+ {\bc \simeq} H_1\backslash G\slash H_2$, which we prove to be in the  measure class of the Lebesgue measure. Indeed, let $W_r$ be the representation $W_{r^{-1},2h}$ and $W_0$ be $\V_{[r_2(-\pi/2)]}$, we get the formula:
\beqa 
U_{h}|_{\tilde G_3}\simeq \int_{\RR^+}^\oplus {d\mu(r)} W_{r}.\label{U-W}
\eeqa

Finite helicity representations $U_h$ extend to the conformal group (cf. \cite{Mack}). In particular,  by dilation covariance,  we have that $U(\delta(t))U_h U(\delta(t))^*\simeq U_h$  {\bcc hence} $U(\delta(t))U_h|_{\tilde G_3} U(\delta(t))^*\simeq U_h|_{\tilde G_3}$.  Dilations change the unitary class of  $W_{r,2h}$ dilating the radius of the representation\footnote{\pc It is easy to see that the dual action of $\delta(t)$ on a character $\chi_p$ is given by $\delta(t)p=e^tp$.}, since 
\begin{align}
U(\delta(t))U_h|_{\tilde G_3} U(\delta(t))^*= U_h \circ \mrm{ad}(\delta(t))|_{\ti{G}_3}=\int_{\real^+}^{\oplus} d\mu(r) W_{e^{-t} r}.
\end{align}
 Similarly as in   Lemma 4.1 in \cite{LMR16}, this is a consequence of the following computation: 
\begin{align}\label{V-computation-one}
&W_r\circ \mrm{ad}(\delta(t))= W_{r^{-1},2h}  \circ \mrm{ad}(\delta(t))  = \Ind_{\RR^4\rtimes \langle r_1(\pi)\rangle}^{\ti G_3}\left((\chi_{q_1}{\cdot} V_{2h})\circ 
\ad\, { \lambda}_1\left({\ln r}\right) \circ \mrm{ad}(\delta(t)) \right)\nonumber\\
&=\Ind_{\RR^4\rtimes \langle r_1(\pi)\rangle}^{\ti G_3}(\chi_{q_{ e^t r^{-1}}}{\bc \cdot} V_{2h})=W_{e^tr^{-1},2h}=W_{e^{-t}r},
\end{align}
where $q_{ e^t r^{-1}}:=( e^t r^{-1}, e^t r^{-1}, 0,0)$ {\bcc and in the second step we used Lemma 4.1 of \cite{LMR16}.} An analogous computation can be done for $W_0$ {\bcc in order to show that} $W_0\circ{\bcc \mrm{ad}}(\delta(t))\simeq W_0$: since $ {\bcc \mrm{ad}}(\lambda_3)$ does not change the unitary equivalence class of the $\tilde G_3$-representations
\begin{align*}\mathcal{V}_{[r_2(-\pi/2)]}\circ {\bcc \mrm{ad}}(\lambda_3(t))\circ {\bcc \mrm{ad}}(\delta(t))&=\Ind_{H_{r_2(\pi/2)}}^{\tilde G_3}(\chi_{q_0} \cdot \chi_{2h})\circ {\bcc \mrm{ad}}(\lambda_3(t))\circ {\bcc \mrm{ad}}(\delta(t))=\\
&=\Ind_{H_{r_2(\pi/2)}}^{\tilde G_3}(\chi_{q_0} \cdot \chi_{2h})= \mathcal{V}_{[r_2(-\pi/2)]},
\end{align*}
{\bcc where we used} $(\chi_{q_0} \cdot \chi_{2h})\circ {\bcc \mrm{ad}}(\lambda_3(t))\circ  {\bcc \mrm{ad}}(\delta(t))=(\chi_{\lambda_3(-t) \delta(t)q_0} \cdot \chi_{2h})=\chi_{q_0} \cdot \chi_{2h}$ {\bcc referring again to Lemma 4.1 of \cite{LMR16}.}
Therefore, 
\begin{align*}
\int_{\RR^+}^\oplus d\mu(r)W_{r}\simeq   \int_{\real^+}^{\oplus} {d\mu(r)}W_{e^tr}=\int_{\real^+}^{\oplus} {d\mu_t(r)}W_{r},
\end{align*}
where $\mu_t(r){\bc :=}\mu(e^{-t}r)$.  We show that $\mu$ is equivalent to the Lebesgue measure:  assume by contradiction that there exists a set $E\subset \RR^+\backslash\{0\}$ such that $\mu(E)>0$ but $\mu_t(E)=0$ and consider  the multiplication operator by the projection $P_E:={\int^{\oplus}_Ed\mu(r)} \in U_{h}(\tilde G_3)'$. Then  we have that the subrepresentation $P_EU_h|_{\tilde G_3} P_E$ is not contained in 
$U(\delta(t))U_h|_{\tilde G_3} U(\delta(t))^*$ (representations of radius $r\in E$ have measure zero in the latter representation). In particular for every $t\in\RR$, $\mu_t$ is equivalent to $\mu$, hence to the Lebesgue measure on $\RR^+\backslash\{0\}$ up to a possible singular measure in $0$ (see  Proposition~11 of \cite{B04}). Since $\sigma_0$ has null measure in the joint spectrum of translation in $U_h$, by comparing the translation spectrum in left and right side of \eqref{U-W}, $\{r=0\}$ has null $\mu-$measure. Now the statement of the theorem is obtained by a change of variables $r\mapsto \fr{1}{r}$. }\end{proof} 
By {\bcc Propositions \ref{prop:PP},  \ref{Corollary-of-Mackey} and  \ref{prop:P} (ii)} we conclude the modularity condition for finite helicity representations:
\begin{corollary}\label{modularity-finite-helicity}
For every $h\in\frac\ZZ2$, $U_h$  and its multiples satisfy~\eqref{eq:cond}.
\end{corollary}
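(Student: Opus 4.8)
The plan is to deduce Corollary~\ref{modularity-finite-helicity} directly from the three results cited, since the individual modularity condition $\eqref{eq:cond}$ for the building blocks and for direct integrals has already been established. First I would recall that by Proposition~\ref{Corollary-of-Mackey} we have the disintegration $U_h|_{\ti G_3}\simeq \int_{\RR^+}^{\oplus} W_{r,2h}\, dr$ with respect to Lebesgue measure, where $2h$ is read modulo $4$. Thus $U_h|_{\ti G_3}$ is exactly a direct integral of multiples of the massless representations $W_{r,n}$ of $\ti G_3$ (here $n\equiv 2h$), so Proposition~\ref{prop:PP} applies with $\mu$ the Lebesgue measure on $\RR^+$, and yields that $U_h|_{\ti G_3}$ satisfies $\eqref{eq:cond}$.

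Next I would promote this to the full Poincar\'e group. The condition $\eqref{eq:cond}$ is stated for a fixed wedge (here $W_3$) and only involves the subgroups $G_3\subset \ti G_3\subset \tPoi$; since $U_h(r_1(\pi))\in U_h(\ti G_3)''$ already contains the information needed (recall $r_1(\pi)=r_{W_3}$ sends $W_3$ to $W_3'$ and $\ti G_3=\langle G_3, r_1(\pi)\rangle$), and $U_h(G_3)''\subset U_h(\tPoi)''$ trivially, the relation $U_h(r_{W_3})\in U_h(G_{W_3})''$ holds for $U_h$ as a $\tPoi$-representation as well. By transitivity of the $\tPoi$-action on wedges (noted right after Definition~\ref{def:mod}) the choice of wedge and of $r_W$ is immaterial, so $U_h$ satisfies $\eqref{eq:cond}$.

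Finally, to handle multiples $U_h\otimes 1_\K$ I would invoke Proposition~\ref{prop:P}~(ii), which states precisely that $\eqref{eq:cond}$ passes to $U\otimes 1_\K$ on $\H\otimes\K$. Combining these three inputs gives the claim for every $h\in\frac{\ZZ}{2}$.

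I expect no genuine obstacle here: all the substance has been absorbed into Propositions~\ref{Corollary-of-Mackey} and~\ref{prop:PP}, whose proofs contain the Mackey subgroup analysis and the delicate measure-class argument; the corollary is just the bookkeeping that assembles them. The only point requiring a modicum of care is making sure the direct-integral decomposition furnished by Proposition~\ref{Corollary-of-Mackey} is literally of the form required by the hypothesis of Proposition~\ref{prop:PP} (multiples of the $W_{r,n}$ over a positive Borel measure on $\RR^+$), but since the former produces exactly $\int_{\RR^+}^\oplus W_{r,2h}\,dr$ this match is immediate, possibly after splitting off the even/odd parts of $U_h(r(2\pi))$ via Proposition~\ref{prop:P}~(i) as remarked after the statement of Proposition~\ref{prop:PP}.
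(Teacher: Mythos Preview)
Your proposal is correct and follows essentially the same approach as the paper: the paper's proof is just the one-line citation ``By Propositions~\ref{prop:PP}, \ref{Corollary-of-Mackey} and \ref{prop:P}~(ii)'', and you have simply unpacked how these three inputs fit together. Your extra care about passing from $\ti G_3$ back to $\tPoi$ is harmless but unnecessary, since the paper already notes (right after Definition~\ref{def:mod}) that \eqref{eq:cond} depends only on the restriction to $\ti G_3$.
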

\begin{proposition}\label{Sum-of-two-helicities}
If $h$ is an integer, namely $U_h$ is bosonic, then  any finite or infinite multiple of $U_h\oplus U_{-h}$ satisfies~\eqref{eq:cond}.
\end{proposition}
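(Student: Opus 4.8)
The plan is to reduce the statement about $U_h \oplus U_{-h}$ to the already-established Corollary~\ref{modularity-finite-helicity} together with Proposition~\ref{prop:PP}. First I would restrict $U_h \oplus U_{-h}$ to the subgroup $\ti G_3$, since by Definition~\ref{def:mod} and the remark following it, it suffices to verify \eqref{eq:cond} for a fixed wedge, say $W_3$, and hence to work with representations of $\ti G_3$. By Proposition~\ref{Corollary-of-Mackey} applied to both summands,
\[
(U_h \oplus U_{-h})|_{\ti G_3} \simeq \int_{\RR^+}^{\oplus} W_{r,2h}\, dr \ \oplus\ \int_{\RR^+}^{\oplus} W_{r,-2h}\, dr,
\]
where the subscripts $2h$ and $-2h$ are read modulo $4$. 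So the restriction is a direct integral of massless $W_{r,n}$ representations of $\ti G_3$, and Proposition~\ref{prop:PP} directly gives that such a direct integral satisfies \eqref{eq:cond}. Multiples are then handled by Proposition~\ref{prop:P}~(ii), tensoring with a multiplicity Hilbert space $\K$.

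The one point that requires care — and which I expect to be the main obstacle — is that \emph{both} summands must transform under $\ti G_3$ in a compatible way, i.e.\ the $r_1(\pi)$-action (equivalently, the action of the element $r_W$ sending $W_3$ to $W_3'$) must be correctly implemented on the \emph{whole} space $\hil_h \oplus \hil_{-h}$, not just on each summand separately. Since $U_h$ and $U_{-h}$ are each already $\tPoi$-representations, their direct sum is too, and the restriction to $\ti G_3$ is simply the direct sum of the restrictions; the disintegration above is a disintegration of this direct sum as a $\ti G_3$-representation. Proposition~\ref{prop:PP} is stated for arbitrary direct integrals $\int^{\oplus}_{\RR^+} U_r\, d\mu(r)$ with each $U_r$ a multiple of some $W_{r,n}$, so the direct-sum structure is absorbed into the multiplicity. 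Concretely, at $\mu$-a.e.\ radius $r$ the fiber is $W_{r,2h} \oplus W_{r,-2h}$, which is a (finite) multiple of $W_{r,2h}$ precisely when $2h \equiv -2h \pmod 4$, i.e.\ when $h$ is even, and otherwise is a direct sum of two inequivalent $W_{r,n}$'s; either way it is covered by the hypothesis of Proposition~\ref{prop:PP} once one invokes Proposition~\ref{prop:P}~(i) to split off the $U(r(2\pi)) = \pm 1$ subspaces as needed. Here the integrality of $h$ (bosonic case) enters: it guarantees $2h \equiv 0 \pmod 2$, so both $W_{r,2h}$ and $W_{r,-2h}$ restrict to the \emph{bosonic} $W_{r,0}$ of $G_3$, keeping us inside the class treated by Proposition~\ref{prop:PP}.

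So the proof would run: restrict to $\ti G_3$; apply Proposition~\ref{Corollary-of-Mackey} to each of $U_h$ and $U_{-h}$ to get a direct-integral decomposition over $\RR^+$ into massless $W_{r,\pm 2h}$ representations; observe that since $h \in \ZZ$ these are all bosonic upon further restriction to $G_3$; invoke Proposition~\ref{prop:PP} (using Proposition~\ref{prop:P}~(i) to reduce to the pure cases $W_{r,0}$ and $W_{r,1}$ on $G_3$) to conclude \eqref{eq:cond} for $U_h \oplus U_{-h}$; and finally use Proposition~\ref{prop:P}~(ii) to pass to arbitrary finite or infinite multiples. By Theorem~\ref{thm:starcor} this then yields the B-W and duality properties for any covariant net of standard subspaces, but the Proposition as stated asks only for \eqref{eq:cond}, so the argument stops at the application of Propositions~\ref{prop:PP} and~\ref{prop:P}. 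I do not anticipate genuinely new technical difficulties beyond correctly bookkeeping the mod-$4$ versus mod-$2$ labels and confirming that the direct-sum/multiplicity structure is exactly what Proposition~\ref{prop:PP} permits.
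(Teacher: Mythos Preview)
Your approach is essentially the paper's: restrict to $\tilde G_3$, disintegrate via Proposition~\ref{Corollary-of-Mackey}, and invoke Propositions~\ref{prop:PP} and~\ref{prop:P}. However, you have a mod-$4$ arithmetic slip. You write that $2h\equiv -2h\pmod 4$ holds ``precisely when $h$ is even,'' but in fact $4h\equiv 0\pmod 4$ for \emph{every} integer $h$, so $2h\equiv -2h\pmod 4$ always (concretely, $2h$ is $0$ or $2$ mod $4$, and each of these equals its own negative mod $4$). Hence for every integer $h$ one has $W_{r,2h}\simeq W_{r,-2h}$, so $U_h|_{\tilde G_3}\simeq U_{-h}|_{\tilde G_3}$ and the fibre at each $r$ is simply a multiple of a single $W_{r,2h}$. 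This is exactly the key observation in the paper's proof, after which Proposition~\ref{prop:PP} together with Proposition~\ref{prop:P}~(ii),(iii) finishes the argument.

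Your proposed fallback for the (nonexistent) odd-$h$ case would not have worked. Proposition~\ref{prop:P}~(i) splits according to the eigenvalue of $U(r(2\pi))$, but $W_{r,0}$ and $W_{r,2}$ both have $r(2\pi)$ acting as $+1$, so that split is trivial and cannot separate them. In fact a fibre of the form $W_{r,0}\oplus W_{r,2}$ would genuinely \emph{fail} \eqref{eq:cond}: after identifying the two copies via a $G_3$-intertwiner the $G_3$-bicommutant is $B(\hil)\otimes 1_{\CC^2}$, whereas $r_1(\pi)$ acts as $A\oplus B$ with $A\neq B$ (since $W_{r,0}$ and $W_{r,2}$ are inequivalent over $\tilde G_3$), which is not of the form $T\otimes 1$. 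So Proposition~\ref{prop:PP} really does need each fibre to be a multiple of a \emph{single} $W_{r,n}$, and it is the identity $2h\equiv -2h\pmod 4$ for integer $h$ --- not a splitting trick --- that secures this.
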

\begin{proof}
By Proposition~\ref{Corollary-of-Mackey} $U_h$ and $U_{-h}$ have unitarily equivalent restrictions to ${\tilde G_3} $. 
Indeed, since $h$ is supposed to be integer,  $2h$ is equal to $0$ or $2$ modulo $4$. Clearly {\pc $0\equiv-0$ (mod $4$) and $2\equiv -2$ (mod $4$)}. By Proposition \ref{prop:PP} and Proposition \ref{prop:P}  (ii),(iii), the direct sum $U_h\oplus U_{-h}$ satisfies \eqref{eq:cond}.  Any multiple of $U$ satisfies \eqref{eq:cond} again by Proposition \ref{prop:P} \textit{(ii)}.
\end{proof}
The main result of this section is a corollary of Theorem \ref{thm:starcor}:  
\begin{theoreme}\label{one-particle-net-BW}
Every net of real subspaces  $H$ undergoing the action of  a finite or infinite multiple of $U=U_h\oplus U_{-h}$, {where $h\in \mathbb{Z}$}, satisfies the B-W and the duality properties.
\end{theoreme}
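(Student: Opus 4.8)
The plan is to combine the modularity condition proved above for finite helicity representations with the abstract criterion of Theorem~\ref{thm:starcor}. Concretely, let $U = U_h \oplus U_{-h}$ with $h \in \ZZ$ (and more generally a finite or infinite multiple thereof). By Proposition~\ref{Sum-of-two-helicities}, $U$ satisfies the modularity condition \eqref{eq:cond} as a representation of $\tPoi$: the restrictions $U_h|_{\tilde G_3}$ and $U_{-h}|_{\tilde G_3}$ are unitarily equivalent by Proposition~\ref{Corollary-of-Mackey} (since $2h \equiv -2h \pmod 4$ when $h$ is an integer), and Propositions~\ref{prop:PP} and \ref{prop:P}~(ii),(iii) upgrade this to \eqref{eq:cond} for the direct sum and its multiples. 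Thus the hypothesis of Theorem~\ref{thm:starcor} is met.

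Next I would simply invoke Theorem~\ref{thm:starcor}: since $U$ is a positive energy unitary representation of $\tPoi$ satisfying \eqref{eq:cond}, every local $U$-covariant net of standard subspaces $H$ on wedges — i.e.\ every pair $(U,H)$ satisfying properties 1--5 of Sect.~\ref{One-particle-nets-sub} — satisfies both the B-W property (property 6) and the duality property (property 7). This is exactly the assertion of Theorem~\ref{one-particle-net-BW}, so the proof is essentially a one-line deduction once the modularity condition has been established.

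In this formulation there is no real obstacle remaining at the level of Theorem~\ref{one-particle-net-BW} itself; the genuine work was done in establishing Corollary~\ref{modularity-finite-helicity} and Proposition~\ref{Sum-of-two-helicities}, whose hardest ingredient is the Mackey-subgroup disintegration $U_h|_{\tilde G_3} \simeq \int_{\RR^+}^\oplus W_{r,2h}\,dr$ of Proposition~\ref{Corollary-of-Mackey} together with the verification, via dilation covariance, that the disintegrating measure is equivalent to Lebesgue measure so that the orbit $\sigma_0$ can be discarded. The only point one should be slightly careful about when citing Theorem~\ref{thm:starcor} is that it applies to the full $\tPoi$ (not merely $\tilde G_3$), which is why Proposition~\ref{Sum-of-two-helicities} is phrased for $U_h \oplus U_{-h}$ as a Poincar\'e representation; one then applies Theorem~\ref{thm:starcor} verbatim.

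\begin{proof}
By Proposition~\ref{Sum-of-two-helicities}, any finite or infinite multiple of $U = U_h \oplus U_{-h}$ with $h \in \ZZ$ is a positive energy unitary representation of $\tPoi$ satisfying the modularity condition \eqref{eq:cond}. Hence Theorem~\ref{thm:starcor} applies: every local $U$-covariant net of standard subspaces $H$ on wedges satisfies the Bisognano-Wichmann and the duality properties.
\end{proof}
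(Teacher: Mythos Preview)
Your proposal is correct and matches the paper's approach exactly: the paper presents Theorem~\ref{one-particle-net-BW} as an immediate corollary of Theorem~\ref{thm:starcor}, with the modularity condition supplied by Proposition~\ref{Sum-of-two-helicities}.
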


A final remark on finite helicity one particle net{\bcc s} is the following. Massless non-zero finite helicity representations of the Poincar\'e group have to be properly coupled in order to act consistently on a net of  standard subspaces on spacelike cones\footnote{A spacelike cone is a set of the form $C=a+\cup_{\lambda>0}\lambda O$ where  $a\in\RR^{1+3}$ is the apex and $O$ is a double cone which contains only spacelike points and its closure does not contain the origin. It is the intersection of finitely many wedge regions.}. Indeed, by Theorem \ref{thm:starcor},  $U_h$ satisfies the modularity condition \eqref{eq:cond} and any net of standard subspaces it acts covariantly on satisfies the B-W property. Following \cite{GL}, when the B-W property holds then  by spacelike cone localization property one deduces that $U_h$ is covariant under the action of the wedge modular conjugations, namely  $U_h$ extends to an (anti-)unitary representation of the group $\wt\cP_+=\langle\tPoi,\Theta\rangle$, where $\Theta$ is the space and time reflection. The extension is unique up to unitary equivalence 
(Proposition 2.3 {\bcc of} \cite{LMPR} or \cite{NO} for an abstract discussion).
This is not possible for the irreducible finite helicity representations as they are not induced by a selfconjugate representation of the little group (cf. for instance  \cite{var}). In particular any anti-unitary operator implementing the PT symmetry (no charge C considered in this one particle setting) takes $U_h$ into $U_{-h}$.

\section{Bisognano-Wichmann property and asymptotic completeness}\label{last-section}
\setcounter{equation}{0}

In this section we apply  Theorem~\ref{one-particle-net-BW} to a concrete one-particle net 
of standard subspaces in the subspace $\nhil^{(1)}$ from Definition~\ref{single-particle-definition}
and then verify the Bisognano-Wichmann property on $\nhil^{\out}$ using scattering theory.

  Let $\mcF_{\mrm{sa}}(W)\subset \mcF(W)$ be the subspace of self-adjoint operators.  
 It is well known and easy to check that if    $(\mcF,U, \Om)$ is a local net of von Neumann algebras in the
sense of Definition~\ref{HK}, then
\beqa
H(W):={\bc \ov{ \{\, A\Om   \,:\,    A\in \mcF_{\mrm{sa}}(W)\,\}}}, \label{standard-scattering-spaces}
\eeqa
is a net of standard subspaces on wedges w.r.t. $U$, i.e., it satisfies properties 1.-5. of Sect.~\ref{One-particle-nets-sub}.
Motivated by formula~(\ref{single-particle-photon}), we define for any $W\in \mathcal{W}$ the following real subspace of $\nhil^{(1)}$ 
\beqa
H^{(1)}(W):=P^{(1)}H(W). \label{H-one}
\eeqa
Furthermore, we recall that by the Borchers theorem $\De_{\mcF (W)}, J_{\mcF(W)}, S_{\mcF(W)}$
commute with the mass operator and therefore can be restricted to the domain
\beqa 
D:=\{\, P^{(1)}A\Om   \,:\,  A\in \mcF (W)\,\},
\eeqa
which is dense in $\nhil^{(1)}$. (This is proven analogously as Corollary~\ref{Borchers-corollary}). The following result holds:
\begin{proposition}\label{last-section-proposition}
Let $(\A, U, \Omega)$ be a local net of von Neumann algebras  describing massless particles.
\begin{enumerate}
\item[(i)]  The map $H^{(1)}(W)=P^{(1)}H(W)$, defined in (\ref{standard-scattering-spaces})-(\ref{H-one}) above, gives a one particle net of {\bc standard} subspaces on wedges w.r.t.  $U^{(1)}:=P^{(1)}UP^{(1)}$, in the sense of properties 1.-5. of Sect.~\ref{One-particle-nets-sub}. 
\item[(ii)] If the  theory $(\mcF, U,\Om)$ contains massless Wigner particles with helicity zero or with helicities $(h,-h)$, $h\in \nat$, then the one-particle net 
$W\mapsto H^{(1)}(W)$ satisfies  the   B-W and duality properties. 
\item[(iii)]   $J_{H^{(1)}(W)}=J_{\mcF(W)}P^{(1)}$ and $\De_{H^{(1)}(W)}^{it}=\De^{it}_{\mcF(W)} P^{(1)}$ for  $t\in \real$.   
\end{enumerate}
\end{proposition}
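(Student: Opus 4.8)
The plan is to prove the three parts in the order (i), (iii), (ii), since (iii) gives the structural identification that makes (ii) a direct corollary of Theorem~\ref{one-particle-net-BW}.

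For part (i), I would first observe that the subspace $H^{(1)}(W)=P^{(1)}H(W)$ is real-linear and closed: closedness follows because $P^{(1)}$ commutes with $\De_{\mcF(W)}$ and $J_{\mcF(W)}$ (Corollary~\ref{Borchers-corollary} applied to the net of von Neumann algebras, exactly as indicated in the paragraph preceding the statement), so $P^{(1)}$ commutes with $S_{\mcF(W)}$ in the sense of (\ref{detailed-commutation}), and hence by Lemma~\ref{lem:ort1} the decomposition $H(W)=P^{(1)}H(W)\oplus(1-P^{(1)})H(W)$ holds with $P^{(1)}H(W)$ standard in $P^{(1)}\nhil$. Wait — $P^{(1)}$ is not the projection onto $\nhil^{(1)}$ alone; I must be careful: $P^{(1)}$ is the projection on $\nhil^{(1)}$, and the relevant projection commuting with $S_{\mcF(W)}$ is the spectral projection $\mathbf 1_{\{0\}}(M)=|\Om\ran\lan\Om|+P^{(1)}$. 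Since $|\Om\ran\lan\Om|$ also commutes with $S_{\mcF(W)}$ (as $\Om\in H(W)$), so does $P^{(1)}$. With this in hand, Lemma~\ref{lem:ort1} gives that $H^{(1)}(W)$ is standard in $\nhil^{(1)}$. The net properties 1.--5. for $W\mapsto H^{(1)}(W)$ then transfer from those of $W\mapsto H(W)$: isotony and Poincar\'e covariance are immediate since $P^{(1)}$ is Poincar\'e-invariant and commutes with $U$; positivity of the energy is inherited; cyclicity is the standardness just established; and locality $H^{(1)}(W_1)\subset H^{(1)}(W_2)'$ for $W_1\subset W_2'$ follows from $H(W_1)\subset H(W_2)'$ together with Lemma~\ref{lem:cons} applied to the projection $P^{(1)}$ (which commutes with $S_{H(W_2)}$).

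For part (iii), the key point is that by Lemma~\ref{lem:ort1} the Tomita operator of the reduced standard subspace $P^{(1)}H(W)$ in $P^{(1)}\nhil$ is exactly the restriction $S_{\mcF(W)}|_{P^{(1)}\nhil}$; taking polar decompositions and using that both $\De_{\mcF(W)}$ and $J_{\mcF(W)}$ commute with $P^{(1)}$ yields $J_{H^{(1)}(W)}=J_{\mcF(W)}P^{(1)}$ and $\De^{it}_{H^{(1)}(W)}=\De^{it}_{\mcF(W)}P^{(1)}$, which is the claim of (iii). (One should note the subtlety pointed to in the remark before the statement: $S_{\mcF(W)}$, $\De_{\mcF(W)}$, $J_{\mcF(W)}$ are not literally the Tomita data of a standard subspace but of the von Neumann algebra $\mcF(W)$ with vector $\Om$; nevertheless they restrict to the subspace $H(W)=\ov{\mcF_{\mrm{sa}}(W)\Om}$ as the corresponding subspace Tomita data, because $H(W)+iH(W)$ is the closure of $\mcF(W)\Om$ and $S_{\mcF(W)}A\Om=A^*\Om$ acts anti-linearly with $H(W)$ as fixed-point set.)

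For part (ii), once (iii) is known, the representation $U^{(1)}=P^{(1)}UP^{(1)}$ on $\nhil^{(1)}$ is, by Definition~\ref{single-particle-definition}, a finite or infinite multiple of $U_h\oplus U_{-h}$ with $h\in\mathbb Z$ (or the scalar case $h=0$). Theorem~\ref{one-particle-net-BW} then applies verbatim to the net $W\mapsto H^{(1)}(W)$ and $U^{(1)}$, yielding the B-W property $U^{(1)}(\lambda_W(2\pi t))=\De^{-it}_{H^{(1)}(W)}$ and the duality property $H^{(1)}(W)'=H^{(1)}(W')$. I expect the main obstacle to be the careful bookkeeping in part (i)/(iii): making rigorous that $P^{(1)}$ genuinely commutes with the (unbounded, non-self-adjoint) operator $S_{\mcF(W)}$ in the precise sense of (\ref{detailed-commutation}) — this rests on Corollary~\ref{Borchers-corollary} applied in the von Neumann algebra setting, and on checking that $|\Om\ran\lan\Om|$ commutes with $S_{\mcF(W)}$ so that $P^{(1)}=\mathbf 1_{\{0\}}(M)-|\Om\ran\lan\Om|$ inherits the commutation. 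The remaining steps are routine transfers of properties through Lemmas~\ref{lem:ort1}, \ref{lem:ort2}, \ref{lem:cons} and an application of Theorem~\ref{one-particle-net-BW}.
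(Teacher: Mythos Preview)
Your proposal is correct and follows essentially the same route as the paper: Lemmas~\ref{lem:ort1}, \ref{lem:ort2}, \ref{lem:cons} for part~(i), Theorem~\ref{one-particle-net-BW} for part~(ii), and the identification $S_{H^{(1)}(W)}=S_{\mcF(W)}|_{P^{(1)}\nhil}$ for part~(iii). Two minor remarks: first, your careful decomposition $P^{(1)}=\mathbf 1_{\{0\}}(M)-|\Om\ran\lan\Om|$ to justify commutation with $S_{\mcF(W)}$ is more explicit than what the paper writes (the paper simply asserts the commutation as a consequence of Borchers' theorem), and is a welcome clarification; second, for part~(iii) the paper spells out an approximation argument with sequences $P^{(1)}A_{i,n}\Om$ converging to $\xi_i\in H^{(1)}(W)$ and then commutes $P^{(1)}$ past $S_{\mcF(W)}$ in the limit, whereas you invoke the identification $S_{EH}=S_H|_{E\hil}$ (stated in the proof of Lemma~\ref{lem:ort2}) directly --- both arguments are equivalent.
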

\begin{remark} The locality property for the net $W\mapsto H^{(1)}(W)$ can be extracted from \cite{Bu77}, where
\beqa
\lan \Om, A_1P^{(1)}A_2\Om\ran= \lan \Om, A_2P^{(1)}A_1\Om\ran
\eeqa
was obtained for $A_1, A_2$ localized in spacelike separated double cones by the JLD technique. In our context the same property
follows from the Borchers theorem and Lemma~\ref{lem:cons}.
\end{remark}
\begin{proof} (i) By Lemmas \ref{lem:ort1}, \ref{lem:ort2} and \ref{lem:cons}
we have that $H(W)= P^{(1)}H(W)\oplus (1- P^{(1)})H(W)$ and   $W\mapsto P^{(1)}H(W)$   defines a local net of standard subspaces  on $\nhil^{(1)}:=P^{(1)} \nhil$.
It transforms under the massless Poincar\'e representation $P^{(1)}UP^{(1)}$ and satisfies the assumptions 1.-5  in Sect.~\ref{One-particle-nets-sub}. 

(ii) For helicities as in the statement of the proposition we obtain the B-W property for the one-particle net from Theorem~\ref{one-particle-net-BW}. 


(iii) Let $\xi_i\in H^{(1)}(W)$, $i=1,2$, and $ P^{(1)}A_{i,n}\Om$, $n\in \nat$, be the corresponding approximating sequences with $A_{i,n}^*=A_{i,n}$.
Then
\begin{align}
S_{H^{(1)}(W)}(\xi_1+i\xi_2)=\xi_1-i\xi_2&=\lim_{n\to\infty} P^{(1)}  (A_{1,n}\Om -iA_{2,n})\Om     \non\\
&=\lim_{n\to\infty} P^{(1)} S_{\mcF(W)}(A_{1,n}\Om +iA_{2,n})\Om  \non\\
&=\lim_{n\to\infty} S_{\mcF(W)}P^{(1)} (A_{1,n}\Om +iA_{2,n})\Om. 
\end{align}
Hence   $H^{\bc (1)}(W)+iH^{\bc (1)}(W)$ belongs to the domain of the closure of  $S_{\mcF(W)}P^{(1)}$ {\bc and}
the latter operator coincides with $S_{H(W)}$ on $H^{\bc (1)}(W)+iH^{\bc (1)}(W)$. By the uniqueness of the polar decomposition, we
have $J_{H^{(1)}(W)}=J_{\mcF(W)} P^{(1)}$ and $\De^{it}_{ H^{(1)}(W)  } = \De_{\mcF(W)} ^{it}   P^{(1)}$. 
\end{proof}
After this preparation we give a massless version of Lemma 6 and Proposition 7 from \cite{Mu01} and thereby conclude the 
proof of the Bisognano-Wichmann property for asymptotically complete massless theories, as stated in Theorem~\ref{main-theorem}. 
\begin{lemma}\label{lemma-n-particle} In each $n$-particle subspace $\nhil^{(n)}$, $n\geq 2$, there is a total
set of scattering states $\Psi^{\out}:= A_n^{\out}\{f_n\}^+ \ldots A_1^{\out}\{ f_1\}^+\Om$ with the localisation regions $\mco_i$ of $A_i$ chosen s.t.  $\mco_n\subset W_{1}$. 
Furthermore, $(\bv_i-\bv_n)_1<0$ for $\bv_i\in \supp\, f_i$ and $\bv_n\in \supp\, f_n$ for $i\neq n$. 
\end{lemma}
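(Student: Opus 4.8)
The statement asserts that scattering states of this specific form are total in $\nhil^{(n)}$, with localisation and velocity support constraints. The plan is to start from the fact — recorded in Proposition~\ref{photon-scattering} — that vectors $\Psi^{\out}=\Phi_1\tout\cdots\tout\Phi_n$ with $\Phi_i\in\nhil^{(1)}$ span $\nhil^{(n)}$, and that each $\Phi_i$ can be taken of the form $A^{\out}\{f\}\Om=P^{(1)}f(\mathbf P/|\mathbf P|)A\Om$. The strategy is then to show that by suitable choices one can (a) arrange the velocity supports $\supp f_i$ to be small and mutually disjoint, with the $n$-th one placed so that $(\bv_i-\bv_n)_1<0$ for all $i\neq n$, and (b) translate the localisation region $\mco_n$ of $A_n$ into $W_1$, while keeping the vector close to a prescribed one.

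\textbf{Key steps.} First I would fix an arbitrary vector $\Psi^{\out}=\Phi_1\tout\cdots\tout\Phi_n\in\nhil^{(n)}$ and approximate each $\Phi_i$ by a finite sum of single-particle vectors $A_{i}^{\out}\{f_i\}\Om$; by linearity of $\tout$ (symmetric tensor product) it suffices to treat pure products. Second, I would use a partition of unity on $S^2$ to decompose each $f_i$ into pieces of arbitrarily small support, so that after relabelling we may assume each $\supp f_i$ is contained in a small cap around some direction $\bn_i$; here I must use that the $A^{\out}\{f\}^{+}$ for different $f$ still create single-particle vectors depending only on $f\,(\mathbf P/|\mathbf P|)A\Om$, so refining $f$ only refines the single-particle state. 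Third — the geometric heart — I would choose the directions $\bn_i$ so that $\bn_n$ has strictly the largest first component, i.e.\ place $\supp f_n$ near $\bn=(1,0,0)$ and all other $\supp f_i$ in the half-sphere $\{n_1<(\bn_n)_1\}$; since the caps are small this gives $(\bv_i-\bv_n)_1<0$ for $\bv_i\in\supp f_i$, $\bv_n\in\supp f_n$. Fourth, I would use spacetime translation covariance: $A_n(x)^{\out}\{f_n\}=\wtU(x)A_n^{\out}\{f_n\}\wtU(x)^*$ up to a phase factor from the translation acting on the single-particle state, so translating $A_n$ into $W_1$ only changes the resulting single-particle vector by multiplication with $e^{i p\cdot x}$ on momentum space, which can be absorbed or made negligible (in the limit of large translation in a spacelike direction orthogonal to the velocity cone, or simply by noting that we are free to choose any $\Phi_n$ dense in $\nhil^{(1)}$ and translated local operators still generate a dense set of such). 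Finally I would invoke density of the resulting family, via Proposition~\ref{photon-scattering}(b) which controls inner products of scattering states in terms of single-particle inner products, so that totality in $\nhil^{(n)}$ reduces to totality of the admissible single-particle vectors in $\nhil^{(1)}$.

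\textbf{Main obstacle.} The delicate point is reconciling the three requirements simultaneously: the single-particle vectors $A_i^{\out}\{f_i\}\Om$ must remain dense as a set while we impose (i) $\mco_n\subset W_1$, (ii) disjoint small velocity supports, and (iii) the ordering $(\bv_i-\bv_n)_1<0$. Requirement (ii) is not restrictive because any single-particle vector is a norm limit of sums of vectors with arbitrarily small, disjoint velocity supports (partition of unity in momentum space together with density of the $f(\mathbf P/|\mathbf P|)A\Om$). Requirement (iii) then constrains which velocity region the $n$-th particle may occupy, but since $W_1$ is Poincaré-equivalent to any wedge and the relevant single-particle states localised in $W_1$ with velocity support near any fixed direction are still dense in $\nhil^{(1)}$ (by Reeh–Schlieder for the wedge algebra and the explicit formula for $A^{\out}\{f\}\Om$), one can push the $n$-th localisation region into $W_1$ while its velocity support sits near $(1,0,0)$. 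The real work is checking that after all these reductions the family is genuinely \emph{total} — i.e.\ that no vector in $\nhil^{(n)}$ is orthogonal to all of them — which follows from Proposition~\ref{photon-scattering}(b) and a standard argument: orthogonality to all such products forces, by varying the $f_i$ and the localisation regions, orthogonality of the corresponding permanents of single-particle inner products, and hence vanishing on a dense set of single-particle configurations.
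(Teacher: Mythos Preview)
Your strategy is essentially the paper's, but two steps need repair. First, decomposing the $f_i$ by a partition of unity into small caps on $S^2$ does not by itself produce factors whose supports have \emph{disjoint projections onto the first axis}: two of the $n$ caps in a given product term may well overlap in the $n_1$-variable, and then no relabeling can give the strict inequality $(\bv_i-\bv_n)_1<0$. The paper instead slices the sphere by planes orthogonal to the first axis, so distinct slices automatically have disjoint $n_1$-ranges, and then invokes the absolute continuity of the momentum spectral measure on $\nhil^{(1)}$ (cf.~\cite{BF82}) to show that the ``diagonal'' product terms, in which two factors land in the same slice, become negligible as the slicing is refined. You omit this ingredient, and without it the approximation does not close. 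Relatedly, your phrase ``place $\supp f_n$ near $(1,0,0)$'' suggests restricting the $n$-th velocity to a fixed region, which would destroy totality; what one actually does is \emph{relabel}: in each product term, whichever factor occupies the slice with the largest $n_1$-range is renamed the $n$-th.

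Second, the translation idea in your step~4 does not work. Translating $A_n$ by $x$ sends the created single-particle vector to $e^{iPx}P^{(1)}f_n(\mathbf P/|\mathbf P|)A_n\Om$, a genuinely different vector; there is no limit or absorption mechanism that recovers the original one. The correct tool is the Reeh--Schlieder property for wedges, which you mention only as a fallback: vectors $P^{(1)}f_n(\mathbf P/|\mathbf P|)A\Om$ with $A$ localised in a double cone $\mco\subset W_1$ are already dense in the relevant range, and by Proposition~\ref{photon-scattering}(a) this suffices to approximate the scattering state. The paper uses exactly this, together with the commutativity of the creation operators, so that after approximating $\Phi_{i_0}$ one can freely permute it into the $n$-th slot.
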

\vspace{-0.2cm}
\begin{proof}  We consider an arbitrary scattering state $\Psi^{\out}=(\Phi_n\tout\cdots\tout \Phi_1)$ constructed using $A_1, \ldots, A_n$ {\bc localized} in some arbitrary double cones 
$\ti\mco_1, \ldots, \ti\mco_n $ and arbitrary {\bc smooth} functions $f_1, \ldots f_n$ on {\bc $S^2$}. By cutting the sphere of velocities into small slices with planes orthogonal to the 1-st axis,
we can approximate $\Psi^{\out}$ with linear combinations of scattering states $\Psi^{\out}_1$ such that the projections of $\supp f_i$ on the 1-st axis are disjoint. 
(Cf.~Proposition~\ref{photon-scattering},  formula~(\ref{single-particle-photon}) and {\bc the} absolute continuity of the momentum spectral measure~\cite{BF82}). Let $f_{i_0}$ be such function that
\beqa 
(\bv_i-\bv_{i_0})_1<0 
\eeqa
for all  $\bv_{i_0}\in \supp\, f_{i_0}$ and $\bv_i\in \supp\, f_i$, $i\neq i_0$. Up to numbering, these scattering states satisfy the condition from the lemma concerning 
velocities, but possibly not the condition concerning localisation regions. Therefore, using again   Proposition~\ref{photon-scattering}, 
formula~(\ref{single-particle-photon})
and the Reeh-Schlieder property for wedges, we approximate each $\Psi^{\out}_1$ by linear combinations of vectors $\Psi^{\out}_2$ constructed using $A'_i$
 localised in double cones $\mco_i$ s.t. $\mco_{i_0}\in W_{1}$.   
Due to the canonical commutation relations of asymptotic fields (cf. formula~(\ref{CCR-asymptotic}) above) each such vector  
 $\Psi^{\out}_2=(\Phi'_n\tout\cdots\tout \Phi'_1)$ coincides with $(\Phi'_{i_0}\tout\cdots\tout \Phi'_n\tout\cdots \tout\Phi'_1)$.  By changing  the numbering, we obtain the claim. \end{proof}
\vspace{-0.6cm} 
\bep\label{final-proposition-scattering} 
 If the unitary groups ${\bc \real\ni s\mapsto} \De^{is}_{\mcF(W_{1})}$ and {\bc $  \real\ni s\mapsto U(\lambda_{W_1}(-2\pi s))$}  coincide on $\nhil^{(1)}$, they also coincide on the subspace $\nhil^{\mathrm{out}}$
of scattering states.
\eep
\begin{proof}Let $V_s:= \De^{is}_{\mcF(W_{1})} U(\Lambda_{W_1}(2\pi s))$. By induction over the particle number $n$, we show that $V_s$ is the unity on each $\nhil^{(n)}$. 
Let $\Psi^{\out}=(\Phi_n\tout\cdots\tout \Phi_1)$ be a scattering state as in Lemma~\ref{lemma-n-particle} with $\Phi_i=A_i^{\out}\{f_i\}\Om$.  
{\bcc For $n=0$ we have $\Phi^{\out}=\Om$ and the statement follows from the invariance of the vacuum under $U(\, \cdot\,)$ and $s\mapsto \De^{is}_{\mcF(W_{1})} $. For $n=1$ the statement holds by Proposition~\ref{last-section-proposition} (ii). Now let $n\geq 2$ be arbitrary and suppose that the statement holds for $n'<n$. } 
Making use of Proposition~\ref{creation-annihilation}~(d), we write
\beqa
\Psi^{\out}\y&=&\y A_n^{\out}\{f_n\}^+\ldots A_1^{\out}\{f_1\}^+\Om\non\\
&=&\y (A_n^{\out}\{f_n\}^++ A_n^{\out}\{f_n\}^-)  \ldots (A_1^{\out}\{f_1\}^+ + A_1^{\out}\{f_1\}^-)\Om+\check\Psi^{\out}\non\\
&=&\y A_n^{\out}\{f_n\}\ldots A_1^{\out}\{f_1\}\Om+\check\Psi^{\out},
\eeqa
where, by the canonical commutation relations for the asymptotic creation/annihilation operators {\bc (cf. formula~(\ref{CCR-asymptotic}))}, the compensating vector $\check\Psi^{\out}$ has components only in 
$\nhil^{(\ell)}$ for $\ell<n$.  Thus we have, by the induction hypothesis, $V_s \check\Psi^{\out}=\check\Psi^{\out}$, and it suffices to consider 
$\hat \Psi^{\out}:=A_n^{\out}\{f_n\}\ldots A_1^{\out}\{f_1\}\Om$. We can write
\beqa
V(s)\hat\Psi^{\out}\y&=&\y V_sA_n^{\out}\{f_n\}\ldots A_1^{\out}\{f_1\}\Om\non\\
\y&=&\y V_sA_n^{\out}\{f_n\} V_s^{-1} V_{s} A_{n-1}^{\out}\{f_{n-1}\}    \ldots A_1^{\out}\{f_1\}\Om\non\\
\y&=&\y V_sA_n^{\out}\{f_n\} V_s^{-1}  A_{n-1}^{\out}\{f_{n-1}\}    \ldots A_1^{\out}\{f_1\}\Om,
\eeqa
where in the last step we used the induction hypothesis. (We also used that the state $A_{n-1}^{\out}\{f_{n-1}\}    \ldots A_1^{\out}\{f_1\}\Om$ is in $D_{P^0}$,
cf. Proposition~\ref{creation-annihilation}, and the fact that due to  the Borchers theorem, $V_s$ leaves $D_{P^0}$ invariant. Consequently, the limit $A_n^{\out}\{f_n\}$ still exists in the last line above).
Using again that $V_s$ commutes with translations,  we have $V_{s}A_n^{\out}\{f_n\} V_{s}^{-1}= (A_n')^{\out}\{f_n\}$, where $A_n':=V_sA_nV_s^{-1}$ is
localized in $W_{1}$. Thus we can write
\beqa
V(s)\hat\Psi^{\out}\y&=&\y (A_n')^{\out}\{f_n\}  A_{n-1}^{\out}\{f_{n-1}\}   \ldots A_1^{\out}\{f_1\}\Om\non\\
\y&=&\y  \sum_{i=n-1}^1 A_{n-1}^{\out}\{f_{n-1}\}    \ldots [(A_n')^{\out}\{f_n\},  A_i^{\out}\{f_i\}]    \ldots A_1^{\out}\{f_1\}\Om\quad \label{terms-to-zero} \\
\y &+ & \y A_{n-1}^{\out}\{f_n\}\ldots A_1^{\out}\{f_1\}  (A_n')^{\out}\{f_n\}\Om. \label{last-term-comm}
\eeqa
Concerning (\ref{last-term-comm}), we use that $(A_n')^{\out}\{f_n\}\Om=V_{s}A_n^{\out}\{f_n\} V_{s}^{-1}\Om=A_n^{\out}\{f_n\}\Om$,
since $V_s$ preserves both the vacuum and single-particle subspace.  This vector coincides with $\hat\Psi^{\out}$ provided that
we can show 
\beqa
[A_n^{\out}\{f_n\}, A_i^{\out}\{f_i\}]=0
\eeqa
for all $i=1,2,\ldots, n-1$. This follows from  formulas~(\ref{single-particle-photon}), (\ref{c-number-commutator}) and the disjointness
of supports of $f_n, f_i$ (cf. Lemma~\ref{lemma-n-particle}).    
Thus to conclude the proof of the proposition we have to show that the terms in (\ref{terms-to-zero}) are zero. Since $A'_n$ is only wedge-local, we
 cannot use  property~(\ref{c-number-commutator}) and we need to proceed via a direct computation: For unit vectors $\Psi_1,\Psi_2\in D_{P^0}$ we write
\beqa
 & &|\lan \Psi_1 [(A_n')^{\out}\{f_n\},  A_i^{\out}\{f_i\}]\Psi_2\ran|\non\\
& &\leq  \ov{\lim}_{t\to\infty} \int dt'_n dt'_i d\om(\bn_n)d\om(\bn_i) \,  h_t(t'_n)h_t(t'_i) |f_n(\bn_n)| |f_i(\bn_i) 4t'_nt'_i\times \non\\
& &\ph{4444444444444444}\times \|[\pa_0A'_n(t'_n, t'_n\bn_n), \pa_0A_i(t'_i, t'_i\bn_i)]\| \non\\
& &\leq  \ov{\lim}_{t\to\infty} \int dt'_n dt'_i d\om(\bn_n)d\om(\bn_i) \,  h_t(t'_n)h_t(t'_i) |f_n(\bn_n)| |f_i(\bn_i) 4t'_nt'_i\times \non\\
& &\ph{4444444444444444}\times \|[\pa_0A'_n, \pa_0A_i(t'_i-t'_n, t'_i\bn_i -t'_n\bn_n  )]\| \non\\
& &\leq  \ov{\lim}_{t\to\infty} \int dt''_n dt''_i d\om(\bn_n)d\om(\bn_i) \, {\bcc t^{2\eps}} h(t''_n)h(t''_i) |f_n(\bn_n)| |f_i(\bn_i) |4(t+t^{\eps}t''_i) (t+t^{\eps}t''_n) \times \non\\
& &\ph{4444444444444444}\times \|[\pa_0A'_n, \pa_0A_i(t^{\eps} (t''_i-t''_n), (t+t^{\eps}t''_i)\bn_i -(t+t^{\eps}t''_n)\bn_n  )]\|. \quad\quad
 \eeqa
Making use of Lemma~\ref{lemma-n-particle}, we conclude that the last expression is zero for sufficiently large $t$. Indeed, $t( \bn_i-\bn_n)_1<0$,
$0<\eps<1$ and $t_i'', t_n''$ are restricted to unit balls around zero. Hence $\pa_0A_i(t^{\eps} (t''_i-t''_n), (t+t^{\eps}t''_i)\bn_i -(t+t^{\eps}t''_n)\bn_n  )$
is eventually localized in the left wedge. \end{proof}
 \section{Conclusion and outlook}
\setcounter{equation}{0}

In this paper we proved the Bisognano-Wichmann property for asymptotically complete theories of massless particles
with integer helicities.  The  argument starts from verifying this property at the single particle level. 
For this purpose, the single-particle subspace $\nhil^{(1)}$ is equipped with the structure of a local net of standard subspaces.
This net is covariant w.r.t.  a representation $U^{(1)}:=U |_{ \nhil^{(1)} }$ of the Poincar\'e group, which is a direct sum of two representations of opposite integer helicities, i.e., $U^{(1)}=U_{h}\oplus U_{-h}$ {\bc (or a multiple thereof)}.
 Then we verified the modularity condition for the B-W property $U^{(1)}(r_W)\in U^{(1)}(G_W)''$, where $G_W$ is  the subgroup of Poincar\'e transformations 
preserving a wedge $W$ and $r_W$ maps $W$ to the opposite wedge. This technically demanding step
was accomplished by showing that $U_h$ and $U_{-h}$ have the same restriction to the group $\ti G_W$ generated by $G_W$ and $r_W$. 
Hence $U^{(1)}{\bc |_{\ti G_W}}= {\bc U_h|_{\ti G_W}}\otimes 1$ and the modularity condition could be concluded from earlier results \cite{Mor}. Given the
B-W property at the single-particle level, the B-W property of the full theory was verified using scattering theory and asymptotic completeness.
In this part we adapted the arguments of Mund \cite{Mu01} to the massless case.

A natural question for future research is a generalization of our arguments to   particles with half-integer helicities.  The obstruction comes from the fact that in this case $U_{h}$ and $U_{-h}$  are unitarily equivalent when restricted to $G_3$ but have disjoint restrictions to $\tilde G_3$
 and one cannot apply Proposition \ref{prop:P} (iii).
Another  future research direction is to  relax the assumption of asymptotic completeness. We remark that in the vacuum
sector of QED asymptotic completeness of photons can be assumed only below a certain energy threshold, excluding the electron-positron
pair production. It is an interesting question how to prove the B-W property in this physically relevant situation. 
In this context we remark  that  our results give the B-W property of the net of asymptotic photon fields of QED, defined at the end of Sect.~\ref{massless-section}. This net plays an important
role in the study of infrared problems  (see e.g. \cite{Bu77, BD84, AD17}) and we hope that our results will  also find applications
there.

\appendix

\section{Direct integral representations}
\setcounter{equation}{0}

We suggest \cite{Tak, Dix, MT} as further references for basic definitions.

Given a field of Hilbert spaces $\gamma\mapsto \H(\gamma)$ on a standard measure space $(\Gamma,\mu)$, the direct integral Hilbert space $\int_\Gamma^\oplus \H(\gamma) d\mu(\gamma)$ is defined if the field is $\mu$-measurable.
 This definition requires and depends on the choice of a linear $\gamma$-pointwise dense subspace $\mathcal{S}$  of the topological product $\Pi_{\gamma\in \Gamma}\H(\gamma)$ 
   which selects a family of $\mu$-measurable vector fields. 
 (cf.\! \cite[Part II, Sect. II.1.3, Definition 1]{Dix}). Note that given a sequence of measurable vector field $\xi_n$ $\mu$-a.e.\ pointwise converging to $\xi$, namely $\|(\xi_n)(\gamma)-(\xi)(\gamma)\|_{\bc \ga} \rightarrow0$ for $\mu$-a.e.\ $\gamma\in \Gamma$, we obtain that $\xi$ is a $\mu$-measurable vector field. We also recall that a vector field of bounded operators $ \ga\mapsto T(\gamma)\in \B(\H(\ga))$ is $\mu$-measurable if for any $\mu$-measurable field $\ga\mapsto \xi(\gamma)\in\H(\gamma)$ we have that $\ga\mapsto T(\gamma)\xi(\gamma)\in \H(\gamma)$ is $\mu$-measurable.  In this case we define
 $$T\xi:=\int^\oplus_\Gamma T(\gamma)\xi(\gamma)d\mu(\gamma)\in\int_\Gamma^\oplus \H(\gamma)d\mu(\gamma).$$
 We write this operator as $T=\int_\Gamma^\oplus T(\gamma) d\mu(\gamma)$ and $T$ is called the direct integral of ${\bc \ga\mapsto} T(\gamma).$  Furthermore, we have that $\gamma\mapsto\|T(\gamma)\|_{\bc \ga}$ is measurable and $\|T\|=\sup_{\gamma \in \Gamma}\|T(\gamma)\|_\gamma$. The operators of this form are said to be \textit{decomposable}. If $T(\gamma)$ is a scalar for any $\gamma{\bc \in \Ga}$, then $T$ is said to be a \textit{diagonal }operator. The algebra generated by the diagonal operators is called the \textit{diagonal }algebra. Note that any operator $T\in\B(\H)$ is decomposable iff $T$ commutes with the diagonal algebra (cf.  \cite{Tak} IV.8, 
Corollary~8.16).
 A field of von Neumann algebras $\gamma \mapsto \M(\gamma)\subset  \B(\H(\gamma))$ is said to be measurable if there exists a countable family $\{x_n(\gamma)\}_{n\in\NN}$ of measurable fields of operators s.t. $\M(\gamma)$ is generated by $x_n(\gamma)$ for (a.e.) $\gamma\in\Gamma$. Then it is possible to define $\M=\int_\Gamma^\oplus \M(\gamma) d\mu(\gamma)$. Given a $C^*$-algebra $\A$, a field of continuous representations $\gamma\mapsto\pi(\gamma)$ is said to be measurable if for any $A\in\A$, the operator field $\gamma\mapsto \pi(\gamma)(A)$ is measurable. Then one can define the representation $\int_\Gamma^\oplus \pi(\gamma){d\mu(\gamma)}$ acting on $\int_\Gamma^\oplus\H(\gamma)d\mu(\gamma)$. 

\section{Proof of Proposition \ref{prop:PP} }{\label{proposition}  }
\setcounter{equation}{0}

For fundamental concepts on direct integral of representation see  Appendix A. 

The function of the translation generators $P_1^2+P_2^2$ is a Casimir operator for $G_3$ and decomposes {\bc according} to $\mu$, i.e., $P_1^2+P_2^2=\int^{\oplus}_{\real{^+}}r^2\cdot 1 d\mu(r)$ and $P_1^2+P_2^2$ is affiliated to $U(G_3)''$.
By definition, bounded functions of $r={\pc\sqrt{P_1^2+P_2^2}}$ generate the diagonal algebra 
$\mathcal{D}$. Thus $\mathcal{D}$ is contained {\pc in the center of} $U(G_3)''$, hence any operator in $U(G_3)'$ is decomposable since  it commutes with $\mathcal{D}$.

We now pick $T\in U(G_3)'$  then $T$ is a decomposable operator, namely, 
\beqa
T =\int_{\real^+}^\oplus \,  T(r) d\mu(r).
\eeqa

Assume that there exists a positive measure set $I$ s.t. $T(r)$  is not in $U_r(G_3)'$ for $r\in I$. 
Let $\chi$ be a characteristic function of $I$.  Then $ \int_{\real^+}^\oplus \, T(r) \chi(r)d\mu(r)$ is not in the commutant of $U(G_3)$, which is a contradiction. We conclude that
\beqa
U(G_3)'=\int_{\real{^+}}^\oplus U_r(G_3)' d\mu(r)
 \eeqa
and thus 
\beqa \label{eq:dis} 
U(G_3)''=\int_{{\real^+}}^\oplus  U_r(G_3)'' d\mu(r)
\eeqa 
by  \cite[Theorem 8.18]{Tak}. 
{\bc Now we recall that   $U_r$ satisfies \eqref{eq:cond} by Corollary \ref{Mod-Corollary}. Since, by assumption, $U(r_1(\pi))=\int_{{\real^+}}^\oplus  U_r(r_1(\pi)) d\mu(r)$, we obtain from \eqref{eq:dis} that $U$ satisfies \eqref{eq:cond}.}

\section{Regularity of the actions  on $\RR^{{\bc 1+3}}$}
 \label{App-C}
\setcounter{equation}{0}

\begin{definition}\label{regular-definition} Let $G$ be a locally compact, $\sigma$-compact, group and $N$ be a normal abelian subgroup, then the (dual-)action of $G$ on $\hat N$ is regular if
\begin{enumerate}
\item[R1.] the orbit space is countably separated, namely there exists a countable family $\{E_n\}_{n\in\NN}$ of $G$-invariant Borel sets in $\hat N$ s.t.  each orbit in $\hat N$ is the intersection of all $E_n$ that contain it,
\item[R2.] each orbit is relatively open in its closure.
\end{enumerate}
\end{definition}
{\pc Here we check that the action of $G_3^0$  and $\tilde G_3^0$ on $\RR^{{\bc 1+3}}$ is regular according to the previous definition.}

R1.
Let $\textbf{o}$ be a $G_3^0$ or a $\tilde G_3^0$ orbit on $\RR^{1+3}$. Then $\textbf{o}$ can be obtained by intersection of the subsets of the following \textit{countable} family of Borel subsets containing $\textbf{o}$. {\bc For} $a_1,a_2,b_1,b_2\in\mathbb Q$, $c,d\in \mathbb Q^{\geq0}$, consider the following sets:
\begin{itemize}
\item $A_{a_1,b_1}=\{p: a_1\leq p^2\leq b_1\}$ and $A_{a_1,b_1}^\pm=A_{a_1,b_1}\cap \{p:\pm p_0>0\}$ if $a_1,b_1\geq0$,
\item $E_{c,d}=\{p:c\leq p_1^2+p_2^2\leq d\}$,
\item $F_{a_2,b_2}=\{p:a_2\leq p_0^2-p_3^2\leq b_2\}$,
\item $F_{a_2,b_2}^\pm=\{p:a_2\leq p_0^2-p_3^2\leq b_2, \pm p_3>0\}$ if $a_2, b_2<0$,
\item $K^{\pm ,\pm}=\{p=(p_0,0,0,p_3): p_0=\pm p_3, \pm p_0>0\}$,
\item $\tilde K^{\pm}=\{p=(p_0,0,0,p_3): p_0=\pm {\bc |p_3|}, \pm p_0>0\}$,
\end{itemize}
{\bc where in the case of $K^{\pm ,\pm}$ the two signs are uncorrelated.}
We shall denote with $\A, \E, \F, \F^{\pm},\K^{\pm\pm}, \tilde \K^\pm$  the countable families of the above sets with the corresponding letters. We also define the sets:

\begin{itemize}

\item $O=\{p:p=0\}$,
\item $Z_{a_1,b_1,\pm,\pm}=\{p:a_1\leq p^2\leq b_1<0, p_0^2-p_3^2=0, \pm p_3>0, \pm p_0> 0\}$, 
\item $\tilde Z_{a_1,b_1,\pm}=\{p:a_1\leq p^2\leq b_1<0, p_0^2-p_3^2=0, \pm p_0> 0\}$, 
\item $X_{a_1,b_1}=\{p:a_1\leq p^2\leq b_1<0, p_0=p_3=0\}$.
\end{itemize}

In the following we shall say that  a family of sets {\it selects }an orbit $\textbf{o}$ if  the latter is the intersection of all the set of the family containing $\textbf{o}$. The set selecting an orbit of a group will be invariant  under the group action. All the {\bc families} we will consider will be countable as well as their union.

Firstly,  any {\bc orbit of} $G_3^0$ or $\tilde G_3^0$ is contained in a Lorentz orbit in $\RR^{1+3}$. The family in $\A$ selects the Lorentz orbits. The orbit in the origin is selected by $O$.  Now  $G_3^0$ and $\tilde G_3^0$ share the same {\bc massive orbits, contained in $p^2=m^2$, $m>0$,} that can be selected by considering $\A$ and $\E$ families. 
Now consider a massless orbit in the forward lightcone. If for every $p\in \textbf{o}$, $p_1^2+p_2^2>0$ then it is  both a $G_3^0$ and $\tilde G_3^0$ orbit {\bc (cf. Remark~\ref{rmk:orb})} and can be selected by the families $\A$ and $\E$. If  $p_1^2+p_2^2=0$ then the two $G_3^0$ orbits $\{p:0<p_0=p_3\}$ and
$ \{p:0<p_0=-p_3\}$ are selected by ${\bc  K}^{\pm,{\bc +}}$.  If  $p_1^2+p_2^2=0$ then the $\tilde G_3^0$ orbit is selected by ${\bc \tilde K^+}$. 
{\bc We argue analogously for the backward lightcone, referring to sets $K^{\pm,-}, \ti K^-$}.
Now consider imaginary mass {\bc orbits, contained in} $p^2=-m^2$ and assume that ${\bc p_1^2+p_2^2=r^2}$. We have three cases:
\begin{itemize}
\item $r^2<-m^2$. In this case we have two branches of the hyperboloid $p_0^2-p_3^2=m^2+r^2<0$ that become two $G_3^0$ orbits and a unique  $\tilde G_3^0$ orbit. The $G_3^0$ and $\tilde G_3^0$ {\bc orbits} are  selected by $\A,\E,\F^\pm$ and $\A,\E,\F$, respectively.
\item $r^2=-m^2$. $G_3^0$ orbits are selected by $Z_{a_1,a_2,\pm,\pm}$ or  $X_{a_1,a_2}$. $\tilde G_3^0$ orbits are selected by $\tilde Z_{a_1,a_2,\pm}$.
\item $r^2>-m^2$. $G_3$ and $\tilde G_3$ share the same orbits selected by $\A, \E, \F\cap U^\pm$, where $U^\pm= \{p:\pm p_0\geq0\}$.
\end{itemize}

R2. trivially holds.

\end{document}